\documentclass[a4paper,12pt]{amsart} 
\usepackage{amsmath,amssymb,amsthm}
\usepackage{amscd}
\usepackage{mathrsfs}
\usepackage[usenames,dvipsnames]{color}
\usepackage{pstricks}
\usepackage{graphicx}
\usepackage{extarrows}
\usepackage{tabularx}
\usepackage{slashed}
\usepackage{tikz}
\usepackage{xparse}
\usepackage{enumitem}
\usepackage[normalem]{ulem}

\usepackage{ifpdf}
\ifpdf
  \usepackage[colorlinks,final,backref=page,hyperindex]{hyperref}
\else
  \usepackage[colorlinks,final,backref=page,hyperindex,hypertex]{hyperref}
\fi

\usepackage{xspace}

\usetikzlibrary{arrows,matrix}
\usetikzlibrary{decorations.pathmorphing}

\setlength{\textheight}{230mm}
\setlength{\headheight}{1mm}
\setlength{\headsep}{5mm}
\setlength{\textwidth}{160mm}
\setlength{\topmargin}{1mm}
\setlength{\leftmargin}{1mm}
\setlength{\footskip}{8mm}
\setlength{\oddsidemargin}{1mm}
\setlength{\evensidemargin}{1mm}

\def\XXint#1#2#3{{\setbox0=\hbox{$#1{#2#3}{\int}$ }
\vcenter{\hbox{$#2#3$ }}\kern-.58\wd0}}

\def\XXsum#1#2#3{{\setbox0=\hbox{$#1{#2#3}{\sum}$ }
\vcenter{\hbox{$#2#3$ }}\kern-.51\wd0}}

\begin{document}

\newcommand\cutoffint{\mathop{-\hskip -4mm\int}\limits}
\newcommand\cutoffsum{\mathop{-\hskip -4mm\sum}\limits}
\newcommand\cutoffzeta{-\hskip -1.7mm\zeta} 
\newcommand{\goth}[1]{\ensuremath{\mathfrak{#1}}}
\newcommand{\bbox}{\normalsize {}%
        \nolinebreak \hfill $\blacksquare$ \medbreak \par}
\newcommand{\simall}[2]{\underset{#1\rightarrow#2}{\sim}}

\newtheorem{theorem}{Theorem}[section]
\newtheorem{lem}[theorem]{Lemma}
\newtheorem{coro}[theorem]{Corollary}
\newtheorem{problem}[theorem]{Problem}
\newtheorem{conjecture}[theorem]{Conjecture}
\newtheorem{prop}[theorem]{Proposition}
\newtheorem{propdefn}[theorem]{Proposition-Definition}
\newtheorem{lemdefn}[theorem]{Lemma-Definition}
\theoremstyle{definition}
\newtheorem{defn}[theorem]{Definition}
\newtheorem{rk}[theorem]{Remark}
\newtheorem{ex}[theorem]{Example}
\newtheorem{coex}[theorem]{Counterexample}
\newtheorem{algorithm}[theorem]{Algorithm}
\newtheorem{convention}[theorem]{Convention}
\newtheorem{principle}[theorem]{Principle}

\renewcommand{\theenumi}{{\it\roman{enumi}}}
\renewcommand{\theenumii}{{\alph{enumii}}}

\newenvironment{thmenumerate}
{\leavevmode\begin{enumerate}[leftmargin=1.5em]}{\end{enumerate}}

\newcommand{\nc}{\newcommand}
\newcommand{\delete}[1]{}
\newcommand{\aside}[1]{\delete{#1}}

\nc{\mlabel}[1]{\label{#1}}  
\nc{\mcite}[1]{\cite{#1}}  
\nc{\mref}[1]{\ref{#1}}  
\nc{\mbibitem}[1]{\bibitem{#1}} 

\delete{
\nc{\mlabel}[1]{\label{#1}  
{\hfill \hspace{1cm}{\bf{{\ }\hfill(#1)}}}}
\nc{\mcite}[1]{\cite{#1}{{\bf{{\ }(#1)}}}}  
\nc{\mref}[1]{\ref{#1}{{\bf{{\ }(#1)}}}}  
\nc{\mbibitem}[1]{\bibitem[\bf #1]{#1}} 
}

\newcommand{\bottop}{\top\hspace{-0.8em}\bot}
\nc{\mtop}{\top\hspace{-1mm}}
\nc{\tforall}{\text{ for all }}
\nc{\bfcf}{{\calc}}
\newcommand{\R}{\mathbb{R}}
\newcommand{\FC}{\mathbb{C}}
\newcommand{\K}{\mathbb{K}}
\newcommand{\Z}{\mathbb{Z}}
\nc{\PP}{\mathbb{P}}
\newcommand{\Q}{\mathbb{Q}}
\newcommand{\N}{\mathbb{N}}
\newcommand{\F}{\mathbb{F}}
\newcommand{\T}{\mathbb{T}}
\newcommand{\G}{\mathbb{G}}
\newcommand{\C}{\mathbb{C}}

\newcommand{\bfc}{\calc}

\newcommand {\frakb}{{\mathfrak {b}}}
\newcommand {\frakc}{{\mathfrak {c}}}
\newcommand {\frakd}{{\mathfrak {d}}}
\newcommand {\fraku}{{\mathfrak {u}}}
\newcommand {\fraks}{{\mathfrak {s}}}

\newcommand {\cala}{{\mathcal {A}}}
\newcommand {\calb}{\mathcal {B}}
\newcommand {\calc}{{\mathcal {C}}}
\newcommand {\cald}{{\mathcal {D}}}
\newcommand {\cale}{{\mathcal {E}}}
\newcommand {\calf}{{\mathcal {F}}}
\newcommand {\calg}{{\mathcal {G}}}
\newcommand {\calh}{\mathcal{H}}
\newcommand {\cali}{\mathcal{I}}
\newcommand {\call}{{\mathcal {L}}}
\newcommand {\calm}{{\mathcal {M}}}
\newcommand {\calp}{{\mathcal {P}}}
\newcommand {\calv}{{\mathcal {V}}}

\newcommand{\conefamilyc}{{\mathfrak{C}}}
\newcommand{\conefamilyd}{{\mathfrak{D}}}

\newcommand{\Hol}{\text{Hol}}
\newcommand{\Mer}{\text{Mer}}
\newcommand{\lin}{\text{lin}}
\nc{\Id}{\mathrm{Id}}
\nc{\ot}{\otimes}
\nc{\bt}{\boxtimes}
\nc{\id}{\mathrm{Id}}
\nc{\Hom}{\mathrm{Hom}}
\nc{\im}{{\mathfrak Im}}

\newcommand{\tddeux}[2]{\begin{picture}(12,5)(0,-1)
\put(3,0){\circle*{2}}
\put(3,0){\line(0,1){5}}
\put(3,5){\circle*{2}}
\put(3,-2){\tiny #1}
\put(3,4){\tiny #2}
\end{picture}}

\newcommand{\tdtroisun}[3]{\begin{picture}(20,12)(-5,-1)
\put(3,0){\circle*{2}}
\put(-0.65,0){$\vee$}
\put(6,7){\circle*{2}}
\put(0,7){\circle*{2}}
\put(5,-2){\tiny #1}
\put(6,5){\tiny #2}
\put(-5,8){\tiny #3}
\end{picture}}

%
%

\nc{\mge}{_{bu}\!\!\!\!{}}

\nc{\vep}{\varepsilon}

\def \e {{\epsilon}}
\nc{\syd}[1]{}
\newcommand{\cy}[1]{{\color{cyan}  #1}}
\newcommand{\sy}[1]{{\color{purple}  #1}}
\newcommand{\zb }[1]{{\color{blue}  #1}}
\newcommand{\zbd}[1]{}
\newcommand{\li}[1]{{\color{red} #1}}
\newcommand{\lit}[2]{\sout{\color{red}{#1}}{\color{red} #2}}
\newcommand{\lir}[1]{{\it\color{red} (Li: #1)}}

\nc{\prt}{P-}
\nc{\Prt}{P-}
\newcommand{\loc}{locality\xspace}
\newcommand{\Loc}{Locality\xspace}
\nc{\orth}{orthogonal\xspace}
\nc{\tloc}{L-}
\newcommand{\lset}{{\bf \loc SET }}
\newcommand{\set}{{\bf SET }}
\newcommand{\xat}{{X^{_\top 2}}}
\newcommand{\xbt}{{X^{_\top 3}}}
\newcommand{\xct}{{X^{_\top 4}}}

\newcommand{\gat}{{G^{_\top 2}}}
\newcommand{\gbt}{{G^{_\top 3}}}
\newcommand{\gct}{{G^{_\top 4}}}
\nc{\gnt}{{G^{_\top n}}}

\newcommand{\htwot}{{\calh ^{\ot_\top 2}}}
\newcommand{\hbt}{{calh ^{\ot_\top 3}}}
\newcommand{\hct}{{\calh ^{\ot_\top 4}}}

\title{An algebraic formulation of the locality principle in renormalisation}

\author{Pierre Clavier}
\address{Institute of Mathematics,
University of Potsdam,
D-14476 Potsdam, Germany}
\email{clavier@math.uni-potsdam.de}

\author{Li Guo}
\address{Department of Mathematics and Computer Science,
         Rutgers University,
         Newark, NJ 07102, USA}
\email{liguo@rutgers.edu}

\author{Sylvie Paycha}
\address{Institute of Mathematics,
University of Potsdam,
D-14469 Potsdam, Germany\\ On leave from the Universit\'e Clermont-Auvergne\\
Clermont-Ferrand, France}
\email{paycha@math.uni-potsdam.de}

\author{Bin Zhang}
\address{School of Mathematics, Yangtze Center of Mathematics,
Sichuan University, Chengdu, 610064, China}
\email{zhangbin@scu.edu.cn}

\date{\today}

\begin{abstract}  We study the mathematical structure underlying the concept of locality which lies at the heart of classical and   quantum field theory, and  develop a machinery used to  preserve locality  during the renormalisation procedure.
Viewing renormalisation in the framework of Connes and Kreimer  as the  algebraic Birkhoff factorisation of characters on a Hopf algebra with values in a Rota-Baxter algebra, we build  \loc variants of these algebraic structures, leading to a \loc variant of the algebraic Birkhoff  factorisation. This provides an algebraic formulation  of  the conservation of  locality   while renormalising.
 {As an application  in the context of the Euler-Maclaurin formula on cones, we renormalise the exponential generating function which sums over the lattice points in convex cones.}
For a suitable multivariate regularisation, renormalisation from the algebraic Birkhoff  factorisation  amounts to composition by a projection onto holomorphic multivariate functions.
\end{abstract}

\subjclass[2010]{08A55,16T99,81T15, 32A20, 	52B20 	}

\keywords{locality, renormalisation, algebraic Birkhoff factorisation, partial algebra, Hopf algebra, Rota-Baxter algebra, multivariate meromorphic functions, lattice cones}

\maketitle

{ \ }
\vspace{-2cm}

\tableofcontents

\allowdisplaybreaks

\section{Introduction}
\subsection{Locality in quantum field theory}
Locality is a widespread notion in mathematics and physics. In physics, the principle of locality states that an object is only directly influenced by its immediate surroundings. A theory which includes the principle of locality
is said to be a ``local theory". Various notions of locality are also used in  {analysis (local operators, local Dirichlet forms)},  geometry (geometric localisation, {locality in index theory}), algebra (localised rings) and number theory (local fields).

In classical field theory, for a classical action  ${\mathcal A}(f)={\mathcal B}(f, f)$ given by a bilinear form $ {{\mathcal B}}: C_c^\infty(U, \C^k)\times C_c^\infty(U, \C^k) \to \C$ on an open subset $U\subset \R^n$, the  locality principle  translates to

\begin{quote}  for any $f_1, f_2\in C_c^\infty(U, \C^k)$ if
  ${\text{ Supp}}(  f_1)\cap  {\text{ Supp}}(  f_2)=\emptyset$, then  ${{\mathcal B}}(f_1, f_2)=0$.
\end{quote}
We  interpret the binary relation  {$\top$ defined by}
\begin{equation}f_1\top f_2\Longleftrightarrow {\text{ Supp}}(  f_1)\cap  {\text{ Supp}}(  f_2)=\emptyset
  \mlabel{eq:topf}
\end{equation}
  on   pairs $\{f_1, f_2\}$ as an independence relation    on $C_c^\infty(U, \C^k)$.

In quantum field theory, the locality  principle governs the construction    of consistent subtraction (of divergences) algorithms that preserve locality during the renormalisation process. Subtracting divergences may be interpreted as resulting from the addition to the effective action of new properly chosen terms (known as counterterms) that are local polynomials in the fields and their derivatives. A systematic algorithm to subtract divergent momentum space integrals while preserving the fundamental postulates of relativistic quantum field theory including locality was proposed by    Bogoliubov, Parasiuk, Hepp and Zimmermann (abbreviated BPHZ renormalisation, and based on the so-called forest formula) \cite{BP,H,Z}. More recently,  Connes and   Kreimer \cite{CK} gave an interpretation of this algorithm by means of a coproduct which enables to build -- using  dimensional regularisation -- a renormalised map via its algebraic Birkhoff factorisation, regarded as an algebra homomorphism from the Hopf algebra of Feynman graphs to the Rota-Baxter algebra of meromorphic functions in one variable. Alternatively, following Speer \cite{S}, one can use analytic regularisation, which  gives rise to a map on graphs with values in multivariate meromorphic functions. \footnote{In the Epstein-Glaser formalism,  an analytic regularisation \`a la Speer yields   Feynman amplitudes  obeying  amongst other axioms, a factorisation condition reflecting the locality principle \cite[Theorem 10.1]{D}.} In that case, locality is reflected in the fact that this map preserves locality (what we call a \loc map in Definition \ref{defn:localmap}).

Separation of supports, which in QFT reflects independence of events,  also arises in the early algebraic study of locality, in terms of locality ideals, initiated by the work of H.-J.~Borchers~\mcite{BoH1,Yn}. Here a locality ideal is defined to be the two-sided ideal
generated by commutators of test functions with space-like separated supports. Its importance comes from the fact that quantum fields satisfying the requirement of local commutativity can be regarded as Hilbert space representations of the tensor algebra annihilating the locality ideal.
See~\mcite{BoR,BDF} for the recent progresses initiated by R.~E. Borcherds.

\subsection{Locality in algebraic Birkhoff factorisation}

In this paper we take an algebraic approach to investigate how locality is preserved in the renormalisation process, and choose to work in the framework of algebraic Birkhoff factorisation \`a la Connes and Kreimer~\mcite{CK}.
Our starting point is to view locality  as a symmetric binary relation comprising all pairs of independent events as in Eq.~(\mref{eq:topf}). Our main task is to explore the structures which are compatible with and preserve the locality, thus providing a mathematical formulation of the locality principle.

To make our point more precise, let us briefly recall the algebraic Birkhoff factorisation in the approach of Connes and Kreimer.

\begin{theorem} {\bf (Algebraic Birkhoff factorisation, Hopf algebra version)}
Let $H$ be a connected Hopf algebra and let $(A,P)$ be a commutative Rota-Baxter algebra of weight $-1$ with an idempotent Rota-Baxter operator $P$. Any algebra homomorphism $\phi: H\to A$ factors uniquely as the convolution product
\begin{equation}
\phi =\phi_-^{\star (-1)}\star \phi_+
 \mlabel{eq:abfhopf}
 \end{equation}
of algebra homomorphisms $\phi_-:H\to K+P(A)$ and $\phi_+:H\to K+(\Id-P)(A)$.
\mlabel{thm:abfhopf}
\end{theorem}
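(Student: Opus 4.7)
The plan is to construct $\phi_\pm$ recursively along the connected filtration of $H$, verify the factorisation identity directly from the definitions, establish multiplicativity of $\phi_-$ via the Rota--Baxter identity, and deduce uniqueness from the idempotency of $P$. Writing $\overline H = \ker\varepsilon$ and $\widetilde\Delta(x) = \Delta(x) - x\otimes 1 - 1\otimes x$ for $x \in \overline H$, connectedness of $H$ ensures that every element lies in a finite filtration step and that the iterated reduced coproduct eventually vanishes, so recursive definitions decreasing the filtration degree are well founded. Setting $\phi_\pm(1) = 1$, I define, for $x \in \overline H$ with $\widetilde\Delta(x) = \sum_{(x)} x'\otimes x''$,
\begin{equation*}
\phi_-(x) = -P\Bigl(\phi(x) + \sum_{(x)} \phi_-(x')\,\phi(x'')\Bigr),\qquad
\phi_+(x) = (\Id - P)\Bigl(\phi(x) + \sum_{(x)} \phi_-(x')\,\phi(x'')\Bigr),
\end{equation*}
so that $\phi_-(\overline H) \subset P(A)$ and $\phi_+(\overline H) \subset (\Id-P)(A)$. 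A direct Sweedler calculation gives $(\phi_- \star \phi)(x) = \phi_-(x) + \phi(x) + \sum_{(x)} \phi_-(x')\phi(x'')$, and summing $P$ and $\Id - P$ of the bracketed term yields $\phi_+(x) - \phi_-(x)$; hence $\phi_- \star \phi = \phi_+$, and convolution inversion of $\phi_-$ gives the desired factorisation $\phi = \phi_-^{\star(-1)} \star \phi_+$.

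The main technical step is to verify that $\phi_-$ is an algebra homomorphism; the same then follows for $\phi_+ = \phi_- \star \phi$, since the convolution of algebra maps into a commutative target algebra is again an algebra map. I proceed by induction on the total filtration degree of $x \otimes y \in \overline H \otimes \overline H$, expanding $\widetilde\Delta(xy)$ via the bialgebra compatibility $\Delta(xy) = \Delta(x)\Delta(y)$ and invoking the induction hypothesis together with the multiplicativity of $\phi$. After regrouping, the expression for $\phi_-(xy)$ becomes a combination of $P$ applied to products of values of $\phi_-$ and $\phi$, and it matches $\phi_-(x)\phi_-(y)$ thanks to the weight $-1$ Rota--Baxter identity
\begin{equation*}
P(a)\,P(b) = P\bigl(a\,P(b)\bigr) + P\bigl(P(a)\,b\bigr) - P(a\,b).
\end{equation*}
This identity, together with idempotency of $P$, also ensures that $P(A)$ and $(\Id - P)(A)$ are subalgebras of $A$, so the recursion is internally consistent with the prescribed ranges.

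For uniqueness, any second factorisation $\phi = \psi_-^{\star(-1)} \star \psi_+$ yields $\psi_- \star \phi_-^{\star(-1)} = \psi_+ \star \phi_+^{\star(-1)}$, the left-hand side being an algebra map into $K + P(A)$ and the right-hand side into $K + (\Id - P)(A)$; since $P(A) \cap (\Id - P)(A) = 0$ by idempotency of $P$, both sides coincide with the unit $u\circ\varepsilon$, whence $\psi_\pm = \phi_\pm$. The hardest step is the multiplicativity of $\phi_-$: the Sweedler bookkeeping arising from $\widetilde\Delta(xy)$ has to align precisely with the weight $-1$ Rota--Baxter identity, and this is the one place where connectedness of $H$, commutativity of $A$, idempotency of $P$, and the weight $-1$ axiom are all invoked simultaneously. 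Everything else (recursive existence, the factorisation identity, uniqueness) is a straightforward consequence of bookkeeping in the convolution algebra.
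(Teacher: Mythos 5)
Your proof is correct and follows essentially the same route as the paper: Theorem~\mref{thm:abfhopf} is quoted there as the classical Connes--Kreimer result, and its proof appears in the paper only through the locality generalisations (Theorems~\mref{thm:abflcoalg} and~\mref{thm:abflhopf}), whose arguments are exactly your Bogoliubov-type recursion $\phi_-(x)=-P\bigl(\phi(x)+\sum_{(x)}\phi_-(x')\phi(x'')\bigr)$, the identity $\phi_+=\phi_-\star\phi$, and multiplicativity of $\phi_\pm$ via the weight $-1$ Rota--Baxter identity (there delegated to \cite[Theorem~2.4.3]{G2}). No substantive differences or gaps to report.
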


As an instance of physics applications, $H$ is the Connes-Kreimer Hopf algebra of Feynman diagrams, $A$ is the Rota-Baxter algebra $\C[\vep^{-1},\vep]]$ of Laurent series and $\phi$ is the regularisation map sending a Feynman diagram to the (dimensional) regularisation of the corresponding Feynman integral with Laurent expansion in $A$.

We can reformulate the theorem as follows: for a multiplicative regularisation map $\phi$, the  {renormalised map} $\phi_+$ is also multiplicative. Thus renormalisation preserves multiplicativity, a property which is the driving thread underlying the Hopf
algebra method introduced by Connes and Kreimer. Also, it is essential in the applications of the algebraic Birkhoff factorisation in mathematics, specifically when
renormalising multiple zeta values~\mcite{EMS,GZ,MP} while preserving their shuffle product and quasi-shuffle product.

In a recent study~\mcite{GPZ2} of the renormalisation of conical zeta values and Euler-Maclaurin formula on lattice cones, the algebraic Birkhoff factorisation was generalised to weaken the Hopf algebra condition of $H$ to a connected coalgebra as well as the Rota-Baxter algebra condition of $A$ to allow for an algebra with a decomposition $A=A_1\oplus A_2$ where only $A_1$ is a required to be a subalgebra of $A$.

\begin{theorem} {\bf (Algebraic Birkhoff factorisation, coalgebra version)} \cite[Theorem~4.4]{GPZ2}
Let $H$ be a connected coalgebra with coaugmentation $H_0=K\,J$ and let $A$ be a commutative (unital) algebra with an idempotent linear operator $P$ on $A$ such that $\ker P$ is a subalgebra of $A$. Any linear map $\phi: H\to A$ with $\phi(J)=1_A$ factors uniquely as the convolution product
\begin{equation}
\phi =\phi_-^{\star (-1)}\star \phi_+
 \mlabel{eq:abfcoalg}
 \end{equation}
of linear maps $\phi_-:H\to K+P(A)$ and $\phi_+:H\to K+(\Id-P)(A)$ with $\phi_\pm(J)=1_A$.
\mlabel{thm:abfcoalg}
\end{theorem}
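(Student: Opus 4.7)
The plan is to proceed by induction on the filtration of the connected coalgebra $H$, with $H_0 = K\cdot J$ and the reduced coproduct $\widetilde\Delta(x) := \Delta(x) - x\otimes J - J\otimes x$ landing in strictly lower filtration degrees for $x$ in positive degree. Because $\phi_\pm(J) = 1_A$, the convolution inverse $\phi_-^{\star(-1)}$ is itself well-defined degree by degree, so the target identity $\phi = \phi_-^{\star(-1)}\star \phi_+$ is equivalent to $\phi_+ = \phi_- \star \phi$, which is what I will construct.

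For the base case, set $\phi_\pm(J) := 1_A$, which lies in both $K + P(A)$ and $K + (\Id - P)(A)$, and observe that $(\phi_-\star\phi)(J) = 1_A = \phi_+(J)$. Now assume $\phi_\pm$ are constructed on elements of degree $< n$ and consistent with the factorisation there. For $x\in H_n$ with $n\ge 1$, expand
\[
(\phi_- \star \phi)(x) = \phi_-(x) + \phi(x) + \sum_{(x)} \phi_-(x')\,\phi(x'') =: \phi_-(x) + \bar\phi(x),
\]
where $\widetilde\Delta(x) = \sum_{(x)} x'\otimes x''$ and $\bar\phi(x)\in A$ is determined by the induction hypothesis since $x',x''$ have strictly smaller degree. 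Exploiting the idempotency of $P$, define
\[
\phi_-(x) := -P\bigl(\bar\phi(x)\bigr)\in P(A), \qquad \phi_+(x) := (\Id - P)\bigl(\bar\phi(x)\bigr)\in (\Id-P)(A).
\]
Then $\phi_+(x) - \phi_-(x) = \bar\phi(x)$ holds identically, so the desired factorisation propagates to $H_n$ while respecting the prescribed target spaces.

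For uniqueness, suppose $\phi'_\pm$ yields a second such factorisation and assume inductively that $\phi_\pm = \phi'_\pm$ in degree $< n$. Then the cross-term in $\bar\phi(x)$ is the same for both decompositions, so subtracting $\phi_+(x) = \phi_-(x) + \bar\phi(x)$ from its primed analogue gives $\phi_+'(x) - \phi_+(x) = \phi_-'(x) - \phi_-(x)$. The left-hand side lies in $(\Id-P)(A)$ and the right-hand side in $P(A)$; since $A = P(A) \oplus (\Id-P)(A)$ by idempotency of $P$, both differences vanish.

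The main delicacy is bookkeeping: one must verify that for $x$ of positive degree the images $\phi_\pm(x)$ genuinely sit inside $P(A)$ and $(\Id-P)(A)$ respectively, so that the augmentation by $K$ in the targets only accommodates the value $1_A$ at $J$, which is what makes the direct-sum argument in the uniqueness step bite. Note that the assumption that $\ker P$ is a subalgebra of $A$ is not actually used in the factorisation itself; it is visible only through the decomposition $A = P(A)\oplus (\Id-P)(A)$, and it becomes essential only when one subsequently wants $\phi_+$ to inherit multiplicative (or, later in the paper, \loc-preserving) properties.
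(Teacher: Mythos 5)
Your proof is correct and takes essentially the same route as the paper: your recursion $\phi_-(x)=-P\bigl(\bar\phi(x)\bigr)$, $\phi_+(x)=(\Id-P)\bigl(\bar\phi(x)\bigr)$ with $\phi_+=\phi_-\star\phi$ is exactly the construction in Eqs.~\eqref{eq:LPhi-}--\eqref{eq:LPhi+} (with $\pi_1=P$ and the trivial locality relation), which is how the paper, deferring to \cite[Theorem~4.4]{GPZ2}, proves this statement inside the proof of Theorem~\ref{thm:abflcoalg}. Your closing observations --- that uniqueness requires reading the targets as $\phi_\pm(\ker\vep)\subseteq P(A)$ resp.\ $(\Id-P)(A)$, and that the subalgebra hypothesis on $\ker P$ is not used for the bare factorisation --- are both accurate and consistent with the paper's Basic Assumption in the locality version.
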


As our motivation and first application of this generalised algebraic Birkhoff factorisation, $H$ is taken to be the connected coalgebra $\Q \bfc$, the linear span over $\Q$ of the set $\bfc$ of lattice cones, with the transverse coproduct, $A$ is the algebra $\calm _\Q$ of multivariant meromorphic functions with linear poles and rational coefficients, equipped with the direct sum $\calm _\Q =\calm_{\Q +}\oplus  \calm_{\Q -}$  where $\calm_{\Q +}$ is the algebra of holomorphic functions and $\calm_{\Q -}$ is the space of polar germs. See Section~\mref{ss:exam} for details. The linear map $\phi$ is
 \begin{equation}
 S: \Q \bfc \to \calm _\Q\quad \text{with} \quad S(C, \Lambda _C)( z) = \sum_{n\in C^o \cap \Lambda _C} e^ {\langle n, z\rangle}
 \mlabel{eq:expsum}
 \end{equation}
which corresponds to the exponential generating function summing over the lattice points in a lattice cone $(C, \Lambda _C)$.

Then a natural locality issue is whenever $S(C, \Lambda _C)$ and $S(D, \Lambda _D)$ are orthogonal under the locality relation on $\calm_\Q$ induced from the $\Q$-Euclidean space, to ask whether $S_+(C, \Lambda _C)$ and $S_+(D, \Lambda _D)$  also orthogonal. Another version of the question is, if $(C, \Lambda _C)$ and $(D, \Lambda _D)$ are orthogonal, whether $S_+(C, \Lambda _C)$ and $S_+(D, \Lambda _D)$  are orthogonal.
We reformulate this first question   {relative to} the locality principle in renormalisation in more general terms as  follows:
\begin{problem}
{\bf (Locality Conservation Principle in Renormalisation)} Consider a connected coalgebra $H$ and a commutative algebra $A$ with a linear map $P$ as in Theorem~\mref{thm:abfcoalg}. Let $\mtop_H\subseteq H\times H$ and $\mtop_A\subseteq A\times A$
be relations on $H$ and $A$ respectively. Let $\phi: H\to A$ be a linear map  compatible with the relations in the sense that $(\phi\times \phi)(\mtop_H)\subseteq \mtop_A$. Determine the conditions under which the  {renormalised map} $\phi_+$ is also compatible with the relations.
\mlabel{pr:lpcoalg}
\end{problem}

The first main goal of this paper is to provide a solution to this problem as a consequence of the locality generalisation (Theorem~\mref{thm:abflcoalg}) of the algebraic Birkhoff factorisation in Theorem~\mref{thm:abfcoalg}.

Note that the lack of multiplication in the coalgebra $H$ means that $\phi$ and $\phi_+$ are only linear maps, leaving out the other algebraic structures. However, as we will see next, it is precisely the interaction of the binary
relations with all the existing and potential algebraic structures involved, that makes the locality principle work. This interaction of binary relations with algebraic structures leads us to partially defined binary operations, which we dubbed \loc structures, throughout the whole hierarchy of algebraic structures from \loc set up to \loc algebra and \loc coalgebra, then further to \loc Rota-Baxter algebra and \loc Hopf algebra.

\subsection{Locality and partially defined operations}

In mathematics one often encounters multiplications which are meaningful only partially even if they might be everywhere defined. This phenomenon was long known in number theory where certain functions are multiplicative only for coprime positive integers, for instance Euler's totient function $\phi(n)$ counting the number of integers modulo $n\geq 1$ which are relatively prime to $n$ and the Ramanujan tau function $\tau(n)$ in modular forms. In fact, such phenomena have become so prevalent that such a restricted multiplicative function is simply called a multiplicative function in number theory~\mcite{Ap}. Such operations with restrictions can be viewed in the general framework of partial algebras in universal algebra~\mcite{Gr} (see Footnote~\ref{ft:part}).

An example of special importance for us is that of lattice cones. Even though the product given by the Minkowski sum  is defined for any two convex cones, and can be extended to any two lattice cones, compatibilities with either the coalgebra structure or the regularisation maps $\phi:H\to A$, such as the exponential sum, can be expected only  when the cones are orthogonal. See later sections for details on this (Propositions~\mref{prop:cones} and \mref{prop:conebialg}) and other instances.

This leads to another natural question:  whether, in the absence of a fully defined multiplicativity that is preserved by renormalisation, as in the classical algebraic Birkhoff factorisation in Theorem~\mref{thm:abfhopf}, one can hope for a partially defined multiplication preserved by renormalisation. So we propose the following

\begin{problem}
{\bf (\Loc Product Conservation Principle in Renormalisation)} Consider a connected bialgebra $H$
and a unital commutative algebra $A$ with a linear map $P$ as in Theorem~\mref{thm:abfhopf}. Let $\mtop_H\subseteq H\times H$ and $\mtop_A\subseteq A\times A$
be relations on $H$ and $A$ respectively for which a partial multiplication $m_H:\mtop_H \to H$ and $m_A:\mtop_A\to A$ are defined. Let $\phi: H\to A$ be partially multiplicative in the sense that $\phi(m_H(u,v))=m_A(\phi(u),\phi(v))$ for $(u,v)\in \mtop _H$. Determine a condition under which the  {renormalised map} $\phi_+$ is also partially multiplicative.
\mlabel{pr:lphopf}
\end{problem}

At this point it is worthwhile to observe the mutual effects of the interplay between locality relations and the partial algebraic structures mentioned above. In one direction it allows us to pass the locality of $\phi$ represented by the relations onto the corresponding  {renormalised map} $\phi_+$, thus giving a solution of the  Locality {Conservation} Principle in Problem~\mref{pr:lpcoalg}. This is achieved in Theorem~\mref{thm:abflcoalg}.
In the other direction this interplay allows us to transmit the partial multiplicativity of $\phi$ onto $\phi_+$, thus giving a solution of the \Loc Product Conservation Principle in Problem~\mref{pr:lphopf}. This is achieved in Theorem~\mref{thm:abflhopf}.

Of particular interest to us is the exponential generating sum $S:\Q \bfc \to \calm_\Q$ mentioned above. In spite of the fact that the space $\Q \bfc$ of lattice cones is a genuine algebra when equipped with the extended Minkowski product,
the product is not compatible with the transverse coproduct, so we do not have a bialgebra. Likewise, in the decomposition $\calm _\Q=\calm_{\Q,+}\oplus  \calm _{\Q,-}$, even though the summand $\calm _{\Q,+}$, the space of holomorphic germs,
is a subalgebra, the summand $ \calm _{\Q,-}$, the space of polar germs, is not. Hence this decomposition does not give a Rota-Baxter algebra.  Furthermore, the linear map $S$ does not send the Minkowski product on $\Q \bfc$ to the function multiplication
in $\calm _\Q$. However if one considers only orthogonal pairs of lattice cones and suitable orthogonal relation of meromorphic germs,  all these structures can be recovered in the form of \loc structures. In fact $\Q \bfc$ is not only a \loc bialgebra, it is a  connected \loc Hopf algebra. Moreover,  $ \calm _{\Q,-}$ is not only a \loc subalgebra, it is a \loc ideal, showing that the projection
$\pi_+:\calm _\Q\to \calm _{\Q,+}$  onto $\calm _{\Q,+}$ along $\calm_{\Q -}$ is a \loc algebra homomorphism. Consequently,  the full algebraic Birkhoff factorisation can be recovered on the locality level, from which it then follows that the  {renormalised map} $\phi_+$ is a
\loc algebra homomorphism. Further the   \loc ideal  property of $\calm _{\Q,-}$ implies that its convolution inverse $\phi_+^{\star -1}$ is $\phi$ composed with the projection of $ \pi_+$ (see Eq.~\eqref{eq:BHlocalpi}). This applies whenever the regularisation map $\phi$ is a
\loc algebra homomorphism and takes values in $\calm _\Q$. In this situation a recursive formula for $\phi_-$  in terms of the projections $\pi_+$ and $\Id- \pi_+$ is  given in \eqref{eq:phi2},  which is reminiscent of the forest formula of the renormalisation of Feynman graphs in the context of Quantum Field Theory.

\subsection{Outline of the paper}

We next give a summary of the locality construction as an outline of the paper.

We consider vector spaces $H$ and $A$, each equipped with a suitable symmetric binary relation, and a linear map $\phi:H\to A$ preserving the relations. In order to pass this property of $\phi$ onto the corresponding  {renormalised map} $\phi_+$ via the algebraic Birkhoff factorisation for a suitably enriched $H$, $A$ and $\phi$, we need to make the relations compatible with all the algebraic structures involved in the algebraic Birkhoff factorisation, including a Hopf algebra or bialgebra structure on $H$, a Rota-Baxter algebra or algebra structure on $A$, and the corresponding structures on $\phi$.

Throughout the paper we use the space of convex lattice cones, the space of meromorphic functions and the exponential generating sum in Eq.~(\mref{eq:expsum}) between the two spaces as the primary examples of the algebraic constructions, and of our main theorems on locality.  {Further applications will be given in   subsequent work}, {such as locality of branched zeta values~\mcite{CGPZ}.}   {The locality relation relates to Weinstein's ``congeniality" condition \cite{LW,W} in a selective category, which in turn is motivated by the drive to build a quantising functor from the category of  canonical relations between symplectic manifolds to a category of quantum morphisms. We hope to explore these  relations in   future work. }

Thus we begin in Section~\mref{sec:lset} with the general concepts of a \loc set and \loc map, emphasising examples on lattice cones and meromorphic functions, while mentioning several other examples in passing. In Section~\mref{sec:lalgebra}, we  equip a \loc set with a compatible associative multiplication to give a \loc semigroup, \loc monoid and \loc group. Then through the intermediate structure of a \loc vector space, we obtain a \loc algebra and further a \loc Rota-Baxter algebra. In Section~\mref{sec:lcoalg}, we consider the coalgebraic aspect of the construction which begins with the preliminary but subtle notion of \loc tensor product. With it, we introduce the concepts of \loc coalgebra, the convolution product for maps from a \loc coalgebra to a \loc algebra. At this point we can give our first main result, Theorem~\mref{thm:abflcoalg}, addressing the Locality  Conservation Principle in Problem~\mref{pr:lpcoalg}. Finally in Section~\mref{sec:lhopf}, we bring the \loc algebra and \loc coalgebra together to form a \loc bialgebra and further a \loc Hopf algebra under an extra connectedness condition. Then we prove our second main result, Theorem~\mref{thm:abflhopf}, addressing the  \Loc Product Conservation Principle in Problem~\mref{pr:lphopf}. Both results are applied to the example of the exponential generating sum $S:\Q \bfc\to \calm _\Q$, showing that the orthogonality property and the partial multiplicativity on orthogonal pairs of convex lattice cones are indeed preserved by the renormalised version of $S$.

\smallskip

\noindent
{\bf Notations.}
Unless otherwise specified, all algebras are taken to be unitary commutative over a field $K$, and linear maps and tensor products are taken over $K$. A nonunitary algebra means one which does not necessarily have a unit. The same applies to coalgebras.

\section{Locality for sets and maps}
\mlabel{sec:lset}

\subsection{Concepts of \loc sets and \loc maps}
\mlabel{ss:conc}

We begin with a set with an independent relation.
 \begin{defn}
 		\begin{enumerate}
\item An {\bf \loc set} is a  couple $(X, \top)$ where $X$ is a set and $ \top \subseteq X\times X$ is a symmetric relation on $X$, referred to as the
{\bf \loc relation} (or {\bf independence relation}) of the \loc set. So for $x_1, x_2\in X$, denote $x_1\top x_2$ if $(x_1,x_2)\in \top$. When the underlying set $X$ needs to be emphasised, we use the notation $X\times_\top X$ or $\mtop_X$ for $\top$.
 \item  For any subset $U$ of a \loc set $(X,\top)$, let
 			\begin{equation*}
 			U^\top:=\{x\in X\,|\, (U,x)\subseteq X\times_\top X\}
 \end{equation*}
be the {\bf  polar subset} of $U$.
\end{enumerate}
\mlabel{defn:independence} 	
\end{defn}
 	
\begin{rk}
\begin{enumerate}
\item
Thus a \loc set is simply a set with a binary symmetric relation: we use the term \loc set   to be consistent with the  derived terminology to be introduced later for various algebraic structures built on top of the \loc  set.
\item The binary relation $\top$ plays two roles in our study which are related and yet complementary. On the one hand, it serves as a condition under which two elements are related in various ways (independent, orthogonal, etc.), hence
the symmetry requirement. As noted in the introduction, the \loc relation is intended to encode the notion of independence of events in physics (thus the alternative
 name independence relation); on the other hand, it assigns the subset of $X\times X$ as the domain for partial binary operations in the context of universal algebra~\mcite{Gr}. Strictly speaking, the symmetric condition is not required for the latter purpose even though in most of our applications, the algebras are commutative and hence the domain is symmetric. As the two roles are so closely related, we will only deal with symmetric relations unless otherwise needed.
\end{enumerate}
\end{rk}

From a \loc set one can derive other \loc sets as follows.
\begin{lem}
Let $(X,\top)$ be a \loc set.
\begin{enumerate}
\item For a subset $X'$ of $X$, denote $\top':=(X'\times X')\cap \top$. Then the pair $(X',\top')$ is a \loc set, called a {\bf sub-\loc set} of $(X,\top)$;
\item For subsets $A$ and $B$ of $(X,\top )$, denote $A\top^\calp B$, or simply $A\top B$ should the context be clear, if
$ A\times B\subseteq \top$.
Then $\top^\calp$ equips the power set $\calp(X)$ of $X$ with a \loc set structure;
\item Combining the above two items, any subset $Y$ of $\mathcal{P}(X)$ with the restriction of $\top^\calp$ defines a \loc set $(Y,(Y\times Y)\cap \top^\calp)$.
\end{enumerate}
\mlabel{lem:derive}
\end{lem}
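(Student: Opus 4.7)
The plan is to verify that each of the three constructions yields a symmetric binary relation on the appropriate underlying set, since that is the sole axiom of a \loc set. All three parts are routine, so I will be brief.

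For part \emph{(i)}, I would start with $(x_1,x_2)\in\top'$, unpack the definition $\top' = (X'\times X')\cap\top$, and use symmetry of $\top$ together with the trivial symmetry of $X'\times X'$ to conclude $(x_2,x_1)\in\top'$. This is immediate.

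For part \emph{(ii)}, I would take subsets $A,B\subseteq X$ with $A\top^\calp B$, i.e.\ $A\times B\subseteq\top$, and show $B\top^\calp A$. For any $(b,a)\in B\times A$ I would invoke symmetry of $\top$ on $(a,b)\in A\times B\subseteq \top$ to deduce $(b,a)\in\top$, hence $B\times A\subseteq\top$. This shows $\top^\calp$ is symmetric on $\calp(X)$, which is all that is needed.

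For part \emph{(iii)}, I would simply combine the first two: by \emph{(ii)}, $(\calp(X),\top^\calp)$ is a \loc set, and by \emph{(i)} applied to the sub-\loc set $Y\subseteq \calp(X)$, the pair $(Y,(Y\times Y)\cap \top^\calp)$ is a \loc set. No further argument is required.

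There is no real obstacle here; the statement is essentially a packaging lemma establishing notation and ensuring that the basic set-theoretic constructions (restriction to subsets, lifting to power sets) are compatible with the \loc-set axiom. The only subtlety worth flagging is the distinction between a pair $(a,b)$ being in the relation $\top$ versus a pair of subsets $(A,B)$ being in $\top^\calp$; I would make this explicit at the start of part \emph{(ii)} to avoid conflation, since the same symbol is reused in different contexts throughout the paper.
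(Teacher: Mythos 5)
Your proof is correct: the only axiom to check is symmetry of each relation, and your verifications for all three parts are exactly the routine arguments intended. The paper in fact states this lemma without proof, treating it as immediate, so your write-up simply supplies the obvious details and there is nothing further to compare.
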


\aside{
Here are further straightforward properties of \loc sets.
\begin{prop}
Let  $\left(X, \,\top\, \right)$ be a \loc set and let $U, V\subseteq X$.
	\begin{enumerate}
			\item Due to the symmetry of the relation we have
			\begin{equation}\mlabel{eq:Utop}
V\subset U^\top\Longleftrightarrow U\times V \subseteq \top \Longleftrightarrow U\subset V^\top.
\end{equation}
\item For any subset $U\subset X$,
$$ \left(U^\top\right)^\top\supset U.$$ However, in general $\top$ is not involutive:
$ \left(U^\top\right)^\top\neq U$.
\item   If $X^{\top}=\{e\}$ we have
\begin{equation*}
e^\top:=\{x\in X\,|\, x\top e\}=X\quad \text{and hence}\quad  \left(X^\top\right)^\top=X\quad \text{and}\quad \left(e^\top\right)^\top=\{e\}.
 			\end{equation*}
 		\end{enumerate}
\end{prop}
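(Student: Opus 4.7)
The plan is to systematically unfold Definition~\mref{defn:independence} and reduce every assertion to a quantifier statement about the relation $\top$, which can then be settled by invoking its symmetry. The key translation to keep in mind throughout is that $x \in U^\top$ means precisely that $u \top x$ holds for every $u \in U$; with this dictionary, inclusions between polar subsets correspond exactly to containments of Cartesian products in $\top$.

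For part (i) I would chain three equivalences. Unfolding the definition, $V \subseteq U^\top$ says that for every $v \in V$ and every $u \in U$ one has $u \top v$, i.e., $U \times V \subseteq \top$. The symmetry of $\top$ gives $U \times V \subseteq \top$ if and only if $V \times U \subseteq \top$, and the same unfolding applied with the roles of $U$ and $V$ swapped identifies the last condition with $U \subseteq V^\top$. Part (ii) is then an immediate consequence of (i) with $V$ replaced by $U^\top$: the inclusion $U \subseteq (U^\top)^\top$ is equivalent to $U \times U^\top \subseteq \top$, which in turn is equivalent to $U^\top \subseteq U^\top$, and so trivially holds. To show that equality can fail I would exhibit a small explicit counterexample such as $X = \{a,b,c\}$ equipped with the symmetric closure of $\{(a,b),(b,c)\}$: then $\{a\}^\top = \{b\}$ while $\{b\}^\top = \{a,c\}$, so $(\{a\}^\top)^\top = \{a,c\} \supsetneq \{a\}$.

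For part (iii), the hypothesis $X^\top = \{e\}$ unfolds to: $e$ is the unique element of $X$ with $x \top e$ for every $x \in X$. In particular $(X,e) \subseteq \top$, and by symmetry of $\top$ this yields $e \top x$ for every $x \in X$, which is exactly $e^\top = X$. The remaining identities follow by direct substitution: $(X^\top)^\top = \{e\}^\top = e^\top = X$, and $(e^\top)^\top = X^\top = \{e\}$. The single substantive point in the whole proposition is that $\top$ is generically not involutive, so part (ii) genuinely requires an explicit counterexample rather than a purely formal manipulation; every other step is routine bookkeeping with the definition of the polar subset and the symmetry of $\top$.
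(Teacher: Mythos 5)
Your proof is correct in all three parts: the unfolding of $x\in U^\top$ as $U\times\{x\}\subseteq\top$, the chain of equivalences via symmetry in (i), the derivation of (ii) from (i), the explicit three-element counterexample showing $(U^\top)^\top\neq U$ in general, and the substitutions in (iii) all check out. The paper itself states this proposition without any proof (indeed the statement sits inside a block that is suppressed from the compiled text), so there is nothing to compare against; your argument is the routine one the authors evidently considered too elementary to record, and the one genuinely non-formal ingredient --- an explicit example witnessing the failure of involutivity --- is supplied correctly.
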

} 	

 {As a very simple yet fundamental example which underlies the more sophisticated examples to be discussed below, the orthogonality relation $\perp^Q$ between vectors or subsets in an Euclidean vector space $(V,Q)$ equips $V$ or the set of subsets of $V$ with the structure of a \loc set. }
	
\begin{defn}
A {\bf \loc map} from a \loc set $\left(X,\mtop_X\right)$ to a \loc set $ (Y, \mtop_Y)$ is a map $\phi:X\to Y$ such that $(\phi\times \phi)(\mtop_X)\subseteq \mtop_Y$. More generally, maps $\phi,\psi:\left( X,\mtop_X\right)\to \left(Y, \mtop_Y\right)$ are called {\bf independent} and denoted $\phi\top \psi$ if
$(\phi\times \psi)(\mtop_X) \subseteq \mtop_Y$.
To be specific, if $\phi\neq \psi$, $\phi$ is called independent of $\psi$.
\mlabel{defn:localmap}
\end{defn}
	 {
\begin{ex}Any orthogonal map between two Euclidean vector spaces $(V_i,Q_i), i=1,2$, is a \loc map between the locality spaces $(V_i,\perp^{Q_i})$.
\end{ex}}
\begin{rk}
\begin{enumerate}
\item
The identity map  on a \loc set $\left( X,\top \right)$ is trivially a \loc map. Also, the composition of two \loc maps is still a \loc map. Thus \loc sets and \loc maps form a category.
\item
	 A map independent of the identity is a \loc map. Indeed let $(\Omega,\mtop_{\Omega})$ be a \loc set and $\phi:\Omega\mapsto\Omega$ be a
	 map such that $\phi\top {\rm Id}_{\Omega}$. Then for any $(x,y)\in\mtop_{\Omega}$ we have
	 \begin{equation*}
	  \phi(x)\mtop_{\Omega}{\rm Id}_{\Omega}(y)\Rightarrow y\mtop_{\Omega}\phi(x)\Rightarrow\phi(y)\mtop_{\Omega}{\rm Id}_{\Omega}(\phi(x))\Rightarrow\phi(x)\mtop_{\Omega}\phi(y).
	 \end{equation*}
Note that here we need the symmetric condition.
\end{enumerate}
\mlabel{rk:mut-ind-loc}
\end{rk}

\subsection{Examples of \loc sets and \loc maps}
\mlabel{ss:exam}

\subsubsection{Convex lattice cones and meromorphic functions}

 We now give some background for the main examples which serve as both the motivation and prototype of the theoretical structures in this paper. See~\mcite{GPZ2,GPZ4} for details.

Our first example of \loc sets is given by convex polyhedral lattice cones. Consider the filtered rational Euclidean lattice space
$$\Big(\R ^\infty=\bigcup_{\geq 1} \R ^k, \Z ^\infty =\bigcup_{\geq 1} \Z ^k, Q=(Q_k(\cdot, \cdot))_{k\geq 1}\Big),$$
where
$$ Q_k(\cdot,\cdot): \R ^k\ot \R ^k \to \R, \quad k\geq 1,$$
is an inner product in $\R ^k$ such that $Q_{k+1}|_{\R ^k\ot \R ^k}=Q_k$ and $Q_k(\Z ^k\otimes \Z ^k)\subset \Q$,
A lattice cone is a pair $(C, \Lambda _C)$ where $C$ is a polyhedral cone in some $\R ^k$, that is,
$$ C=\langle u_1,\cdots,u_m\rangle:=\Big\{\sum_{i=1}^m c_i u_i\,\Big|\, c_i\in \R_{\geq 0}, 1\leq i\leq m\Big\}
$$
for some $u_1,\cdots,u_m \in \Q ^k$, and $\Lambda _C$ is a lattice in the linear subspace spanned by $C$.
Let $\bfcf _k$ be the set of lattice cones in $\R ^k$ and
 $$\bfcf =\bigcup_{k\geq 1} \bfcf_k
 $$
be the set of lattice cones in $(\R ^\infty, \Z ^\infty)$. Let $\Q \bfc _k $ and $\Q \bfc $ be the linear spans of $\bfcf _k $ and $\bfcf$ over $\Q$.

In $(\R ^\infty, \Z ^\infty , Q)$, we write $\perp^Q$ for the corresponding orthogonality relation.
\begin{defn}
We call two lattice cones $(C, \Lambda _C)$ and $(D, \Lambda _D)$ {\bf \orth} (with respect to $Q$), if $Q(u,v)=0$ for all $u\in \Lambda_C, v\in \Lambda_D.$
Then we write $(C, \Lambda _C)\perp^Q (D, \Lambda _D)$.
\mlabel{de:perpcone}
\end{defn}

Multivariate meromorphic functions provide another fundamental motivation for the  forthcoming algebraic setup. Again in $(\R ^\infty, \Z ^\infty , Q)$, let $\calm_\Q((\R ^k)^* \ot \C)$ be the space of meromorphic germs at $0$ with linear poles and rational coefficients~\mcite{GPZ2,GPZ4} and let
\begin{equation}
\calm _\Q :=\bigcup_{k\geq 1} \calm_\Q((\R ^k)^*\ot \C).
\mlabel{eq:mero}
\end{equation}
An element of $\calm _\Q$ can be written as a sum of a holomorphic germ and elements the form
\begin{equation}
\frac{h(\ell_1,\cdots,\ell_m)}{L_1^{s_1}\cdots L_n^{s_n}}, \quad s_1, \cdots, s_n\in \Z_{>0},
 \mlabel{eq:polar}
 \end{equation}  where $h$ is a holomorphic germ with rational coefficients in linear forms $\ell_1,\cdots,\ell_m\in \Q ^k$,  and $L_1,\cdots, L_n$ are linearly independent linear forms in $\Q ^k$, $\ell_i\perp^QL_j$ for all $i\in \{1,\cdots, m\} \tforall j\in \{1, \cdots, n\}$, which is called a {\bf polar germ.}

\begin{defn}  Two meromorphic germs with rational coefficients $f$ and $f'$  are {\bf $Q$-\orth } which we  denote by $f\perp^Q f'$ if there exist linear functions $L_1, \cdots , L_m \in \Q ^k$ and $L _1', \cdots , L' _n\in \Q ^k$ satisfying $Q(L_i, L' _j)=0$ for $i=1, \cdots, m$, $j=1, \cdots, n$,
and meromorphic germs $g\in \calm _\Q (\R ^m\otimes \C) $ and $g'\in\calm _\Q (\R ^n\otimes \C)$, such that $f=g(L_1, \cdots, L_m)$, $f'=g'(L_1', \cdots, L_n')$.
Let $(\calm_\Q, \perp^Q)$ denote the resulting \loc set.
\mlabel{de:meroindep}
\end{defn}

We next give examples of \loc maps. Let $(C,\Lambda_C)$ be a strongly convex lattice cone in $\R ^k$ with interior $C^o$. For $z$ in the   dual  cone
$$C^-:=\{z \in (\R ^k)^*\,|\,  \langle x,z\rangle <0, \forall x\in C\},$$
we define its {\bf exponential generating function} to be the sum
$$ S(C,\Lambda _C)(z):= \sum_{n\in C^o\cap \Lambda _C} e^{\langle n,z\rangle }.$$
We also define its {\bf exponential integral} $I(C,\Lambda _C)$ to be the integral
$$ I(C, \Lambda _C)(z):= \int _C e^{\langle x,z\rangle }d\Lambda_x,$$
where $d\Lambda _x$ is the volume form induced by generators of $\Lambda _C$ such that the polytope generated by a basis of $\Lambda _C$ has volume 1.

These assignments extend by subdivisions to maps:
\begin{equation}
S, I: \bfcf \to \calm _\Q.
\mlabel{eq:conemeromap}
\end{equation}

\begin{prop}  \label{pp:conemeromap}
For lattice cones $(C, \Lambda _C)$ and $(D, \Lambda _D)$, if $(C, \Lambda _C)\perp ^Q (D, \Lambda _D)$, then $S(C, \Lambda _C)(z)\perp  ^Q S(D, \Lambda _D)(z)$ and $I(C, \Lambda _C)(z)\perp  ^Q I(D, \Lambda _D)(z)$ in the sense of
Definition~\mref{de:meroindep}, that is, the exponential integral and exponential generating function maps $I$ and $S$ are  \loc maps.
\end{prop}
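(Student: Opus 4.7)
The plan is to reduce the claim to the simplicial unimodular case, where both $S$ and $I$ admit classical closed-form expressions, and then to observe that the $Q$-orthogonality of the spanning subspaces $V_C:=\operatorname{span}_\R(\Lambda_C)$ and $V_D:=\operatorname{span}_\R(\Lambda_D)$ forces the linear forms appearing in those expressions to be mutually $Q$-orthogonal.

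The hypothesis $(C,\Lambda_C)\perp^Q(D,\Lambda_D)$ is equivalent to $V_C\perp^Q V_D$. By the subdivision convention built into~\eqref{eq:conemeromap}, one may write $(C,\Lambda_C)$ as a signed combination of simplicial lattice cones $(C_\alpha,\Lambda_{C_\alpha})$ with $C_\alpha\subseteq V_C$ and $\Lambda_{C_\alpha}\subseteq V_C$, and analogously for $(D,\Lambda_D)$. Each pair $(C_\alpha,\Lambda_{C_\alpha})\perp^Q(D_\beta,\Lambda_{D_\beta})$ inherits the orthogonality, and the $\perp^Q$ relation on $\calm_\Q$ is stable under $\Q$-linear combinations of germs that share the same defining linear forms, so it suffices to treat the simplicial unimodular case.

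For $C=\langle v_1,\dots,v_m\rangle$ with $\Lambda_C=\bigoplus_i\Z v_i$ and $v_i\in V_C\cap\Q^k$, Fubini applied to the parametrisation $x=\sum t_iv_i$ (under which $dt_1\cdots dt_m=d\Lambda_x$ by the normalisation of the lattice volume) and iterated geometric summation give
\begin{equation*}
I(C,\Lambda_C)(z)=\frac{(-1)^m}{\prod_{i=1}^m\langle v_i,z\rangle},\qquad S(C,\Lambda_C)(z)=\prod_{i=1}^m\frac{e^{\langle v_i,z\rangle}}{1-e^{\langle v_i,z\rangle}},
\end{equation*}
as meromorphic germs at $0$. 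A lattice $\Lambda_C$ coarser than $\bigoplus\Z v_i$ only inserts an extra finite factor $\sum_\mu e^{\langle\mu,z\rangle}$ with $\mu\in\Lambda_C\subseteq V_C$, which does not introduce any new linear forms. Hence $S(C,\Lambda_C)$ and $I(C,\Lambda_C)$ have the form $g(L_1(z),\dots,L_m(z))$ with $L_i(z):=\langle v_i,z\rangle$ and $g\in\calm_\Q(\R^m\otimes\C)$, where under the $Q$-induced identification $\R^k\simeq(\R^k)^*$ each $L_i$ corresponds to $v_i\in V_C$. The analogous description holds for $(D,\Lambda_D)$ with forms $L'_j$ associated to generators $v'_j\in V_D$.

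Since $V_C\perp^Q V_D$, we have $Q(v_i,v'_j)=0$ and therefore $Q(L_i,L'_j)=0$ for all $i,j$, which is exactly the decomposition required by Definition~\mref{de:meroindep}. The main obstacle is the bookkeeping in the reduction step: one must verify that the signed decomposition into simplicial unimodular sub-cones can be carried out intrinsically inside $V_C$, so that the explicit formula never forces in linear forms lying outside $V_C$. Once this is granted, the orthogonality of the meromorphic germs $S(C,\Lambda_C)(z)$ and $S(D,\Lambda_D)(z)$, and of $I(C,\Lambda_C)(z)$ and $I(D,\Lambda_D)(z)$, is a direct consequence of the orthogonality of the ambient subspaces.
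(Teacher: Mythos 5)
Your proposal is correct and follows essentially the same route as the paper: reduce by subdivision to smooth (simplicial unimodular) lattice cones, noting that the sub-cones inherit the orthogonality since subdivision preserves the lattice, then read off from the closed product formulas for $S$ and $I$ that the resulting germs are functions of mutually $Q$-orthogonal linear forms, which is exactly Definition~\ref{de:meroindep}. The ``main obstacle'' you flag is dispatched in the paper by the one-line observation that $\Lambda_{C_i}=\Lambda_C$ for every piece of a subdivision, so nothing outside $V_C$ ever enters.
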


\begin{proof} For any subdivision $\{(C_i, \Lambda _{C_i})\}$ of $(C, \Lambda _C)$, since $\Lambda _{C_i}=\Lambda _{C}$, we know $(C_i, \Lambda _{C_i})\perp ^Q (D, \Lambda _D)$. Because any lattice cone can be subdivided into smooth lattice cones, we can reduce the proof to smooth lattice cones.

For a smooth lattice cone $(C, \Lambda _C)=(\langle u_1,\cdots, u_n\rangle, \sum_{i=1}^n \Z u_i)$, we have
$$ S(C,\Lambda _C)(z)=\prod_{i=1}^n \frac{ e^{\langle u_i, z\rangle }}{1-e^{\langle u_i, z\rangle }}; \quad
  I(C,\Lambda _C)(z)= \prod_{i=1}^k \frac{1}{ \langle u_i, z\rangle }.$$
So $S(C,\Lambda _C)(z)$ and $S(D,\Lambda _D)(z)$ (resp. $I(C,\Lambda _C)(z)$ and $I(D,\Lambda _D)(z)$) are meromorphic germs in perpendicular linear functions. Therefore $S(C,\Lambda _C)(z) \perp ^Q S(D,\Lambda _D)(z)$ and $I(C,\Lambda _C)(z) \perp ^Q I(D,\Lambda _D)(z)$.
\end{proof}

\subsubsection{Other examples}
\mlabel{ss:othex}
There are many other examples of \loc sets. To save space, we only briefly list some of them and refer the reader to the references for further details.

A large number of examples come from disjointness of subsets noted in Lemma~\mref{lem:derive}.

 \begin{enumerate}
\item
On the one hand, locality  structures can be built on functions or distributions by  requiring the disjointness of   (adequately chosen) supports of  such maps, such as their ordinary supports, their singular supports  or  wavefront sets~(see \mcite{BDH} and references therein).
\item
On the other hand, locality structures on  maps can be built by requiring disjointness of their image sets. This is the case for  the decorating maps from the vertices of graphs or trees to a decorating set, which yields \loc sets of labelled graphs and trees~\mcite{F}.
\item
When the set $S$ is equipped with a linear structure, we can replace disjointness by trivial intersections, transversality or linear independence. If $S$ further has an inner product, then the relation can be chosen to be the one of orthogonality.
\end{enumerate}

Further examples of \loc sets include independence of events in probability and coprimeness of natural numbers as discussed in the introduction.

\section{Building up locality from semigroups to Rota-Baxter algebras}
\mlabel{sec:lalgebra}
In this section, we equip a \loc set with various algebraic structures, from that of a semigroup to that of a Rota-Baxter algebra.

\subsection {\Loc semigroups}
\mlabel{ss:lsg}

For a \loc set $(X,T)$ and an integer $k\geq 2$, denote
\begin{equation}
X^{_\top k}: = \underbrace{X \times_\top \cdots \times_\top X}_{k \text{ factors}}: = \left\{ (x_1,\cdots,x_k)\in X^k\,\left|\, (x_i,x_j)\in \top, 1\leq i\neq j\leq k\right.\right\}.
\mlabel{eq:lprod}
\end{equation}

\begin{defn} \mlabel{defn:lsg}
\begin{enumerate}
\item
An {\bf \loc semigroup}\footnote{\label{ft:part}As a special case of partial algebras~\mcite{Gr}, the terminology ``partial semigroup" is used for a set equipped with a partial associative product defined only for certain pairs of elements in the set. The condition for a \loc semigroup is more restrictive than that of a partial semigroup in that the former requires that the pairs {for which  the partial product is defined stem  from} a symmetric relation and that
the partial product should be compatible with the \loc relation in the sense of Eq.~(\mref{eq:semigrouploc}).} is a \loc set $(G,\top)$ together with a product law defined on $\top$:
$$ m_G: G\times_\top G\longrightarrow  G
$$
for which the product is compatible with the \loc relation on $G$, namely
\begin{equation}\tforall U\subseteq G, \quad  m_G((U^\top\times U^\top)\cap\top)\subset U^\top
\mlabel{eq:semigrouploc}\end{equation}
and such that the following {\bf \loc associativity property} holds:
\begin{equation}\mlabel{eq:asso}
(x\cdot y) \cdot z = x\cdot (y\cdot z) \text{ for all }(x,y,z)\in G\times_\top G\times_\top G. 	
\end{equation}
Note that, because of the condition \eqref{eq:semigrouploc}, both sides of Eq.~(\mref{eq:asso}) are well-defined for any triple in the given subset.
\mlabel{it:lsg}
\item
An \loc semigroup is {\bf commutative} if $m_G(x,y)=m_G(y,x)$ for $(x,y)\in \top$, noting that both sides of the equations are defined since $\top$ is symmetric.
\item
An {\bf \loc   monoid} is a \loc   semigroup $(G,\top, m_G)$ together with a {\bf unit element} $1_G\in G$ given by the defining property
\[\{1_G\}^\top=G\quad \text{ and }\quad m_G(x, 1_G)= m_G(1_G,x)=x\quad \tforall  x\in G.\]
We denote the \loc  monoid by $(G,\top,m_G, 1_G)$.
\mlabel{defn:partial monoid}
\item An {\bf \loc group} is a \loc monoid $(G,\top, m_G,1_G)$ equipped with a morphism $\iota: G\to G$ of \loc sets, called the {\bf inverse map}, such that  $(\iota (g),g)\in \top$ and $m_G(\iota(g),g)=m_G(g,\iota(g))=1_G$ for any $g\in G$.
\item
A {\bf sub-\loc semigroup} of a \loc semigroup $(G,\top,m_G)$ is a \loc semigroup $(G',\top',m_{G'})$ with $G'\subseteq G$, $\top'=(G'\times G')\cap \top$ and $m_{G'}=m_G|_{\top'}$, that is, for $x, y\in G'$ and $(x, y)\in \top$, $m_G(x,y)$ is in $G'$.
A {\bf sub-\loc monoid} of a \loc monoid is a sub-\loc semigroup of the corresponding \loc semigroup which share the same unit. A {\bf sub-\loc group} of a \loc group is a sub-\loc monoid of the corresponding \loc monoid which
is also a \loc group.
\mlabel{it:lssg}
\end{enumerate}
\end{defn}
For notational convenience, we usually abbreviate $m_G(x,y)$ by $x\cdot y$ or simply $xy$.

\begin{rk}
One easily checks that on   a \loc   monoid $(G, \top,m_G, 1_G)$ if $(x_1,x_2, y_1, y_2)$ is in $\gct$, then $(x_1 x_2, y_1, y_2)$ and $(x_1,x_2, y_1 y_2)$ are in $\gbt$ and hence $(x_1x_2, y_1y_2)\in \top.$
\end{rk}

As a simple counter example of \loc semigroup, we have

\begin{coex}
The set $G$ of linear subspaces of $\R^2$ is a \loc set with respect to the following relation $\mtop_G$ on linear subspaces of $G$: $U, V\subseteq \R^2$ are called transverse if they intersect trivially, namely if
$U\cap V=\{0\}$. The set $G$ equipped with linear sums $+$  is  a monoid. But the corresponding $(G, \mtop_G, +)$ is not a \loc monoid. Indeed, for the standard basis $\{e_1,e_2\}$ of $\R^2$,  the subspaces $\R e_1$ and $\R e_2$  both  intersect $\R (e_1+e_2)$ trivially, but $\R e_1 +\R e_2$ does not.
\end{coex}

\begin {ex}
The \loc set $(\calm _\Q,\perp^Q)$ in Definition~\mref{de:meroindep}, equipped with the restricted multiplication $m_Q:\calm_\Q \times_{\perp^Q} \calm_\Q \mapsto\calm_\Q$,
  is a \loc monoid.
\end{ex}

\begin{defn} Let $\left(X,\mtop_X,\cdot_X  \right) $ and $\left(Y,\mtop_Y,\cdot_Y  \right) $ (resp. $\left(X,\mtop_X,\cdot_X, 1_X  \right) $ and $\left(Y,\mtop_Y,\cdot_Y, 1_Y  \right) $) be \loc semigroups (resp.   monoids). A map $\phi:X\longrightarrow Y$ is called a {\bf \loc semigroup} (resp. {\bf \loc monoid}) {\bf homomorphism}, if it
				\begin{enumerate}
					\item  is a \loc map;
					\item is {\bf \loc multiplicative}:
for $({ a},{ b})\in \mtop_X$ we have
$\phi({ a}\cdot_X{ b})= \phi({ a})\cdot_Y\phi({ b})$,
		\item (resp. preserves the unit  $\phi(1_X)=1_Y$.)
			\end{enumerate}
\mlabel{defn:Phipartialmult}
\end{defn}

\begin{ex}
Classical examples of \loc monoid homomorphisms are given by {\bf multiplicative functions} in number theory. Here a function $f: \Z_{\geq 1} \to \Z_{\geq 1}$ is multiplicative means $f(1)=1$ and $f(mn)=f(m)f(n)$ if $m$ and $n$ are coprime. This means precisely that $f$ is a \loc monoid homomorphism from the \loc monoid $(\Z_{\geq 1},\mtop_{\rm cop})$ where $\mtop_{\rm cop}$ is the coprime relation, to the \loc monoid $(\Z_{\geq 1},\mtop_{\rm full})$, where $\mtop_{\rm full}$ is the full relation $\Z_{\geq 1}\times \Z_{\geq 1}$.
\mlabel{ex:intprod}
\end{ex}

Now let us take a closer look at the set $\bfcf$ of lattice cones. For convex cones $C:=\langle u_1,\cdots,u_m\rangle$ and $D:=\langle v_1,\cdots,v_n\rangle$ spanned by $u_1,\cdots,u_m$ and $v_1,\cdots,v_n$ respectively, their {\bf Minkowski product} (usually called Minkowski sum) is the convex cone
\begin{equation*}
C\cdot D:=\langle u_1,\cdots,u_m,v_1,\cdots, v_n\rangle.
\end{equation*}
This product can be extended to a product in $\bfcf$:
\begin{equation}
(C, \Lambda _C)\cdot (D, \Lambda _D):=(C\cdot D, \Lambda _C+\Lambda _D),
\mlabel{eq:minkprod}
\end{equation}
where $\Lambda _C+\Lambda _D$ is the abelian group generated by $\Lambda _C$ and $\Lambda _D$ in $\Q ^\infty$.
This product endows a monoid structure on $\bfc$ with unit $(\{0\}, \{0\})$, which also restricts to a \loc monoid structure on $(\bfcf,\perp ^Q)$.

Even though $\bfcf$ and $\calm _\Q$ have their natural multiplications defined on the full spaces, the importance of the \loc structures on $\bfcf$ and $\calm _\Q$ becomes evident when studying the multiplicative property of the maps $I$ and $S$ from $\bfcf$ to $\calm _\Q$ introduced in Section~\mref{ss:exam}. Because of the idempotency $(C, \Lambda _C)\cdot (C, \Lambda _C)=(C, \Lambda _C)$ for $(C, \Lambda _C)\in \bfcf$, the multiplicativity $I((C, \Lambda _C)\cdot (C, \Lambda _C))(z)=I(C, \Lambda _C)(z)I(C, \Lambda _C)(z)$ or $S((C, \Lambda _C)\cdot (C, \Lambda _C))(z)=S(C, \Lambda _C)(z)S(C, \Lambda _C)(z)$ {\em does not} hold in general since that would force the integral or the sum to be $0$ or $1$, which can not be the case for example by taking $(C,\Lambda _C)=(\langle e_1\rangle, \Z e_1)$.
But the multiplicativity can be recovered in the context of \loc monoids, as follows.

\begin {prop}
If $(C, \Lambda _C)\perp^Q (D, \Lambda _D)$,
then
$$S((C, \Lambda _C)\cdot (D, \Lambda _D))=S(C, \Lambda _C)S(D, \Lambda _D), \quad I((C, \Lambda _C)\cdot (D, \Lambda _D))=I(C, \Lambda _C)I(D, \Lambda _D).$$
Thus the \loc maps $I$ and $S$ are \loc semigroup homomorphisms from $(\bfcf,\perp ^Q)$ to $(\calm _\Q,\perp ^Q)$.
\mlabel{prop:cones}
\end{prop}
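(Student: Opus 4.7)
The plan is to reduce to the case of smooth (simplicial unimodular) lattice cones and then derive the multiplicativity from the closed-form product expressions already exploited in the proof of Proposition~\ref{pp:conemeromap}. Since Proposition~\ref{pp:conemeromap} already establishes that $S$ and $I$ preserve the orthogonality relation, only the multiplicative condition of Definition~\ref{defn:Phipartialmult} remains to be verified.

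First I would establish a compatibility between subdivisions and the Minkowski product. If $\{(C_i,\Lambda_C)\}_i$ is a subdivision of $(C,\Lambda_C)$ and $\{(D_j,\Lambda_D)\}_j$ is one of $(D,\Lambda_D)$, then I claim
$\{(C_i\cdot D_j,\Lambda_{C_i}+\Lambda_{D_j})\}_{i,j}$ is a subdivision of $(C\cdot D,\Lambda_C+\Lambda_D)$. Since $Q$-orthogonality induces an orthogonal splitting of the ambient linear span, the product $C\cdot D$ is the Cartesian product of $C$ and $D$ placed in orthogonal subspaces, so the sub-cone product simply becomes the Cartesian product of subdivisions, which is a polyhedral subdivision by elementary polytope combinatorics. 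Moreover the relation $(C_i,\Lambda_C)\perp^Q(D_j,\Lambda_D)$ inherits from $(C,\Lambda_C)\perp^Q(D,\Lambda_D)$ because the lattices are unchanged under subdivision. By additivity of $S$ and $I$ under subdivisions we may therefore assume $(C,\Lambda_C)$ and $(D,\Lambda_D)$ are both smooth.

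In the smooth case, write $(C,\Lambda_C)=(\langle u_1,\cdots,u_n\rangle,\sum_i\Z u_i)$ and $(D,\Lambda_D)=(\langle v_1,\cdots,v_m\rangle,\sum_j\Z v_j)$ with $Q(u_i,v_j)=0$ for all $i,j$. Orthogonality forces $\{u_1,\cdots,u_n,v_1,\cdots,v_m\}$ to be linearly independent, so the Minkowski product $(C\cdot D,\Lambda_C+\Lambda_D)$ is itself smooth with generators given by concatenation, and $\Lambda_C+\Lambda_D=\Lambda_C\oplus\Lambda_D$ as lattices in orthogonal subspaces. The induced volume form is accordingly the product of the two factor volume forms. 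Substituting these generators into the closed-form expressions
\[
S(C\cdot D,\Lambda_C+\Lambda_D)(z)=\prod_{i=1}^n\frac{e^{\langle u_i,z\rangle}}{1-e^{\langle u_i,z\rangle}}\cdot\prod_{j=1}^m\frac{e^{\langle v_j,z\rangle}}{1-e^{\langle v_j,z\rangle}},\quad I(C\cdot D,\Lambda_C+\Lambda_D)(z)=\prod_{i=1}^n\frac{1}{\langle u_i,z\rangle}\cdot\prod_{j=1}^m\frac{1}{\langle v_j,z\rangle},
\]
immediately yields the factorisations $S((C,\Lambda_C)\cdot(D,\Lambda_D))=S(C,\Lambda_C)\,S(D,\Lambda_D)$ and the analogous identity for $I$. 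Combined with Proposition~\ref{pp:conemeromap}, this shows both $S$ and $I$ satisfy all conditions of Definition~\ref{defn:Phipartialmult}.

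The step I expect to require the most care is the subdivision compatibility: one must verify that the product family $\{C_i\cdot D_j\}$ actually forms a polyhedral subdivision of $C\cdot D$ (covering, with interiors meeting only in lower-dimensional faces) and that the subdivision-based definition of $S$ and $I$ is invariant under such refinements. Fortunately the orthogonal splitting reduces this to the product-of-subdivisions fact for Cartesian products, so no deep polyhedral geometry is required beyond what is already used in~\cite{GPZ2,GPZ4}.
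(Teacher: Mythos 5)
Your proposal is correct and follows essentially the same route as the paper's own proof: reduce to smooth lattice cones by subdivision, then verify the factorisation directly from the closed-form product expressions for $S$ and $I$ on smooth cones, using that orthogonality makes the Minkowski product of smooth orthogonal cones again smooth with the concatenated generators. Your only addition is a more explicit justification that the products $\{C_i\cdot D_j\}$ of subdivisions form a subdivision of $C\cdot D$ (via the orthogonal Cartesian splitting), a point the paper leaves implicit in the sentence ``the general case follows by a subdivision.''
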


\begin {proof} For $(C, \Lambda _C)\perp^Q (D, \Lambda _D)$, if both are smooth,
let $C:=\langle u_1,\cdots,u_m\rangle $ with $\Lambda _C=\oplus \Z u_i$, and $D:=\langle v_1,\cdots,v_n\rangle$ with $\Lambda _D=\oplus \Z v_j$. Since $Q(u_i,v_j)=0$, we have $\Lambda _C +\Lambda _D=\big(\oplus_i \Z u_i\big)\oplus \big(\oplus_j\Z v_j\big)$.
Thus  $(C, \Lambda _C)\cdot (D, \Lambda _D)$ is smooth. By a direct calculation, we obtain
$$S\left((C, \Lambda _C)\cdot (D, \Lambda _D)\right)=S(C, \Lambda _C)\,S(D, \Lambda _D), \quad I\left((C, \Lambda _C)\cdot (D, \Lambda _D)\right)=I(C, \Lambda _C\,I(D, \Lambda _D).$$
The general case follows by a subdivision of lattice cones into smooth lattice cones.
\end{proof}

Moreover, the \loc monoid $(\bfcf,\perp ^Q)$ has a natural grading which does not extend to the monoid $\bfcf$. For $n\geq 0$, let $\bfcf_n$ denote the subset of $\bfcf$ consisting of lattice cones of dimension $n$. Then for \orth  lattice cones $(C, \Lambda _C)$ and $(D, \Lambda _D)$, we have $\dim ((C, \Lambda _C)\cdot (D, \Lambda _D))=\dim (C, \Lambda _C) + \dim (D, \Lambda _D)$. Hence,
\begin{equation}
\bfcf=\sqcup_{n\geq 0} \bfcf_n, \quad m_{\bfcf}((\bfcf_m \times \bfcf_n)\cap \perp ^Q) \subseteq \bfcf_{m+n} \quad \tforall m, n\geq 0,
\mlabel{eq:conegrad}
\end{equation}
which we take to be the defining conditions for $\bfcf$ to be a {\bf graded \loc monoid}.

\subsection{\Loc vector spaces}
\mlabel{ss:lvecsp}
We now consider \loc relations on vector spaces.

\begin{defn}
An {\bf \loc vector space} is a vector space $V$ equipped with a \loc relation $\top$ which is compatible with the linear structure on $V$ in the sense that, for any  subset $X$ of $V$, $X^\top$ is a linear subspace of $V$.
\mlabel{defn:lvs}
\end{defn}

\begin{rk}
 For a \loc vector space $(V,\top)$, since $V^\top$ is a linear subspace of $V$, we have  $\{0\}\times V\subset \top,$ or equivalently  $0\in V^\top.$ Note that there is no \loc restrictions for the vector space structure (addition and scalar product) on $V$, that is, the addition and scalar product are everywhere defined.
\mlabel{rk:locvecsp}
\end{rk}

\begin{ex}
The vector space $\calm _\Q $ equipped with the relation $\perp^Q$ in Definition~\mref{de:meroindep} is a \loc vector space $\left( \calm _\Q,\perp^Q\right)$.
\mlabel{ex:merosp}
\end{ex}

\begin{rk}
Clearly, constant germs are \orth  of any germs, namely $\R {\subseteq} \calm^{\perp^Q}_\Q $. In fact, the converse is also true. Thus $\calm^{\perp^Q}_\Q=\R$.
\end{rk}

From a \loc set $(X, \top)$ we can build a \loc vector space  $(KX, \top)$ from the vector space generated by $X$ whose defining relation (denoted by the same symbol $\top$) is  the  linear extension of that on $X$. More precisely for $u, v\in   K X$, $(u,v)\in \top$ if the basis elements from $X$ appearing in $u$ are related via $\top$ to the basis elements appearing in $v$.
Thus
$$  KX\times_{\mtop_{  KX}}  KX=\bigcup_{U,V\subseteq X,(U,V)\subseteq \top}  KU\times   KV.$$

\begin{ex}
The \loc set $\bfcf$ of lattice cones with the orthogonal relation in Definition~\mref{de:perpcone} gives rise to the corresponding \loc vector space $\Q \bfcf$.
Likewise, the \loc set of labelled rooted trees described in Section~\mref{ss:othex} gives rise to the corresponding \loc  vector space generated by the set. \mlabel{ex:conesp}
\end{ex}

\begin{defn}  \mlabel{defn:locallmap}
Let $\left(U,\mtop_U\right)$ and $ (V, \mtop_V)$ be \loc vector spaces, a linear map $\phi:\left( U,\mtop_U\right)\to \left(V, \mtop_V\right)$ is called a {\bf \loc linear map} if it is a \loc map.
\end{defn}

\begin{ex}The \loc maps given by the exponential integral $I:\bfcf \to \calm_\Q$ and the exponential generating sum $S:\bfcf \to \calm_\Q$ in Proposition~\mref{pp:conemeromap} extend to \loc linear maps
from $\Q \bfcf$ to $\calm_\Q$.
\mlabel{ex:conemeroli}
\end{ex}

Here are further useful properties of \loc linear maps.
\begin{prop}

Let $(U,\mtop_U), (V, \mtop_V)$ be \loc vector spaces and   $\phi,\psi:(U,\mtop_U) \longrightarrow (V,\mtop_V)$ be independent \loc linear maps. Any two linear combinations of $\phi$ and $\psi$ are also independent. In particular, any linear combination  $\lambda\phi+\mu \psi$ with $\lambda, \mu \in K$ is a \loc linear map.
\mlabel{lem:locallinearmap}
\end{prop}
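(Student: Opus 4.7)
The plan is to exploit the defining feature of a \loc vector space: for every subset $X$ of $V$, the polar $X^\top$ is a \emph{linear} subspace. This allows us to propagate independence through any linear combination.

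Fix $(x,y)\in \mtop_U$ and scalars $\lambda_1,\mu_1,\lambda_2,\mu_2\in K$. The first step is to pin down four basic independence relations in $V$. Since $\phi$ and $\psi$ are \loc linear maps we have $\phi(x)\top\phi(y)$ and $\psi(x)\top\psi(y)$, while the hypothesis $\phi\top\psi$ combined with the symmetry of $\mtop_V$ gives $\phi(x)\top\psi(y)$ and $\psi(x)\top\phi(y)$. Equivalently, both $\phi(y)$ and $\psi(y)$ lie in the polar $\{\phi(x),\psi(x)\}^\top$.

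The second step uses the linearity of the polar twice. Because $\{\phi(x),\psi(x)\}^\top$ is a linear subspace of $V$, the combination $\lambda_2\phi(y)+\mu_2\psi(y)$ also belongs to it; that is, $\phi(x)\top(\lambda_2\phi(y)+\mu_2\psi(y))$ and $\psi(x)\top(\lambda_2\phi(y)+\mu_2\psi(y))$. Now both $\phi(x)$ and $\psi(x)$ lie in the polar $\{\lambda_2\phi(y)+\mu_2\psi(y)\}^\top$, which is again a linear subspace. Hence $\lambda_1\phi(x)+\mu_1\psi(x)$ lies there too, proving
\[
\bigl(\lambda_1\phi(x)+\mu_1\psi(x)\bigr)\;\top\;\bigl(\lambda_2\phi(y)+\mu_2\psi(y)\bigr).
\]
Since $(x,y)\in\mtop_U$ was arbitrary, the maps $\lambda_1\phi+\mu_1\psi$ and $\lambda_2\phi+\mu_2\psi$ are independent.

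For the ``in particular'' statement, specialise to $\lambda_1=\lambda_2=\lambda$ and $\mu_1=\mu_2=\mu$: then $\lambda\phi+\mu\psi$ is independent of itself, which by Definition~\ref{defn:localmap} is precisely the condition of being a \loc map, and it is linear by construction. I do not foresee any real obstacle here; the only subtle point is to resist the temptation to argue componentwise (which would fail because $\mtop_V$ is not in general preserved by addition of the first or second entry alone) and instead apply the subspace property of the polar twice, once on each side of the relation.
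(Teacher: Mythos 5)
Your proof is correct and follows essentially the same route as the paper's: establish that $\{\phi(x),\psi(x)\}\mtop_V\{\psi(y),\phi(y)\}$ from the locality of $\phi$, $\psi$ and their independence, then pass to linear combinations. You merely make explicit the two applications of the subspace property of polars that the paper compresses into a single ``hence''.
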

\begin{proof}
Let $u_1, u_2$ be in $\mtop_U$. Since $\phi$ and $\psi$ are independent \loc linear maps, we have
$  \{\phi(u_1), \psi(u_1)\} \mtop_ V \{\phi(u_2),\psi(u_2)\} $
and hence
$\left(\lambda\phi(u_1)+\mu \psi(u_1)\right)\mtop_V  \left(\lambda\phi(u_2)+\mu \psi(u_2)\right).$
	\end{proof}	

\subsection{\Loc   algebras}

We begin with a preliminary notion. Let $V$ and $W$ be vector spaces and let $\top:=V\times_\top W \subseteq V\times W$. A map $f: V\times_\top W \to U$ to a vector space $U$ is called {\bf \loc bilinear} if
$$f(v_1+v_2,w_1)=f(v_1,w_1)+f(v_2,w_1), \quad f(v_1,w_1+w_2)=f(v_1,w_1)+f(v_1,w_2),$$
$$f(kv_1,w_1)=kf(v_1,w_1), \quad
f(v_1,kw_1)=kf(v_1,w_1)$$
for all $v_1,v_2\in V$, $w_1,w_2\in W$ and $k\in  K $ such that all the pairs arising in the above expressions are in $V\times_\top W$.

\begin{defn}
\begin{enumerate}
\item A {\bf nonunitary \loc  algebra} over $K$ is a \loc vector space $(A,\top)$ over $K$ together with a \loc bilinear map
	$$ m_A: A\times_\top A \to A$$ such that
	$(A,\top, m_A)$ is a \loc semi-group in the sense of Definition~\mref{defn:lsg}.(\mref{it:lsg}).
	\item An {\bf \loc algebra} is a nonunitary \loc algebra $(A,\top, m_A)$ together with a {\bf unit} $1_A:K\to A$ in the sense that
	$(A,\top, m_A, 1_A)$ is a \loc monoid defined in  Definition~\mref{defn:lsg}.~(\mref{defn:partial monoid}). We shall omit explicitly  mentioning the unit $1_A$ and the product $m_A$ unless this generates an ambiguity.
\item
		A  linear subspace $B$ of a \loc algebra $\left(A,\top , m_A \right) $ is called a {\bf sub-\loc algebra} of $A$ if
$B$ is a sub-\loc semigroup of $A$ in the sense of Definition~\mref{defn:lsg}.(\mref{it:lssg}).
\item
A sub-\loc algebra $I$ of a \loc commutative algebra $\left(A,\top ,m_A \right) $ is called a {\bf \loc ideal} of $A$ if for any $b\in I$ we have
$b^\top \cdot b\subseteq I ~\tforall b^\top\in\{b\}^\top$.
\item
An \loc-linear map $f:(A,\mtop_A,\cdot_A)\to (B,\mtop_B,\cdot_B)$ between two (non necessarily unital) \loc algebras is called a {\bf \loc algebra homomorphism} if
\begin{equation}
f(u\cdot_A v)=f(u)\cdot_B f(v)\ \  \tforall (u,v)\in\mtop_A.
\mlabel{eq:lmultlin}
\end{equation}
\label{defn:localisedideal}
\item A \loc algebra $A$ with a linear grading $A=\oplus_{n\geq 0}A_n$ is called a \loc graded algebra if $m_A((A_m\times A_n)\cap \mtop_A) \subseteq A_{m+n}$ for all $m, n\in \Z$.
    \mlabel{it:gradalg}
\end{enumerate}
\mlabel{defn:localisedalgebra}
\end{defn}
It is easy to check that a \loc linear map $f:(A,\mtop_A,\cdot_A)\to (B,\mtop_B,\cdot_B)$ between two \loc algebras is a \loc algebra homomorphism if and only if $\ker f$ is a \loc ideal of $A$,
by the same argument as the one for the corresponding result on an algebra homomorphism.

\begin{rk}
		\begin{enumerate}
\item For a \loc   algebra $(A,\top)$ we have $\{0, 1_A\}\subset A^\top$  since $0\in  A^\top$ by Remark~\mref{rk:locvecsp}.
\item	If $A\times_\top A$ is $A\times A$ in a \loc monoid and \loc algebra, we recover the usual notions of monoid and algebra.
		\end{enumerate}
\mlabel{rk:algebraunit}
\end{rk}

The \loc space $\Q \bfc$, with the multiplication obtained from the linear extension of the \loc monoid structure on $\bfcf$ by the Minkowski product in Eq.~(\mref{eq:minkprod}), is a \loc commutative algebra. By Eq.~\eqref{eq:conegrad}, we in fact have

\begin{lem}
With the grading induced from $\bfcf=\sqcup_{n\geq 0} \bfcf_n$ in Eq.~\eqref{eq:conegrad}, $\Q \bfc=\oplus_{n\geq 0} \Q \bfc_n$ is a graded \loc  algebra.
\mlabel{lem:conegradalg}
\end{lem}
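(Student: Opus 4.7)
The plan is to derive the lemma as an essentially formal consequence of the graded \loc monoid structure on $\bfcf$ recorded in Eq.~\eqref{eq:conegrad}, combined with the fact, noted just before the statement, that $\Q\bfc$ with the linearly extended Minkowski product is already a \loc commutative algebra, and the construction of the \loc vector space $\Q\bfcf$ from the \loc set $\bfcf$ (Example~\ref{ex:conesp}).

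First I would unpack the vector space grading. The set-theoretic disjoint decomposition $\bfcf=\sqcup_{n\ge 0}\bfcf_n$, with $\bfcf_n$ the set of $n$-dimensional lattice cones, immediately yields the direct sum decomposition $\Q\bfc=\bigoplus_{n\ge 0}\Q\bfc_n$ of the underlying vector space, where $\Q\bfc_n$ denotes the $\Q$-linear span of $\bfcf_n$. Since the \loc relation on $\Q\bfc$ is defined as the linear extension of the relation $\perp^Q$ on $\bfcf$, each homogeneous piece $\Q\bfc_n$ inherits a \loc vector space structure.

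Next I would verify the grading compatibility with the \loc product, namely $m_{\Q\bfc}\bigl((\Q\bfc_m\times\Q\bfc_n)\cap\mtop\bigr)\subseteq \Q\bfc_{m+n}$. On generators, if $(C,\Lambda_C)\in\bfcf_m$ and $(D,\Lambda_D)\in\bfcf_n$ satisfy $(C,\Lambda_C)\perp^Q(D,\Lambda_D)$, then by Definition~\ref{de:perpcone} the lattices $\Lambda_C$ and $\Lambda_D$ sit in orthogonal subspaces and hence have trivial intersection; therefore $\Lambda_C+\Lambda_D=\Lambda_C\oplus\Lambda_D$ has rank $m+n$, and the spanning sets concatenate into a linearly independent family, so $(C,\Lambda_C)\cdot(D,\Lambda_D)\in\bfcf_{m+n}$. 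This is precisely the content of the second clause of Eq.~\eqref{eq:conegrad} for the \loc monoid $\bfcf$. Extending bilinearly via the \loc bilinearity of the product $m_{\Q\bfc}$ (Definition~\ref{defn:localisedalgebra}), an arbitrary element of $(\Q\bfc_m\times\Q\bfc_n)\cap\mtop$ is a finite $\Q$-linear combination of pairs of orthogonal lattice cones of dimensions $m$ and $n$ respectively, and its image lies in $\Q\bfc_{m+n}$.

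There is essentially no serious obstacle: the statement is a transcription of the graded \loc monoid property to the linearised level. The only point requiring a bit of care is to check that the \loc relation on $\Q\bfc$ restricts correctly to each pair $(\Q\bfc_m,\Q\bfc_n)$ so that the \loc bilinear extension is unambiguous; this follows because, by construction, orthogonality on $\Q\bfc$ is the linear extension of $\perp^Q$ on $\bfcf$, so two homogeneous elements are orthogonal precisely when each basis cone appearing in the first is $\perp^Q$ to each basis cone appearing in the second, allowing a term-by-term application of the generator-level argument above.
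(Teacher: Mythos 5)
Your proposal is correct and takes essentially the same route as the paper, which offers no separate argument but presents the lemma as an immediate consequence of the graded \loc monoid property in Eq.~\eqref{eq:conegrad} together with the \loc algebra structure on $\Q\bfc$ obtained by linear extension of the Minkowski product. Your write-up merely makes explicit the term-by-term bilinear extension to homogeneous linear combinations (with the one cosmetic caveat that the generators $u_1,\dots,u_m$ of a cone need not be linearly independent, so the dimension count should rest on $\mathrm{span}(C\cdot D)=\mathrm{span}(C)\oplus\mathrm{span}(D)$ rather than on concatenating spanning sets into an independent family).
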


Another important \loc algebra for our purpose is the space $\left({\mathcal M}_\Q, \perp^Q\right)$ with $\perp^Q$ as in Definition \mref{de:meroindep} and the pointwise multiplication. Further by Proposition~\mref{prop:cones}, the linear maps
\begin{equation}
S, I: \Q \bfc\to \calm_\Q
\mlabel{eq:islin}
\end{equation}
linearly extended from those in Eq.~\eqref{eq:conemeromap}, are \loc algebra homomorphisms.

We can say more about the \loc algebra $\calm_\Q$.
By~\cite[Corollary~4.7]{GPZ4}, there is a direct sum decomposition
\begin{equation} \calm _\Q = \calm _{\Q,+}\oplus \calm _{\Q,-}^Q,
\mlabel{eq:merodecomp}
\end{equation}
where $\calm _{\Q,+}$ is the subspace of holomorphic functions and $\calm _{\Q,-}^Q$ is the subspace spanned by polar germs defined by Eq.~\eqref{eq:polar}.
Then we have the following result from  \cite[Corollary 4.18]{GPZ4} reformulated in the terminology of \loc structures.

\begin{prop}
In the decomposition in Eq.~\eqref{eq:merodecomp}, the space  $\calm _{\Q,+}$ is a subalgebra and a \loc subalgebra of $\calm_\Q$. The space $\calm _{\Q,-}^Q$ is not a subalgebra but a \loc subalgebra, in fact a \loc   ideal of $\calm_\Q$. Consequently, the projection $\pi_+^Q:\calm_\Q \to \calm _{\Q,+}$ is a \loc algebra homomorphism.
\mlabel{pp:merodecomp}
\end{prop}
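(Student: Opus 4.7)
The plan is to reduce each assertion to a direct computation on polar germs, using the decomposition~\eqref{eq:merodecomp} from~\mcite{GPZ4}. The first claim is immediate: a pointwise product of holomorphic germs at $0$ is again holomorphic, so $\calm_{\Q,+}$ is a genuine subalgebra and hence, \emph{a fortiori}, a \loc subalgebra. To see that $\calm_{\Q,-}^Q$ is \emph{not} a subalgebra, I would exhibit a simple counterexample: in two variables $x,y$ equipped with the standard inner product one has $x\perp^Q y$, so both $x/y$ and $y/x$ are polar germs in $\calm_{\Q,-}^Q$, yet their product equals $1\in\calm_{\Q,+}$. Note that $x/y$ and $y/x$ are not themselves $Q$-\orth, so this example is fully consistent with the \loc subalgebra statement below.

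The core step is the \loc ideal property of $\calm_{\Q,-}^Q$. Let $f\in\calm_{\Q,-}^Q$ and $g\in\calm_\Q$ with $f\perp^Q g$. By bilinearity and by passing to a common family of linear forms for both $f$ and $g$, I may reduce to the case where $f$ is a single polar germ $h(\ell_1,\ldots,\ell_m)/(L_1^{s_1}\cdots L_n^{s_n})$ and $g$ is either holomorphic or a single polar germ $h'(\ell'_1,\ldots,\ell'_p)/(M_1^{t_1}\cdots M_q^{t_q})$. The product is then
\[
f\cdot g \;=\; \frac{h(\ell_1,\ldots,\ell_m)\,h'(\ell'_1,\ldots,\ell'_p)}{L_1^{s_1}\cdots L_n^{s_n}\,M_1^{t_1}\cdots M_q^{t_q}},
\]
and I must verify two things: (a) every numerator linear form is $Q$-\orth to every denominator linear form, and (b) the combined family $\{L_i\}\cup\{M_j\}$ is linearly independent. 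Point (a) is a bookkeeping exercise: $\ell_i\perp^Q L_j$ comes from $f$ being polar, $\ell'_i\perp^Q M_j$ from $g$ being polar, and the remaining cross relations $\ell_i\perp^Q M_j$, $\ell'_i\perp^Q L_j$ and $L_i\perp^Q M_j$ all follow from $f\perp^Q g$. Point (b) is a standard Gram-matrix argument: pairing any linear relation among $\{L_i\}\cup\{M_j\}$ against an $M_k$ under $Q$ and using $L_i\perp^Q M_j$ forces the $M$-coefficients to vanish, and symmetrically for the $L$-coefficients. This shows $fg\in\calm_{\Q,-}^Q$; specialising to $g\in\calm_{\Q,-}^Q$ recovers the \loc subalgebra property.

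For the final assertion that $\pi_+^Q$ is a \loc algebra homomorphism, take $f,g\in\calm_\Q$ with $f\perp^Q g$ and expand $fg=f_+g_++f_+g_-+f_-g_++f_-g_-$ along~\eqref{eq:merodecomp}. Since $f_+g_+\in\calm_{\Q,+}$, the claim reduces to showing that the three cross terms lie in $\calm_{\Q,-}^Q$, which in turn follows from the \loc ideal property just established, \emph{provided} $f\perp^Q g$ implies $f_\pm\perp^Q g_\pm$. \textbf{This compatibility of the decomposition with the \loc relation is the main obstacle}; the strategy is to choose representations $f=F(L_1,\ldots,L_k)$, $g=G(L'_1,\ldots,L'_{k'})$ with $L_i\perp^Q L'_j$, to apply the one-sided decomposition to $F$ and $G$ separately within their ambient spaces, and then to invoke uniqueness of the ambient decomposition on $\calm_\Q$ to identify the resulting pieces with $f_\pm$ and $g_\pm$. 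This step is essentially the content of~\cite[Corollary~4.18]{GPZ4}, which I would cite to complete the argument.
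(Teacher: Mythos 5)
The paper does not actually prove this proposition: it is presented as a reformulation of \cite[Corollary~4.18]{GPZ4} into the \loc terminology, so there is no internal argument to measure yours against. Your reconstruction is sound and ultimately rests on the same external input. The parts you prove from scratch are correct: closure of holomorphic germs under products gives the first claim; the counterexample $\frac{x}{y}\cdot\frac{y}{x}=1$ with $x\perp^Q y$ cleanly shows that $\calm_{\Q,-}^Q$ is not a subalgebra, and you rightly observe that the two factors are not $Q$-\orth, so there is no tension with the \loc ideal claim; and the Gram-matrix argument for the linear independence of $\{L_i\}\cup\{M_j\}$ works because each $Q_k$ is an inner product, hence positive definite. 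Most importantly, you correctly isolate the one genuinely nontrivial ingredient: that for $f\perp^Q g$ one can choose polar-germ representations of $f$ and $g$ (and identify the components $f_\pm$, $g_\pm$) inside two mutually $Q$-\orth families of linear forms, i.e.\ that the decomposition \eqref{eq:merodecomp} is compatible with $\perp^Q$. That compatibility is precisely what \cite[Corollary~4.18]{GPZ4} supplies, and it is the same citation the paper leans on for the entire proposition, so delegating it there is legitimate. Two cosmetic caveats: for a general admissible $Q$ the two variables in your counterexample should be chosen as $Q$-orthogonal rational vectors rather than assumed orthogonal for the standard inner product (immediate via Gram--Schmidt over $\Q$); and your reduction of $f\in\calm_{\Q,-}^Q$ to a single polar germ with each summand still $\perp^Q g$ also relies on the same compatibility statement, which you do flag but which is worth making explicit since an arbitrary polar-germ decomposition of $f$ need not respect the orthogonal family attached to $f$.
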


In contrast to the multivariate case, the space  $	{\mathcal M}_{\Q,-}^Q(\R ^*\ot \C )=\e^{-1}\C[\e^{-1}]$ is a subalgebra in the space 	 $	 {\mathcal M}_\Q(\R^*\ot \C  )$  of meromorphic functions in one variable.
	 This is a major difference between our multivariate setup and the usual  single variate framework used for renormalisation purposes. We circumvent the difficulty in  relaxing ordinary  multiplicativity  to a  multiplicativity allowed only on independent elements. In fact, $\calm _{\Q,-}^Q(\R^*\ot \C ) $ is a \loc ideal of $\calm _\Q(\R^*\ot \C  )$ under the restriction of independence relation since the \loc relation $\perp^Q$ restricted to $\calm_\Q(\R^*\ot \C )$ is simply $(\C\times \calm _\Q(\R^*\ot \C )) \cup (\calm _\Q (\R^*\ot \C ) \times \C)$. Thus, the   \loc algebra homomorphism  $\pi_+^Q$ restricts  to  a mere linear map on $\calm _\Q(\R^*\ot \C  )$ with no additional multiplicativity property.

\subsection{\Loc Rota-Baxter algebras and projection maps} The reader is referred to~\cite{G2} for background on Rota-Baxter algebras.
\begin{defn}
A linear operator $P: A\to  A$ on a \loc algebra $(A,\top)$ over a field $ K $ is   called
		 a {\bf \loc Rota-Baxter operator of
		 weight} $\lambda\in  K $, or simply a {\bf Rota-Baxter operator}, if it is a \loc map, independent of $\Id_A$, and satisfies the following {\bf \loc Rota-Baxter relation}:
		 \begin{equation}
 P({ a})\, P({ b})= P(P({ a})\,{ b})+ P({ a}\, P({ b})) +\lambda\, P({ a}\,{ b}) \quad \tforall (a,b)\in \top.
 \mlabel{eq:rbo}
 \end{equation}
We call the triple $(A,\top, P)$ a {\bf \loc Rota-Baxter algebra}.
\mlabel{defn:lrba}
\end{defn}

  \begin{rk}\begin{enumerate}
\item The right hand side of  {Eq.~}\eqref{eq:rbo} is well defined due to the condition that $P$ is independent of the identity.
\item As in the classical setup  \cite[Proposition 1.1.12]{G2}, if  $P$ is a \loc Rota-Baxter operator of weight  $\lambda$, then $-\lambda-P$ is also a \loc Rota-Baxter of weight $\lambda$.
\end{enumerate}
\mlabel{rk:LRB}
\end{rk}

An important class of \loc Rota-Baxter algebras arises from idempotent operators.

\begin{prop}	
Let $\left(A,\top ,m_A \right)$  be  a \loc algebra. Let $P:A\longrightarrow A$ be a \loc linear idempotent  operator in which case there is a linear decomposition $A=A_1\oplus A_2$ with $A_1=\ker\, (\Id-P)$ and $A_2=\ker\,(P)$ so that $P$ is the projection   onto $A_1$ along $A_2$.  The following statements are equivalent:
\begin{enumerate}
\item $P$ or $\Id-P$ is a \loc Rota-Baxter operator of weight $-1$; \mlabel{it:irbo1}
\item $A_1$ and $A_2$ are \loc subalgebras of $A$, and $P$ and $\Id-P$ are independent \loc maps.
\mlabel{it:irbo2}
		\end{enumerate}
If one of the conditions holds, then $P$ is a \loc multiplicative map if and only if $A_2$ is a \loc ideal of $A$.
\mlabel{prop:multpi}
\end{prop}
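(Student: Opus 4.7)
The plan is to push everything onto the decomposition $A = A_1 \oplus A_2$ with $A_1 = \ker(\Id - P)$ and $A_2 = \ker P$, and then to use Proposition~\mref{lem:locallinearmap} to freely shuffle the relation $\top$ between an element $a$ and its components $P(a), (\Id-P)(a)$. Both (\mref{it:irbo1}) and (\mref{it:irbo2}) force $P \top \Id$: in (\mref{it:irbo1}) this is part of the definition of a \loc Rota-Baxter operator (invoking Remark~\mref{rk:LRB} to reduce the $\Id - P$ case to the $P$ case), while in (\mref{it:irbo2}) it follows from Proposition~\mref{lem:locallinearmap} since $\Id = P + (\Id - P)$. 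Consequently, for every $(a,b) \in \top$ the nine pairs in $\{a, P(a), (\Id-P)(a)\} \times \{b, P(b), (\Id-P)(b)\}$ all lie in $\top$, legitimising every partial product and every application of $P$ appearing in the arguments below.

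For (\mref{it:irbo1})$\Rightarrow$(\mref{it:irbo2}), evaluate the \loc Rota-Baxter identity $P(a)P(b) = P(P(a)b) + P(aP(b)) - P(ab)$ first on $(a,b) \in \top$ with $a, b \in A_1$: it collapses to $ab = P(ab)$, so $A_1$ is a sub-\loc algebra. Then on $(a,b) \in \top$ with $a, b \in A_2$: it collapses to $P(ab) = 0$, so $A_2$ is a sub-\loc algebra. Independence of $P$ and $\Id - P$ is immediate from Proposition~\mref{lem:locallinearmap}, since they are linear combinations of $P$ and $\Id$.

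For the converse (\mref{it:irbo2})$\Rightarrow$(\mref{it:irbo1}), write $a = a_1 + a_2$ and $b = b_1 + b_2$ with $a_i, b_i \in A_i$, and expand
\[
P(a)P(b) - P(P(a)b) - P(aP(b)) + P(ab)
\]
using \loc bilinearity and the identities $P|_{A_1} = \Id$, $P|_{A_2} = 0$. The contributions containing $a_1 b_2$ and $a_2 b_1$ cancel in pairs, and what remains simplifies to $a_1 b_1 - P(a_1 b_1) + P(a_2 b_2)$. Both surviving terms vanish because $a_1 b_1 \in A_1$ and $a_2 b_2 \in A_2$ by the sub-\loc algebra hypothesis, verifying the \loc Rota-Baxter relation of weight $-1$.

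For the final equivalence, the $(\Leftarrow)$ direction uses the same four-fold decomposition of $ab$: when $A_2$ is a \loc ideal, each of $a_1 b_2, a_2 b_1, a_2 b_2$ lies in $A_2$, so $P(ab) = P(a_1 b_1) = a_1 b_1 = P(a)\, P(b)$. Conversely, applying \loc multiplicativity to an arbitrary $a_2 \in A_2$ and $b \in A$ with $(a_2, b) \in \top$ gives $P(a_2 b) = P(a_2)\, P(b) = 0$, so $a_2 b \in \ker P = A_2$, which is precisely the \loc ideal property. The only genuine obstacle throughout is not algebraic but the bookkeeping of which pairs sit in $\top$; once this is uniformly handled by Proposition~\mref{lem:locallinearmap} at the outset, the rest of the argument parallels the classical (fully defined) case almost word for word.
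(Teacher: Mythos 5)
Your proof is correct and follows essentially the same route as the paper's: the key step (\emph{ii})$\Rightarrow$(\emph{i}) is the identical computation with $a=a_1+a_2$, $b=b_1+b_2$, justified by the independence of the two projections, and the remaining parts (specialising the Rota--Baxter identity to $A_1$ and $A_2$, using Remark~\ref{rk:LRB} and Proposition~\ref{lem:locallinearmap} for the locality bookkeeping, and the direct kernel argument for the ideal equivalence) are minor, equally valid variants of what the paper does more tersely. Your explicit upfront verification that all nine pairs in $\{a,P(a),(\Id-P)(a)\}\times\{b,P(b),(\Id-P)(b)\}$ lie in $\top$ is a welcome clarification of a point the paper leaves implicit.
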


\begin{proof}
We write $\pi_1=P$ and $\pi_2=\Id -P$.
\smallskip

\noindent
((\mref{it:irbo1}) $\Longrightarrow$ (\mref{it:irbo2}))
It follows from the \loc Rota-Baxter identity (\mref{eq:rbo}) that $A_1=P(A)$ is
a sub-\loc algebra of $A$. Since $\Id-P$ is again an idempotent \loc Rota-Baxter operator, $A_2=(\Id-P)(A)$ is also a sub-\loc algebra of $A$. Finally, $P$  {and} $\Id-P$ are independent \loc maps as a consequence of Definition \mref{defn:lrba}.

\smallskip

\noindent
((\mref{it:irbo2}) $\Longrightarrow$ (\mref{it:irbo1}))
Since $\pi_1$ and $\pi_2=\Id-\pi_1$ are \loc Rota-Baxter operators of weight $-1$ at the same time in view of Remark~\mref{rk:LRB}, we only need to verify that $\pi_1$ is a \loc Rota-Baxter operator of weight $-1$:
\begin{equation}\mlabel{eq:rbeminus}
\pi_1({ a})\,\pi_1({ b})+\pi_1(a\,b)= \pi_1(\pi_1({ a})\,{ b})+ \pi_1({ a}\, \pi_1({ b}))\quad \tforall ({ a}, { b})\in \top.
\end{equation}

Write $a=a_1+a_2$ and $b=b_1+ b_2$. Since the projections $\pi_i, i=1,2,$ are independent \loc maps, it follows that $\{a_1,a_2\} \top \{b_1,b_2\}$. Thus every term in
$$ ab = a_1b_1 + a_1b_2+a_2b_1 +a_2b_2$$
is well defined, with $a_1b_1\in A_1$ and $a_2b_2\in A_2$.
Then the left hand side of Eq.~(\mref{eq:rbeminus}) becomes
$$ a_1b_1+\pi_1(a_1b_1 +a_1b_2+a_2b_1 +a_2b_2)
=2a_1b_1+\pi_1(a_1b_2)+\pi_1(a_2b_1).$$
The right hand side of Eq.~(\mref{eq:rbeminus}) becomes
$$\pi_1(a_1b)+\pi_1(ab_1)=\pi_1(a_1b_1+a_1b_2)+\pi_1(a_1b_1+a_2b_1)
=\pi_1(a_1b_2)+\pi_1(a_2b_1)+2a_1b_1,$$
as needed.
Then the last statement follows from the remark after Definition~\mref{defn:localisedalgebra}.
\end{proof}

As an immediate consequence of Proposition~\mref{pp:merodecomp} and \mref{prop:multpi}, we obtain

\begin{coro} The  projection $\pi_+^Q$ onto $\calm _{\Q,+}$ along $\calm^Q_{\Q -}$ is \loc multiplicative and  $(\calm _{\Q},\pi_-^Q)$ is a \loc Rota-Baxter algebra.
\mlabel{co:merorbo}
\end{coro}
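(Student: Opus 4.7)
The plan is to apply Proposition \mref{prop:multpi} to the idempotent \loc linear operator $P := \pi_+^Q$ associated with the decomposition $\calm_\Q = \calm_{\Q,+} \oplus \calm_{\Q,-}^Q$ of \eqref{eq:merodecomp}, so that $A_1 = \ker(\Id - P) = \calm_{\Q,+}$ and $A_2 = \ker P = \calm_{\Q,-}^Q$. Both halves of the corollary will fall out of that proposition once its condition (ii) is verified.

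To verify (ii), I would first invoke Proposition \mref{pp:merodecomp}, which already delivers the fact that $\calm_{\Q,+}$ and $\calm_{\Q,-}^Q$ are \loc subalgebras of $\calm_\Q$. What remains is the mutual independence of $\pi_+^Q$ and $\pi_-^Q$ as \loc maps: given $(f,g) \in \perp^Q$ one must show $\pi_i^Q(f) \perp^Q \pi_j^Q(g)$ for all $i,j \in \{+,-\}$. This is the one step with genuine content. I would argue it by unwinding Definition \mref{de:meroindep}: the hypothesis $f \perp^Q g$ provides linear forms with orthogonal spans underlying $f$ and $g$, and by the construction of the decomposition \eqref{eq:merodecomp} via \cite[Corollary~4.7]{GPZ4}, each of $\pi_\pm^Q(f)$ and $\pi_\pm^Q(g)$ is built from linear forms drawn from the same respective orthogonal spans, whence pairwise orthogonality of the four components follows.

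Once independence is in hand, condition (ii) of Proposition \mref{prop:multpi} holds. The equivalence with condition (i), combined with Remark \mref{rk:LRB} which exchanges $P$ and $\Id - P$, yields that $\pi_-^Q = \Id - \pi_+^Q$ is a \loc Rota-Baxter operator of weight $-1$, so $(\calm_\Q, \pi_-^Q)$ is a \loc Rota-Baxter algebra. Since $\calm_{\Q,-}^Q$ is a \loc ideal by Proposition \mref{pp:merodecomp}, the last assertion of Proposition \mref{prop:multpi} then gives that $\pi_+^Q$ is \loc multiplicative, completing both halves of the corollary. The main obstacle is the mutual independence step; but this is essentially already implicit in Proposition \mref{pp:merodecomp}, which asserts that $\pi_+^Q$ is a \loc algebra homomorphism, so in practice the proof reduces to a short bookkeeping argument.
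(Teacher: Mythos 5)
Your proof is correct and follows exactly the route the paper intends: the paper derives Corollary~\mref{co:merorbo} as an immediate consequence of Propositions~\mref{pp:merodecomp} and~\mref{prop:multpi}, which is precisely the combination you use. Your explicit attention to the mutual independence of $\pi_+^Q$ and $\pi_-^Q$ (the one hypothesis of Proposition~\mref{prop:multpi}.(\mref{it:irbo2}) not literally stated in Proposition~\mref{pp:merodecomp}) is a welcome refinement of a step the paper leaves implicit.
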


\section{Locality for coalgebras and the  {Locality Conservation} Principle}
\mlabel{sec:lcoalg}

 We introduce the concept of a \loc coalgebra  which involves a suitable \loc tensor product. Between a \loc coalgebra and a \loc algebra, we consider \loc convolution and \loc convolution inverse when the \loc coalgebra is connected. We then prove our  Locality  {Conservation} Principle, showing that a locality (independent) relation is preserved by the renormalisation procedure \`a la Connes and Kreimer i.e.,  carried out by means of   algebraic Birkhoff factorisation in the coalgebra-algebraic context~\mcite{GPZ2}.

\subsection{\Loc tensor product}
We first give a \loc version of the tensor product by considering a relative generalisation of the \loc relation. As noted before, vector spaces and tensor products are taken over the base field $ K $ unless otherwise specified.

Let $V$ and $W$ be vector spaces and let $\top \subseteq V\times W$. For $X\subseteq V$ and $Y\subseteq W$, denote
$$ X^\top: =\{w\in W\,|\, X\top w\}, ^\top Y:=\{v\in V\,|\, v\top Y\}.$$
A {\bf relative \loc vector space} is a triple $(V,W,\top)$ where $V$ and $W$ are vector spaces and $\top:=V\times_\top W$ is a subset of $V\times W$ such that for any sets $X\subseteq V$ and $Y\subseteq W$, the sets $X^\top$ and $^\top Y$ are linear subspaces of $V$ and $W$ respectively.

\begin{ex} Given vector spaces $V$ and $W$, any subspaces $V_1\subset V$ and $W_1\subset W$ give rise to a relative \loc vector space $(V, W, \top)$ with $\top= (V_1\times W_1)\cup (\{0\}\times W)\cup (V\times \{0\})$.
\end{ex}
  Given two vector spaces (resp. a relative \loc vector space $(V,W,\top)$), define $I_{\mathrm{bilin}}$ (resp. $I_{\top,\mathrm{bilin}}$) to be the subspace of $  K   (V\times W)$ (resp. $  K   (V\times_\top W)$) spanned by the bilinear relations
\begin{equation}
\begin{array}{c}
(v_1+v_2,w_1)-(v_1,w_1)-(v_2,w_1), \ (v_1,w_1+w_2)-(v_1,w_1)-(v_1,w_2), \\
 (kv_1,w_1)-(v_1,kw_1), \ (kv_1,w_1)-k(v_1,w_1)
\end{array}
\mlabel{eq:bilinear}
\end{equation}
for all $v_1,v_2\in V, w_1,w_2\in W, k\in   K  $ (resp. such that the pairs in Eq.~(\mref{eq:bilinear}) are in $\top$).

\begin{coex} Take $\{e_1, e_2, e_3, e_4\}$ to be the canonical orthonormal basis of $V:=\R^4$    and \begin{eqnarray*}V\times_\top V&= &\left(\left(V\times\{0\}\right)\cup \left(\{0\}\times V\right)\right)\,\cup \left(K \,(e_1+e_2)\times K\,e_3\right)\\
&  \cup &\left(K \,(e_1+2e_2)\times K\,e_4\right)\,\cup\,\left(K \,e_1 \times K\,(e_3+e_4)\right)\,\cup\,\left(K \, e_2\times K\,(e_3+2 e_4)\right).
\end{eqnarray*}
Then   \[ (-e_1-e_2, e_3)+ (-e_1-2e_2, e_4)+ (e_1, e_3+e_4)+ (e_2, e_3+2 e_4)\]   is an element of $K(V\times_\top W)\cap I_{\mathrm{bilin}}$ as can easily be seen   using the   defining relations for $I_{\mathrm{bilin}}$, but it is not in $I_{\top,\mathrm{bilin}}$.

\end{coex}

So in general
$$ K   (V\times_\top W) \cap I_{\mathrm{bilin}}\supsetneqq I_{\top,\mathrm{bilin}},
$$
and thus
$$ h: K   (V\times_\top W)/I_{\top,\mathrm{bilin}}\to V\otimes W
$$
is not injective. Therefore
$$ K (V\times_\top W)/I_{\top,\mathrm{bilin}}
$$
is not an appropriate candidate for a \loc tensor product, since the image of a \loc coproduct $\Delta$ should  {lie}  in $V\ot V$.
Instead, we set
\begin{equation}
V\otimes_\top W : = \mathrm{im} \,h\subset V\ot W,
\mlabel{eq:bt}
\end{equation}
which is also the image of the composition map
$ K(V\times_\top W )\to   K( V\times W)\to V\otimes W.$

\subsection{\Loc coalgebras}
We recall that
a coalgebra $(C,\Delta)$ over a field $   K   $ is  {\bf counital} if there is a map $\vep:C\to  K  $ such that $(\vep\otimes \Id _C)\Delta =(\Id _C\otimes \vep)\Delta=\Id _C$. It is {\bf  ($\Z_{\geq 0}$-)graded}   if
$$C=\bigoplus_{n\in\Z_{\geq 0}}C_n \quad \text{and} \quad \Delta(C_n)\subseteq\bigoplus_{p+q=n} C_p\otimes C_q, \quad \bigoplus_{n\geq 1} C_n\subseteq \ker \vep.$$
Thus $C=C_0+\ker \vep$.
Moreover a graded coalgebra is
called {\bf connected}  if  $C=C_0\oplus \ker \vep$. Consequently, $\vep$ restricts to a linear bijection $\vep: C_0\cong  K $ and $\ker \vep = \oplus_{n\geq 1}C_n$.

For the sake of simplicity, we shall drop the explicit mention of the grading and simply call such a coalgebra  a connected coalgebra.

The following definition is dual to that of a \loc   algebra. As in Eq.~(\mref{eq:bt}), let
$$C^{\otimes_\top n}=\underbrace{C\otimes_\top \cdots \otimes_\top C}_{n \text{ factors }}$$
denote the image of $  K   C^{_\top n}$ in $C^{\ot n}$.

\begin{defn}
Let $(C,\top)$ be a \loc vector space and let $\Delta:C\to C\otimes C$ be a linear map. $(C,\top, \Delta)$ is a {\bf \loc noncounital coalgebra} if  it satisfies the following two conditions
\begin{enumerate}
\item for any $U\subset C$ (compare with Eq.~\eqref{eq:semigrouploc})
\begin{equation}
\Delta (U^\top) \subset U^\top \otimes _\top U^\top.
\mlabel{eq:comag}
\end{equation}
In particular, $\Delta(C)\subseteq C\ot_\top C$;
\mlabel{colocalDelta} 	
\item     the following {\bf coassociativity}  holds:
$$(\Id _C\ot \Delta)\,\Delta= (\Delta\ot \Id _C)\, \Delta.$$
\mlabel{localDelta}
\end{enumerate}
\begin{itemize}
\item
If in addition, there is a {\bf counit}, namely a linear map $\vep: C\to   K  $ such that
  $ ({\rm Id}_C\otimes   \vep)\, \Delta=  (\vep\otimes {\rm Id}_C)\, \Delta= {\rm Id}_C$, then $(C,\top,\Delta,\vep)$ is called a {\bf \loc coalgebra.}
\item A {\bf connected \loc coalgebra} is a \loc coalgebra $(C,\top, \Delta)$ with a grading $C=\oplus_{n\geq 0} C_n$ such that, for any $U\subseteq C$,
\begin{equation}
\Delta(C_n\cap U^\top)\subseteq\bigoplus_{p+q=n} (C_p\cap U^\top)\otimes_\top (C_q\cap U^\top), \quad \bigoplus_{n\geq 1} C_n = \ker \vep.
\mlabel{eq:lconn}
\end{equation}

We denote  by $J$   the unique element of $C_0$ with $\vep (J)=1_ K $, giving $C_0= K \, J$.
\end{itemize}
\mlabel{defn:colocalcoproduct}
\end{defn}

\begin{rk}
We note that whereas the conditions for a \loc algebra are weaker than those for an algebra, the conditions for a \loc coalgebra are stronger than those for a coalgebra. In particular, a \loc coalgebra is a coalgebra and a connected \loc coalgebra is a connected coalgebra.
\mlabel{rk:conn}
\end{rk}

A natural example of \loc coalgebra is given by the coalgebra $\Q \bfcf$ of lattice cones in~\mcite {GPZ2}. Let $(C, \Lambda _C)$ be a lattice cone, $(F, \Lambda _F)$ a face of $(C, \Lambda _C)$, which we denote by $(F, \Lambda _F)<(C, \Lambda _C)$. The transverse cone $t((C, \Lambda _C),(F, \Lambda _F))$ is the orthogonal projection of $(C, \Lambda _C)$ to the orthogonal subspace of the subspace spanned by $F$. Then the coproduct $\Delta(C, \Lambda _C)$ of $(C, \Lambda _C)$ is defined by
\begin{equation}
\Delta ((C, \Lambda _C))=\sum _{F<C} t((C, \Lambda _C),(F, \Lambda _F))\otimes (F, \Lambda _F).
\mlabel{eq:conecoprod}
\end{equation}
Since $t((C, \Lambda _C),(F, \Lambda _F)) \perp ^Q (F, \Lambda _F)$ by definition, the quadruple $(\Q \bfc, \perp ^Q, \Delta, \vep)$ is a \loc coalgebra with the \loc counit given by the linear extension of
the map
$$ \vep: \bfcf \to \Q , (C, \Lambda_C)\mapsto \left\{\begin{array}{ll} 1, & (C, \Lambda_C)=(\{0\},\{0\}), \\ 0, & (C, \Lambda_C)\neq (\{0\},\{0\}). \end{array} \right . $$
Further the connectedness conditions in Eq.~\eqref{eq:lconn} are satisfied. Thus we have proved
\begin{lem}
$(\Q \bfc, \perp^Q, \Delta, \vep)$ with the grading $\Q \bfc = \oplus_{n\geq 0}\Q \bfc_n$ from Eq.~\eqref{eq:conegrad} is a connected \loc coalgebra.
\mlabel{lem:conecograd}
\end{lem}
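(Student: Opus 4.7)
The plan is to verify the axioms of Definition~\mref{defn:colocalcoproduct} together with the connectedness condition of Eq.~\eqref{eq:lconn} for the data at hand. The \loc vector space structure on $(\Q\bfc,\perp^Q)$ has already been recorded in Example~\mref{ex:conesp}, and coassociativity of $\Delta$ together with the counit axiom for $\vep$ are established in~\mcite{GPZ2} for the underlying ordinary coalgebra and require no modification. Thus the genuinely new verifications are (a) the locality compatibility $\Delta(U^\top)\subseteq U^\top\otimes_\top U^\top$ of Eq.~\eqref{eq:comag}, and (b) the connected-grading conditions of Eq.~\eqref{eq:lconn}.

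For (a), fix $U\subseteq \Q\bfc$ and a generating lattice cone $(C,\Lambda_C)\in U^\top$, so that $Q(x,y)=0$ for every $x\in \Lambda_C$ and every $y$ in the lattice of any basis element $(D,\Lambda_D)$ occurring in an element of $U$. For a face $F<C$ with lattice $\Lambda_F$, one has $\mathrm{span}(\Lambda_F)=\mathrm{span}(F)\subseteq \mathrm{span}(C)=\mathrm{span}(\Lambda_C)$, so $y\perp^Q \Lambda_C$ upgrades by linearity to $y\perp^Q \mathrm{span}(F)\supseteq \Lambda_F$, whence $(F,\Lambda_F)\in U^\top$. For the transverse cone $t((C,\Lambda_C),(F,\Lambda_F))$, whose lattice is generated by the orthogonal projections of vectors of $\Lambda_C$ onto $\mathrm{span}(F)^{\perp^Q}$, decompose $x=x'+x''$ with $x'\in \mathrm{span}(F)$ and $x''\in \mathrm{span}(F)^{\perp^Q}$. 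Then
\[
\langle y,x''\rangle = \langle y,x\rangle - \langle y,x'\rangle = 0-0 = 0,
\]
using $y\perp^Q \Lambda_C$ for the first term and $y\perp^Q \mathrm{span}(F)$ for the second. Hence $t((C,\Lambda_C),(F,\Lambda_F))\in U^\top$. Combined with the built-in orthogonality $t((C,\Lambda_C),(F,\Lambda_F))\perp^Q (F,\Lambda_F)$, each summand of $\Delta((C,\Lambda_C))$ lies in $U^\top\otimes_\top U^\top$, and the claim follows by linearity.

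For (b), the component $\Q\bfc_0$ is one-dimensional, spanned by $J=(\{0\},\{0\})$, and $\vep$ sends $J$ to $1\in K$ while killing all higher-dimensional generators, so $\Q\bfc=\Q J\oplus \ker\vep$ with $\ker\vep=\bigoplus_{n\geq 1}\Q\bfc_n$. Whenever $\dim C=n$ and $\dim F=k$, the transverse cone has dimension $n-k$, which gives $\Delta(\Q\bfc_n)\subseteq\bigoplus_{p+q=n}\Q\bfc_p\otimes \Q\bfc_q$. Intersecting this graded decomposition piece by piece with the refined inclusion supplied by (a) produces the full locality-compatible grading condition of Eq.~\eqref{eq:lconn}.

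The main obstacle is step (a), and within it the transfer of the \loc relation from the original cone to the transverse cone: orthogonality is automatic for the face component, but for the transverse component one must upgrade lattice-wise orthogonality to subspace-wise orthogonality, which is precisely what allows the projection away from $\mathrm{span}(F)$ to preserve orthogonality with each $y$. Everything else in the argument is bookkeeping about grading and is routine once this locality-preservation property of the transverse-cone construction has been isolated.
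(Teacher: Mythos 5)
Your proof is correct and follows the same route as the paper, whose own argument simply observes that $t((C,\Lambda_C),(F,\Lambda_F))\perp^Q(F,\Lambda_F)$ holds by construction and asserts that the remaining conditions are satisfied. Your explicit verification that both the face and the transverse cone of a cone in $U^\top$ remain in $U^\top$ --- via the decomposition $x=x'+x''$ along $\mathrm{span}(F)$ and its $Q$-orthogonal complement --- together with the dimension count $\dim t((C,\Lambda_C),(F,\Lambda_F))=\dim C-\dim F$ for the grading, supplies exactly the details the paper leaves implicit, and it is carried out correctly.
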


Let us list a few useful general properties of \loc coalgebras.
\begin{lem}
Let $(C,\top,\Delta)$ be a \loc coalgebra.
\begin{enumerate}
\item For any $n\geq 2$ and $0\leq i \leq n$,
\begin{equation} \mlabel{eq:gcoasso}
\id _C^{\ot i} \ot \Delta \ot \id _C^{\ot (n-i-1)}: C^{\ot _\top n} \to C^{\ot_\top (n+1)}
\end{equation}
\mlabel{it:coalgId}
\item
We have
$(\Delta\ot \Delta)(C\otimes_\top C)\subseteq C^{\ot_\top 4}$;
\mlabel{it:coalgmap1}
\item
We have
$(\Delta\times \Delta)(C\times_\top C)\subseteq (C\otimes_\top C)\times _\top (C\otimes_\top C)$;
\mlabel{it:coalgmap3}
\item
For any \loc linear map $\phi$ independent of $\id$, $n\geq 2$ and $0\leq i \leq n$, we have
$\id^{\ot i}\ot \phi\ot \id^{\ot (n-i-1)}: C^{\ot_\top n} \to C^{\ot_\top n}$. \mlabel{it:coalgmap2}
\end{enumerate}
 \mlabel{lem:coalgmap}
 \end{lem}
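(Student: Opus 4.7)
The common engine for all four parts is the coproduct locality condition of Eq.~\eqref{eq:comag}, namely $\Delta(U^\top) \subseteq U^\top \otimes_\top U^\top$ for every $U \subseteq C$, combined with the fact that each $C^{\otimes_\top n}$ is by construction the image in $C^{\otimes n}$ of the free module on $C^{_\top n}$. Consequently every element of $C^{\otimes_\top n}$ can be written as a finite sum of decomposable tensors $x_1 \otimes \cdots \otimes x_n$ with $x_j \top x_k$ whenever $j \neq k$, and since all the maps in question are linear it suffices to verify each assertion on such a decomposable tensor.

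For part (\mref{it:coalgId}), fix a decomposable element $x_1 \otimes \cdots \otimes x_n \in C^{\otimes_\top n}$ and set $U = \{x_1, \ldots, x_n\} \setminus \{x_{i+1}\}$. The pairwise locality of the tuple gives $x_{i+1} \in U^\top$, so Eq.~\eqref{eq:comag} supplies a decomposition $\Delta(x_{i+1}) = \sum_k y_k \otimes z_k$ with $y_k, z_k \in U^\top$ and $y_k \top z_k$. Then
\[
x_1 \otimes \cdots \otimes x_i \otimes \Delta(x_{i+1}) \otimes x_{i+2} \otimes \cdots \otimes x_n = \sum_k x_1 \otimes \cdots \otimes x_i \otimes y_k \otimes z_k \otimes x_{i+2} \otimes \cdots \otimes x_n,
\]
and each summand is a pairwise \loc tuple: the old pairs $(x_j, x_l)$ are unchanged, pairs of the form $(x_j, y_k)$ or $(x_j, z_k)$ lie in $\top$ because $y_k, z_k \in U^\top$, and $(y_k, z_k) \in \top$ by construction. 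Hence the result belongs to $C^{\otimes_\top (n+1)}$.

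Parts (\mref{it:coalgmap1}) and (\mref{it:coalgmap3}) follow at once. For a decomposable $x \otimes y \in C \otimes_\top C$, two successive applications of (\mref{it:coalgId})---first $\Delta \otimes \id$ into $C^{\otimes_\top 3}$, then $\id \otimes \id \otimes \Delta$ into $C^{\otimes_\top 4}$---produce $\Delta(x) \otimes \Delta(y) \in C^{\otimes_\top 4}$, which gives (\mref{it:coalgmap1}). Part (\mref{it:coalgmap3}) is then a rewording of the same fact, the natural \loc relation on $C \otimes_\top C$ being the one that declares $X \top Y$ precisely when $X \otimes Y$ lies in $C^{\otimes_\top 4}$.

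For part (\mref{it:coalgmap2}), take a decomposable tensor $x_1 \otimes \cdots \otimes x_n \in C^{\otimes_\top n}$ and insert $\phi$ in slot $i+1$. For each $j \neq i+1$ we have $x_j \top x_{i+1}$, hence by symmetry $x_{i+1} \top x_j$; the independence $\phi \top \id$ (cf.\ Remark~\mref{rk:mut-ind-loc}) then gives $\phi(x_{i+1}) \top x_j$, and symmetry of $\top$ once more yields $x_j \top \phi(x_{i+1})$. The remaining pairs are untouched, so the tuple $(x_1, \ldots, x_i, \phi(x_{i+1}), x_{i+2}, \ldots, x_n)$ lies in $C^{_\top n}$ and its tensor in $C^{\otimes_\top n}$. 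The only delicate point throughout is that $C^{\otimes_\top n}$ is defined as an \emph{image} in $C^{\otimes n}$ rather than as a quotient, so one cannot appeal to a universal property and must instead exhibit explicit decompositions with pairwise independent factors; Eq.~\eqref{eq:comag} is precisely the tool producing them.
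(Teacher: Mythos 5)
Your arguments for parts (\ref{it:coalgId}), (\ref{it:coalgmap1}) and (\ref{it:coalgmap2}) are correct and essentially identical to the paper's: the same choice $U=\{x_j\mid j\neq i+1\}$ with Eq.~\eqref{eq:comag}, the same factorisation $\Delta\ot\Delta=(\Delta\ot\id)(\id\ot\Delta)$, and the same use of $\phi\top\id$.

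The gap is in part (\ref{it:coalgmap3}). You dispose of it by \emph{declaring} the locality relation on $C\otimes_\top C$ to be ``$X\top Y$ iff $X\otimes Y\in C^{\ot_\top 4}$'', under which (\ref{it:coalgmap3}) would indeed be a restatement of (\ref{it:coalgmap1}). But that is not the relation the statement refers to, and it is a priori strictly weaker. The relation $(X,Y)\in (C\otimes_\top C)\times_\top(C\otimes_\top C)$ that the paper needs is the existence of \emph{simultaneous} decompositions $X=\sum_i d_i\ot e_i$ and $Y=\sum_j f_j\ot g_j$ with every quadruple $(d_i,e_i,f_j,g_j)$ in $C^{_\top 4}$. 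Such decompositions certainly force $X\otimes Y\in C^{\ot_\top 4}$, but the converse is far from clear: knowing that the pure tensor $X\otimes Y$ admits \emph{some} expression as a sum of pairwise-local $4$-fold pure tensors does not let you recover compatible expansions of $X$ and $Y$ separately. And it is precisely the stronger, simultaneous form that is used downstream, e.g.\ in the proof of Proposition~\ref{prop:inversemap}, where one picks $\Delta(c_1)=\sum_i d_i\ot e_i$ and $\Delta(c_2)=\sum_j f_j\ot g_j$ with $(d_i,e_i,f_j,g_j)\in C^{_\top 4}$ before applying $\phi\ot\psi$ termwise.

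The paper's proof of (\ref{it:coalgmap3}) is therefore a separate two-step application of Eq.~\eqref{eq:comag} rather than a corollary of (\ref{it:coalgmap1}): for $(c_1,c_2)\in C\times_\top C$ one has $c_2\in\{c_1\}^\top$, so $\Delta(c_2)=\sum_j c_{2,(1)}\ot c_{2,(2)}$ with $c_{2,(1)}\top c_{2,(2)}$ and both components in $\{c_1\}^\top$; then $c_1\in\{c_{2,(1)},c_{2,(2)}\}^\top$, and applying Eq.~\eqref{eq:comag} to this polar set gives $\Delta(c_1)=\sum_i c_{1,(1)}\ot c_{1,(2)}$ with all four components pairwise in $\top$. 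You should replace your one-line reduction by this argument (or prove the equivalence of the two relations, which you have not done).
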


\begin{proof}
(\mref{it:coalgId}) Let $n\geq 2$ and $1\leq i\leq n$ be given. By the definition of $C^{\ot_\top n}$, any of its elements is a sum of pure tensors $c_1\ot \cdots \ot c_n$ with $(c_1,\cdots,c_n)\in C^{_\top n}$. Let $U=\{c_j\,|\, j\not =i+1\}$. Then $c_{i+1}\in U^\top$,
so there exist $(d_1, e_1), \cdots, (d_k,e_k)\in U^\top\times _\top U^\top$, such that
$$\Delta (c_{i+1})=\sum_{ \ell} d_\ell \otimes e_\ell.
$$
Now
$$(\id _C^{\ot i} \ot \Delta \ot \id _C^{\ot (n-i-1)})(c_1\ot \cdots \ot c_n)=\sum_{ \ell} c_1\ot \cdots c_i \otimes d_\ell\ot e_\ell\ot c_{i+2}\ot \cdots \ot c_n,
$$
and $c_1\ot \cdots c_i \otimes d_\ell\ot e_\ell\ot c_{i+2}\ot \cdots \ot c_n\in C^{\ot_\top (n+1)}$.

\noindent (\mref{it:coalgmap1}) Since $(\Delta\ot \Delta)=(\Delta\ot \id)(\id \ot \Delta)$, from Eq.~(\mref{eq:gcoasso}) we obtain
\begin{eqnarray*}
(\Delta\ot \Delta)(C\otimes_\top C)&=& (\Delta\ot \id)(\id \ot \Delta) (C\otimes_\top C)\\
&\subseteq & (\Delta\ot \id)(C\otimes_\top C \otimes_\top C)\\
&\subseteq & C\otimes_\top C\otimes_\top C\otimes_\top C.
\end{eqnarray*}

\noindent (\mref{it:coalgmap3})
Let $(c_1,c_2)\in C\times_\top C$. Then $c_2\in \{c_1\}^\top$. So by Eq.~(\mref{eq:comag}), $\Delta(c_2)=\sum_{(c_2)} c_{2,(1)}\ot c_{2,(2)}$ with $c_{2,(1)}\top c_{2,(1)}$ and $\{c_{2,(1)}, c_{2,(2)}\}\subseteq \{c_1\}^\top.$ Thus $c_1 \in \{c_{2,(1)},c_{2,(2)}\}^\top$. By Eq.~(\mref{eq:comag}) again, $\Delta(c_1)=\sum_{(c_1)} c_{1,(1)}\ot c_{2,(2)}$ with $c_{1,(1)}\top c_{1,(2)}$ and $\{c_{1,(1)},c_{1,(2)}\} \subseteq \{c_{2,(1)},c_{2,(2)}\}^\top.$ This shows that $(c _{1,(1)},c _{1,(2)}, c _{2,(1)}, c _{2,(2)})$ is in $C^{_\top 4}$ and hence $ \big((c _{1,(1)}\ot c _{1,(2)}), (c _{2,(1)}\ot c _{2,(2)})\big)$ is in $(C\otimes_\top C)\times_\top (C\otimes_\top C)$.
\smallskip

\noindent
(\mref{it:coalgmap2}) Again any element of $C^{\ot_\top n}$ is a sum of pure tensors $c_1\ot \cdots \ot c_n$ with $(c_1,\cdots,c_n)\in C^{_\top n}$. Thus $(c_1,\cdots,c_{i-1}, \phi(c_i),c_{i+1},\cdots,c_n)$ is in $C^{_\top n}$. This is what we want since $(\id^{\ot i}\ot \phi\ot \id^{\ot (n-i-1)})(c_1\ot\cdots\ot c_n)=
c_1\ot\cdots\ot c_{i-1}\ot \phi(c_i)\ot c_{i+1}\ot\cdots\ot c_n$.
\end{proof}

\begin{lem} \label{lem:reduced_coproduct}
Let $(C=\oplus_{n\geq 0} C_n,\top,\Delta)$ be a connected \loc coalgebra. Define the {\bf reduced coproduct}
$\tilde{\Delta}(c):= \Delta(c)- J\ot c-c\ot J$.
Recursively define
\begin{equation}
\tilde{\Delta}^{(1)}=\tilde{\Delta}, \quad \tilde{\Delta}^{(k)}:=(\id\ot \tilde{\Delta}^{(k-1)})\tilde{\Delta}, \ \  k\geq 2.
\mlabel{eq:redcoprod}
\end{equation}
\begin{enumerate}
\item
For $c\in \oplus_{n\geq 1}C_n$,
$\tilde \Delta (c)= \sum_{(c)} c'\ot c''$ with $\deg(c'), \deg(c'')>0$ and $(c',c'')\in C\times_\top C$;
\mlabel{it:conil-1}
\item
If in addition  $c\in U^\top $ for some $U\subset C$, then the above pairs
$(c', c'')$ are in $U^\top\times _\top U ^\top$;
\mlabel{it:conil0}
\item
$\tilde\Delta^{(k)}(x)$ is in $C  ^{\ot _\top (k+1)}$ for all $x\in  C, k\in\N$;
\mlabel{it:conil1}
\item
$\tilde \Delta^{(k)}(   C  _n)=\{0\}$ for all $k\geq n.$
\mlabel{it:conil2}
\end{enumerate}
\mlabel{lem:conil}
\end{lem}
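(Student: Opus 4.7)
\textbf{Proof plan for Lemma~\mref{lem:conil}.}

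Parts (i) and (ii) refine the standard decomposition $\Delta(c)=c\otimes J+J\otimes c+\tilde\Delta(c)$, valid in any connected graded coalgebra, by tracking how its summands sit inside $\otimes_\top$-components via Eq.~\eqref{eq:lconn}. Fix $c\in C_n\cap U^\top$ with $n\geq 1$; applying Eq.~\eqref{eq:lconn} yields a graded decomposition
\[
\Delta(c)=\sum_{p+q=n}\Delta_{p,q}(c),\qquad \Delta_{p,q}(c)\in (C_p\cap U^\top)\otimes_\top (C_q\cap U^\top).
\]
The counit axiom together with $C_0=K\,J$ forces $\Delta_{0,n}(c)=J\otimes c$ and $\Delta_{n,0}(c)=c\otimes J$, hence $\tilde\Delta(c)=\sum_{p,q\geq 1}\Delta_{p,q}(c)$. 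Expanding each summand into pure tensors gives pairs $(c',c'')\in (C_p\cap U^\top)\times_\top (C_q\cap U^\top)$ with $\deg c',\deg c''\geq 1$. Taking $U=\{0\}$ (so $U^\top=C$ by Remark~\mref{rk:locvecsp}) yields (i); the general case is (ii).

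For (iii) my approach is an induction on $k$ using the \emph{strengthened statement}: for every $U\subseteq C$ and every $c\in U^\top$, one has $\tilde\Delta^{(k)}(c)\in (U^\top)^{\otimes_\top(k+1)}$. The base case $k=1$ is precisely (ii). For the inductive step, write $\tilde\Delta(c)=\sum c'\otimes c''$ with $(c',c'')\in U^\top\times_\top U^\top$ by (ii), and observe that $c''\in U^\top\cap\{c'\}^\top=(U\cup\{c'\})^\top$. Applying the inductive hypothesis with $U':=U\cup\{c'\}$ places $\tilde\Delta^{(k-1)}(c'')$ inside $(U^\top\cap\{c'\}^\top)^{\otimes_\top k}$, so the pure-tensor expansions $c'\otimes d_1\otimes\cdots\otimes d_k$ have all $k+1$ entries in $U^\top$ and are pairwise \loc (the $d_i$'s mutually by induction, and $c'$ with each $d_i$ since $d_i\in\{c'\}^\top$). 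This yields $\tilde\Delta^{(k)}(c)\in (U^\top)^{\otimes_\top(k+1)}$, and specializing to $U=\{0\}$ gives (iii).

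Part (iv) reduces to a degree count. Iterating the inclusion $\tilde\Delta(C_n)\subseteq\bigoplus_{p+q=n,\,p,q\geq 1}C_p\otimes C_q$ recorded in the proof of (i) gives
\[
\tilde\Delta^{(k)}(C_n)\ \subseteq\ \bigoplus_{\substack{n_1+\cdots+n_{k+1}=n\\ n_i\geq 1}} C_{n_1}\otimes\cdots\otimes C_{n_{k+1}}.
\]
For $k\geq n$ one needs to express $n$ as a sum of $k+1>n$ strictly positive integers, which is impossible, so the index set is empty and $\tilde\Delta^{(k)}(C_n)=\{0\}$.

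The only genuinely delicate step is the enlargement $U\mapsto U\cup\{c'\}$ in the induction for (iii): one must propagate \emph{simultaneously} the independence of the new first factor $c'$ from the remaining $k$ tensor factors and the ambient constraint $d_i\in U^\top$, which is exactly why the statement must be strengthened with the auxiliary set $U$ rather than proved directly for $U=\{0\}$.
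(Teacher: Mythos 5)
Your proof is correct and follows essentially the same route as the paper: establish (ii) from the connectedness condition \eqref{eq:lconn} plus the counit identification of the $C_0$-components, obtain (i) as the case $U=\{0\}$, prove (iii) by induction on $k$ using the locality of $\Delta$, and get (iv) by the standard degree count. The paper dismisses (iii) as ``an easy induction''; your strengthened induction hypothesis quantified over $U$ (so that the new first tensor factor $c'$ can be adjoined to $U$ when recursing on $c''$) is precisely the detail needed to make that induction go through.
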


\begin{proof}
We only need to prove (\mref{it:conil0}) since then (\mref{it:conil-1}) is the special case when $U=\{0\}$.

Let $c\in C_n\cap U^\top$. By {Eq. \eqref{eq:lconn}},
we can write
$$ \Delta(c)=y\ot J + J\ot z + \sum_{(c)}c'\ot c''$$
with $y, z\in C_n$, $c', c''\in U^\top$ and each $c'\ot c''\in C_p \otimes _ \top C_q, p+q=n, p, q\geq 1$. Then by the same argument for a connected coalgebra~\cite[Theorem~2.3.3]{G2}, we obtain $y=z=x$. This proves (\mref{it:conil0}).

Then (\mref{it:conil1}) follows from an easy induction on $k$ by the \loc property of $\Delta$; while the proof of (\mref{it:conil2}) is similar to the case without a \loc structure~\cite[Proposition II.2.1]{M}.
\end{proof}

\subsection{Locality of the convolution product}

We show that the locality (independence) of linear maps are preserved under the convolution product.

\begin{lem}  Let $(C,\mtop_C )$ be a \loc coalgebra with counit $\vep_C: C\to  K   $. Let $(A,\mtop_A )$  be a \loc   algebra with unit $u_A:    K   \to A$.
The map  $e:=   u_A   \vep_{C}:C\to A$ is independent to any linear map $\phi: C\longrightarrow A$. In particular, the map  $e$ is a \loc linear map.
\mlabel{lem:mutuallyindIA}
\end{lem}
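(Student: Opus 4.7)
The plan is to exploit two structural facts: first, that $e$ factors through the one-dimensional subspace $K\cdot 1_A$, since $e(c)=\vep_C(c)\,1_A$; second, that $1_A$ is universally independent in $A$ by the unit axiom of a \loc monoid (recall $\{1_A\}^\top = A$ from Definition~\ref{defn:lsg}.(\ref{defn:partial monoid})). Combining these with the fact that polar sets in a \loc vector space are linear subspaces should give independence of $e$ from any linear $\phi:C\to A$ essentially for free.

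Concretely, I would take an arbitrary pair $(c_1,c_2)\in\mtop_C$ and any linear map $\phi:C\to A$. Applying the unit axiom of $(A,\mtop_A, m_A, 1_A)$ yields $1_A\in\{\phi(c_2)\}^\top$. By Definition~\ref{defn:lvs}, the polar set $\{\phi(c_2)\}^\top$ is a linear subspace of $A$, and hence it contains every scalar multiple of $1_A$; in particular it contains $\vep_C(c_1)\,1_A = e(c_1)$. Therefore $(e(c_1),\phi(c_2))\in\mtop_A$, which by Definition~\ref{defn:localmap} means $e\top\phi$.

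For the second assertion, linearity of $e=u_A\circ \vep_C$ is immediate since both $\vep_C$ and $u_A$ are linear. Independence of $e$ from itself is the special case $\phi=e$ of the statement just proved, which by definition means that $e$ is a \loc linear map.

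There is no real obstacle here, only a small conceptual point to keep straight: the symmetry of $\top_A$ is what allows us to interpret the containment $1_A\in\{\phi(c_2)\}^\top$ as the statement that $\phi(c_2)$ lies in the linear polar subspace $\{1_A\}^\top=A$, and the linearity of this polar subspace is precisely the \loc vector space axiom that carries the scalar $\vep_C(c_1)$ through. Everything else is unpacking definitions.
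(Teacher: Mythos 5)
Your argument is correct and is essentially the paper's own proof, just unpacked pointwise: the paper simply notes that $\mathrm{im}\, e = K\cdot 1_A \subseteq A^{\mtop_A}$, which follows from $1_A \in A^{\mtop_A}$ together with the fact that $A^{\mtop_A}$ is a linear subspace --- exactly the two ingredients (unit axiom plus linearity of polar sets) you combine. No further comment needed.
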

\begin{proof} This is because   $\mathrm{im}\, e=  K  \cdot 1 _A \subset A ^{\mtop_A}$ as we can see from Remark \mref{rk:algebraunit}.
\end{proof}

Now we give a general result.
\begin{prop}
 Let $\left( C ,\mtop_C , \Delta \right)$   be a \loc coalgebra and let $\left(A ,\mtop_A,m_A\right) $ be a \loc algebra. Let ${\mathcal L}:=\text{\tloc}\Hom(C,A)$  be the space of \loc linear maps.
  Define
    $$ \mtop_{\mathcal L}:=\left\{ (\phi,\psi)\in \call\times \call\,|\, (\phi\times \psi)(C\times_\top C)\subseteq A\times_\top A\right\}.$$
\begin{enumerate}
\item
For $(\phi,\psi)\in \mtop_\call$, define the {\bf convolution product} of $\phi$ and $\psi$ by
\begin{equation}
\phi\star \psi: C \stackrel{\Delta_C}{\longrightarrow} C\otimes_\top C \stackrel{\phi\ot \psi}{\longrightarrow} A\otimes_\top A \stackrel{m_A}{\longrightarrow} A.
\mlabel{eq:lconv}
\end{equation}
Then $\phi\star \psi$ is a \loc linear map and
the triple $({\mathcal L}, \mtop_{\mathcal L}, \star)$ is a \loc algebra.
\mlabel{it:conv1}
\item
If moreover    $C$ is connected  then
$$\calg \call:=\{\phi\in \call\ | \ \phi (J)=1_A\}$$ is  a \loc  group for the convolution product.
\mlabel{it:conv2}
\item
Under this same assumption, we have \[(\phi, \psi)\in \mtop_{\mathcal L}\cap\left(\calg \call\times \calg \call\right) \Longrightarrow (\phi^{\star k}, \psi^{\star l})\in \mtop_{\mathcal L}\cap \left(\calg \call\times \calg \call\right) \quad \tforall k, l\in \Z.\]	
\mlabel{it:conv3}
\end{enumerate}
\mlabel{prop:inversemap}
 	\end{prop}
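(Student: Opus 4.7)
The plan is to proceed in the order (i), (ii), (iii), building progressively.

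For (i), I would first check that the composition defining $\phi \star \psi$ makes sense: by the \loc coproduct property Eq.~\mref{eq:comag}, $\Delta(C) \subseteq C \otimes_\top C$, and since $(\phi,\psi)\in \mtop_{\call}$, the map $\phi \ot \psi$ sends $C \otimes_\top C$ into $A \otimes_\top A$, on which $m_A$ is defined. To verify that $\phi \star \psi$ is itself a \loc linear map, take $U \subseteq C$ and $c \in U^\top$; by Eq.~\mref{eq:comag}, $\Delta(c) \in U^\top \otimes_\top U^\top$, and then locality of $\phi$ and $\psi$ together with the \loc algebra structure on $A$ (cf.~Eq.~\mref{eq:semigrouploc}) forces $(\phi \star \psi)(c) \in (\phi \star \psi)(U)^\top$. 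Symmetry of $\mtop_{\call}$ is inherited from symmetry of $\mtop_A$; \loc bilinearity of $\star$ follows from linearity of $\Delta, \phi, \psi, m_A$; associativity follows from coassociativity of $\Delta$ together with associativity of $m_A$, applied on $C^{\ot_\top 3}$ via Lemma~\mref{lem:coalgmap}.(\mref{it:coalgId}). The unit is $e = u_A \vep_C$, which by Lemma~\mref{lem:mutuallyindIA} is independent of every linear map, hence lies in $\call^\top$, and satisfies $e \star \phi = \phi \star e = \phi$ by the counit axiom.

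For (ii), connectedness of $C$ allows a recursive construction of a \loc convolution inverse. Set $\phi^{\star -1}(J) = 1_A$; for $c$ homogeneous of degree $n \geq 1$ put
\[ \phi^{\star -1}(c) = -\phi(c) - \sum_{(c)} \phi^{\star -1}(c') \,\phi(c''), \]
where $\tilde \Delta(c) = \sum_{(c)} c' \ot c''$ with $\deg c', \deg c'' < n$ (Lemma~\mref{lem:conil}.(\mref{it:conil-1})), so that the recursion terminates by Lemma~\mref{lem:conil}.(\mref{it:conil2}). The central verification is to prove three statements \emph{jointly} by induction on degree: that the formula is well-defined (each product $\phi^{\star -1}(c')\phi(c'')$ lies over $\mtop_A$), that $\phi^{\star -1}$ is a \loc linear map (invoking Lemma~\mref{lem:conil}.(\mref{it:conil0}) to keep $c', c'' \in U^\top$ when $c \in U^\top$), and that $(\phi, \phi^{\star -1}) \in \mtop_{\call}$ (using Lemma~\mref{lem:coalgmap}.(\mref{it:coalgmap3})). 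Once these three are in hand, the identity $\phi \star \phi^{\star -1} = \phi^{\star -1} \star \phi = e$ follows by the standard connected-coalgebra computation. Coupled with (i), this shows that $\calg\call$ is a \loc group.

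For (iii) I would induct on $|k|$ and $|l|$, using (ii) to know that $\calg\call$ is closed under inverses. The case $k=0$ or $l=0$ reduces to $e$, which is independent of every map by Lemma~\mref{lem:mutuallyindIA}. For $k \geq 1$, writing $\phi^{\star(k+1)} = \phi \star \phi^{\star k}$ and using Lemma~\mref{lem:coalgmap}.(\mref{it:coalgmap3}) applied to $(\Delta \times \Delta)(C \times_\top C) \subseteq (C \ot_\top C) \times_\top (C \ot_\top C)$, together with the induction hypothesis and the \loc algebra property of $A$, propagates independence with $\psi^{\star l}$ to $\phi^{\star(k+1)}$; a symmetric argument handles $l$. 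For negative exponents, the recursive formula for $\phi^{\star -1}$ expresses its values as polynomial combinations of values of $\phi$ on the iterated components of $\tilde\Delta^{(j)}(c)$, which lie in $C^{\ot_\top (j+1)}$ by Lemma~\mref{lem:conil}.(\mref{it:conil1}), so the induction transmits independence with $\psi^{\star l}$.

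The main obstacle I anticipate is the joint induction in (ii): one must track simultaneously that $\phi^{\star -1}$ is \loc as a map $C \to A$ and that $(\phi, \phi^{\star -1})$ lies in $\mtop_{\call}$, two statements which feed into one another because the very definition of the recursive formula uses a product in $A$ that is only defined on $\mtop_A$. Once that bookkeeping is set up via Lemmas~\mref{lem:coalgmap} and \mref{lem:conil}, the remainder of the argument closely parallels the classical proof of convolution invertibility in connected coalgebras.
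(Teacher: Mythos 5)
Your proposal is correct and follows essentially the same route as the paper: the same reduction of well-definedness and locality of $\phi\star\psi$ to Lemma~\mref{lem:coalgmap}, the same recursive construction of $\phi^{\star(-1)}$ with an induction on degree that simultaneously tracks well-definedness, locality of $\phi^{\star(-1)}$, and $(\phi,\phi^{\star(-1)})\in\mtop_\call$ (the paper splits this into two successive inductions and places $\phi^{\star(-1)}$ on the second tensor factor rather than the first, which is immaterial), and the same inductive propagation for part~(\mref{it:conv3}). The only item the paper checks explicitly that your sketch leaves implicit is the compatibility axiom~\eqref{eq:semigrouploc} at the level of $\call$ itself, namely that for $U\subseteq\call$ and independent $\psi,\chi\in U^{\mtop_\call}$ one has $\psi\star\chi\in U^{\mtop_\call}$; this is needed for $(\call,\mtop_\call,\star)$ to be a \loc algebra and is proved by the same $C^{_\top 3}\to A^{_\top 3}$ argument you use elsewhere.
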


\begin{proof}
(\mref {it:conv1}) For $(\phi, \psi)\in \mtop_{\mathcal L}$, since $(\phi\times \psi)(C\times_\top C)\subseteq A\times_\top A$, we have
 $$(\phi\ot \psi)(C\otimes_\top C)\subseteq A\otimes_\top A.
 $$
Hence the composition in Eq.~(\mref{eq:lconv}) is well-defined, giving a well-defined convolution product.

 We next verify that $\phi\star\psi$ is a \loc linear map. For $ c  _1\mtop_C\,  c _2$,
 by Lemma~\mref{lem:coalgmap}.(\mref{it:coalgmap3}), there are finitely many $(d_i,e_i), (f_j,g_j)\in C\times_\top C$ with $(d_i, e_i, f_j, g_j)\in C^{_\top 4}$, such that
$$\Delta ( c  _1)=\sum_{i} d  _{i}\ot  e  _{i} \quad \text{ and} \quad \Delta ( c  _2)=\sum_{j}   f_{j}\ot  g  _{j}.
 $$
 Then
$$
 \phi\star\psi( c _1)=\sum_{i} \phi( d_i)\psi( e_i) \quad
 \text{and }\quad
 \phi\star\psi( c _2)=\sum_{j} \phi( f_j)\psi( g_j).
$$
From $(\phi, \psi)\in \mtop_{\mathcal L}$, we obtain
$\left(\phi( d_i),\psi( e_i),\phi( f_j),\psi( g_j)\right)\in A^{_\top 4}$. So
$$\big(\sum_i \phi( d_i)\psi( e_i),\sum_j \phi( f_j)\psi( g_j)\big)$$
is in $A\times_\top A$ and
thus $(\phi\star\psi( c _1))\mtop_A(\phi\star\psi( c _2))$.

Thus we are left to verify the axioms for a \loc semigroup: the closeness of $U^{\mtop_{\mathcal L}}$ under the convolution product for every $U\subseteq {\mathcal L}$ and the associativity.

Let $\psi$ and $\chi$ be independent \loc linear maps in $U^{\mtop_\call}$ and let $\phi$ be in $U$. Then
$\phi,\psi,\chi$ are pairwise independent. Therefore
$$\phi\times \psi\times \chi: C^{_\top 3} \longrightarrow A^{_\top 3}$$
is well defined.
For $(c_1,c_2)\in C^{_\top 2}$, that is $c_1\in \{c_2\}^\top$, there exist
$(d_1, e_1), \cdots, (d_k,e_k)\in \{c_2\}^\top \times _\top \{c_2\}^\top $, such that
$$\Delta ( c  _1)=\sum_{i} d  _{i}\ot  e  _{i},
 $$
 with
$(d_i, e_i, c_2)\in C^{_\top 3}$.
Then
$(\psi( d_i), \chi(e_i), \phi (c_2))\in A^{_\top 3}$ and hence $(\psi( d_i)\chi( e_i))\mtop_A \phi (c_2).$
So we have
$(\psi\star\chi)( c _1)\mtop_A \phi (c_2),
$
which means
$ \psi\star \chi$ is in $\phi^{\mtop_{\mathcal L}}.$
Thus $\psi \star \chi \in U^{\mtop_\call}$. This verifies the first axiom. The associativity of $\star$ follows from the associativity of $m$ and coassociativity of $\Delta$ as in the classical case.
\smallskip

\noindent
(\mref{it:conv2}) We now assume that $ C $  is a connected   \loc coalgebra.
For a \loc linear map $\phi : C \to A$, we now prove by induction on the degree of $ c  _1$ that the map
\begin{equation}
\phi ^{\star(-1)}( c  _1)=\left\{\begin{array}{ll}
1_A, & c_1=J, \\
-\phi ( c _1)-\sum_{( c  _1)} \phi ( c  _1^\prime) \phi ^{\star(-1)}(  c  _1^{\prime \prime}), & c_1\in \ker \vep, \end{array} \right .
\mlabel{eq:phirec}
\end{equation}
is well defined, and that $c _1 \mtop_C\,  c$ implies $\phi ^{\star(-1)}( c  _1)\mtop_A \phi ( c ).$

{This} is trivial for degree $0$ since $\phi^{\star (-1)}(J)=1_A$. Assume for any $ c  _1\in C$ of degree $\le n$, $\phi ^{\star(-1)}( c  _1)$ is well defined, and for $c$ with $ c  _1 \mtop_C\,  c $,
$\phi ^{\star(-1)}( c _1)\mtop_A \phi ( c )$ holds.

Now for any $c_1$ of degree $n+1\ge 1$ with $ c  _1 \mtop_C\,  c  $, by Lemma~\mref {lem:conil}.(\mref{it:conil0}), we have
 $$\Delta ( c  _1)= c  _1\ot J+J\ot  c  _1+\sum _{( c  _1)} c  _1^\prime \ot  c  _1^{\prime \prime}
\ \text{ with }
 (c  _1',  c  _1^{\prime \prime}, c)\in C^{_\top 3}.
 $$
By the induction hypothesis, $\phi ^{\star(-1)}( c _1^{\prime\prime})$ is well defined, such that $\phi ^{\star(-1)}( c _1^{\prime\prime})\mtop_A \phi (c)$ and $\phi ^{\star(-1)}( c _1^{\prime\prime})\mtop_A \phi (c_1^\prime)$. Since $\phi$ is a \loc linear map, we also have
$\phi ( c  _1)\mtop_A \phi ( c  )$ and $ \phi ( c  _1^\prime )\mtop_A \phi ( c  ).$
Thus
$\phi ( c  _1^\prime) \phi ^{\star(-1)}(  c  _1^{\prime \prime})$ is well defined and
$(\phi ( c  _1^\prime) \phi ^{\star(-1)}(  c  _1^{\prime \prime}))\mtop_A \phi ( c ).
$
So, $\phi ^{\star(-1)}(  c  _1)$ is well defined and
$\phi ^{\star(-1)}( c  _1)\mtop _ A \phi ( c )$, which means $\phi \mtop_\call \phi ^{\star(-1)}$.

Again by induction on the degree of $c_1$, we now  prove that
$\phi ^{\star(-1)}$ is a \loc linear map by checking
\begin{equation}
\phi ^{\star(-1)}( c  _1)\mtop_A \phi ^{\star(-1)}( c  _2)\quad \tforall c_2\in C, c_1\mtop_C c_2,
\mlabel{eq:philoc}
\end{equation}
a fact which is obvious at degree 0. Assume that, for a given $n\geq 0$ and any $c_1$ of degree $\le n$
the equation holds. Consider $ c  _1 $ of degree $n+1\ge 1$. Since $c_1\mtop_C c_2$, we can choose
 $$\Delta ( c  _1)= c  _1\ot J+J\ot  c  _1+\sum _{( c  _1)} c  _1^\prime \ot  c  _1^{\prime \prime}
 $$
such that
$ \{c_1,c_1^\prime, c_1''\}\mtop_C\, c_2.$
From this we have $\{\phi ( c  _1), \phi ( c  _1'),
\phi ^{\star(-1)}( c  _1'')\} \mtop_A\, \phi ^{\star(-1)} ( c  _2).$
So Eq.~(\mref {eq:phirec}) gives
$\phi ^{\star(-1)}( c  _1)\mtop_A \phi ^{\star(-1)}( c  _2).$
Therefore, we conclude that $\calg \call $ is a \loc group with unit $u_A \varepsilon_C$ by Example \mref{lem:mutuallyindIA}.
\smallskip

\noindent
(\mref{it:conv3}) Similar inductions show that
$\phi\mtop_{\call}\psi$ implies $\phi \mtop_{\call} \psi^{\star (-1)}$ and $\phi^{\star (-1)}\mtop_{\call} \psi^{\star (-1)},$
so that $\phi^{\star k}\mtop_{\call} \psi^{\star l}$ for any $k,l\in \Z$.
 \end{proof}

\subsection{The   Locality  {Conservation} Principle}
\mlabel{ss:lpcoalg}
We now put together all the \loc structures we have obtained so far to provide an answer to the question addressed by the   Locality {Conservation} Principle proposed in Problem~\mref{pr:lpcoalg}. It is formulated as the preservation of locality by the algebraic Birkhoff factorisation in the coalgebra context in~\cite[Theorem~4.4]{GPZ2}.
See Section~\mref{ss:lphopf} for the formulation of the \Loc Product Conservation Principle built on the algebraic Birkhoff factorisation in the Hopf algebra context originated from Connes-Kreimer~\mcite{CK}.

\begin{theorem}
{\bf (Algebraic Birkhoff  factorisation, \loc coalgebra version) }
Let $\left( C ,\mtop_C , \Delta \right)$   be a connected \loc coalgebra, $C=\oplus_{n\geq 0} C _n$, $C _0= K  J$.
Let $\left(A,\mtop_A,\cdot \right) $ be a commutative \loc algebra with decomposition $A=A_1\oplus A_2$ as a vector space satisfying the following
\begin{quote}
{\bf (Basic Assumption)}
the linear projections $\pi_i$ onto $A_i$ along $A_{\hat{i}}, \{\hat{i}\}:=[2]\backslash \{i\}, i=1,2,$  are independent \loc linear maps and $1_A$ is in $A_1$.
\end{quote}
Let
\begin{equation*}
 \phi: \left(C , \mtop_C \right)\longrightarrow   \left(A,\mtop_A\right)
\end{equation*}
be a \loc linear map such that $\phi (J)=1_A$.
Then there are unique independent \loc linear maps $\phi_i: C\to K+A_i$ with $\phi_i(J)=1_A$ and $\phi_i (\ker \vep)\subseteq A_i, i=1,2, $ such that

\begin{equation}
\phi= \phi_1^{\star (-1)} \star \phi_2.
\mlabel{eq:lcabf}
\end{equation}
The map $(\phi_1)^{\star(-1)}$ is a \loc linear map and $\phi_1\mtop_\call \{\phi, \phi_2\}$, $\phi_1^{\star(-1)}\mtop_\call \{\phi _1,  \phi _2\}$.

\begin{enumerate}
\item
If in addition to the Basic Assumption, $A_1$ is a sub-\loc algebra of $A$, then $\phi_1^{\star (-1)}:C\to  K  + A_1$.
\mlabel{it:add-subal}

\item
If in addition to the Basic Assumption and Item~(\mref{it:add-subal}), $A_2$ is a \loc ideal of $A$, then $\phi_1^{\star (-1)}=\pi_1  \phi$
{and $\phi_2$ is recursively given by
 {
\begin{equation}
\phi_2(J)=1_A, \ \phi_2(c)=(\pi_2  \phi)(c)-\sum_{(c)}(\pi_1  \phi)(c')\phi_2(c'') \tforall c\in \ker \vep,
\mlabel{eq:phi2}
\end{equation}
with $\tilde\Delta(c)=\sum_{(c)}c'\ot c''$} the reduced coproduct defined in Lemma \ref{lem:reduced_coproduct}}.
\mlabel{it:add-ideal}
\end{enumerate}
If $\psi: \left(C , \mtop_C \right)\longrightarrow   \left(A,\mtop_A\right)
$
is also a \loc linear map independent of $\phi$ with $\psi(J)=1_A$, then $\phi_i$ and $\psi_j$ are independent for $i, j =1,2$.
\mlabel {thm:abflcoalg}
\end{theorem}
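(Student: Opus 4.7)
The plan is to bootstrap the classical coalgebra Birkhoff factorisation of Theorem~\mref{thm:abfcoalg} by running, alongside the degree induction that produces the factors $\phi_1,\phi_2$, a parallel induction that tracks \loc compatibility at every step. Existence and uniqueness of $\phi_1,\phi_2$ as linear maps satisfying~(\mref{eq:lcabf}) is granted by Theorem~\mref{thm:abfcoalg} (in view of Remark~\mref{rk:conn}, every \loc coalgebra is a coalgebra), so the genuinely new content is that these maps are \loc linear, that the asserted independence relations among $\phi,\phi_1,\phi_2$ (and $\psi$) hold, and that the explicit formulas under items~(\mref{it:add-subal}) and~(\mref{it:add-ideal}) follow.

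Recall the standard recursion: $\phi_i(J)=1_A$, and for $c\in\ker\vep$ of degree $n\geq 1$ the factors $\phi_1^{\star(-1)}$ and $\phi_2$ are determined from the reduced coproduct $\tilde\Delta$ of Lemma~\mref{lem:reduced_coproduct} together with the projections $\pi_1,\pi_2$. The decisive \loc ingredient is Lemma~\mref{lem:conil}.(\mref{it:conil0}): for $c\in U^\top$, each tensor factor of $\tilde\Delta(c)$ again lies in $U^\top\otimes_\top U^\top$. Thus whenever $c\mtop_C d$, the inductive hypothesis applied to $(c',d)$ and $(c'',d)$ with $\deg c',\deg c''<\deg c$ controls the independence of $\phi_i(c)$ from $\phi_j(d)$ and from $\phi(d)$, once one invokes: (a) the independence of $\pi_1\phi$ and $\pi_2\phi$ furnished by the Basic Assumption, (b) the fact that $X^\top\subseteq A$ is a linear subspace (Definition~\mref{defn:lvs}), so that the sums appearing in the recursion stay in the correct polar class, and (c) Proposition~\mref{prop:inversemap}.(\mref{it:conv1}), which ensures that the convolution products appearing in the recursion are themselves \loc linear.

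Concretely, I would carry through an induction on $n$ establishing simultaneously, for every $c$ of degree $\le n$ and every $d\in C$ with $c\mtop_C d$, that $\phi_i(c)\mtop_A \phi(d)$ and $\phi_i(c)\mtop_A \phi_j(d)$ for $i,j\in\{1,2\}$. The base case $n=0$ reduces to $\phi_i(J)=1_A\in A^{\mtop_A}$ (Remark~\mref{rk:algebraunit}) together with Lemma~\mref{lem:mutuallyindIA}. The inductive step invokes Lemma~\mref{lem:coalgmap}.(\mref{it:coalgmap3}) for compatibility of $\Delta$ with the \loc relation, and the closure of $X^\top$ under addition for the sum over $\tilde\Delta(c)$. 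I expect the principal obstacle to be precisely this bookkeeping: the hypothesis must be phrased so that the required independence holds uniformly over all $d$ with $c\mtop_C d$ at once, rather than one pair at a time, since each recursive summand $\phi_1^{\star(-1)}(c')\,\phi(c'')$ must be $\mtop_A$-independent of every $\phi(d)$ in sight simultaneously before the closure of the polar subspace under addition can be invoked.

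Items~(\mref{it:add-subal}) and~(\mref{it:add-ideal}) are then short extra inductions: the first uses that a \loc subalgebra $A_1$ is closed under the convolution products appearing in the recursion, so that $\phi_1^{\star(-1)}$ stays in $K+A_1$; the second exploits the \loc ideal property of $A_2$ to show that the correction terms beyond $\pi_1\phi$ and $\pi_2\phi$ lie in $A_2$ and thus kill $\phi_1^{\star(-1)}$ modulo the splitting $A=A_1\oplus A_2$, yielding $\phi_1^{\star(-1)}=\pi_1\phi$ and the closed recursive form~(\mref{eq:phi2}) for $\phi_2$. Finally, for the $\psi$ clause I would rerun the same double induction on the pair $(\phi,\psi)$, with base case from the hypothesis $\phi\mtop_\call\psi$ and inductive step from Proposition~\mref{lem:locallinearmap}, applied to the linear combinations and convolution products producing $\phi_i$ and $\psi_j$.
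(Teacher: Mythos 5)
Your plan follows essentially the same route as the paper's proof: define $\phi_1$ and $\phi_2$ by the standard degree recursion and run a parallel induction on degree which, via Lemma~\mref{lem:conil}.(\mref{it:conil0}), the Basic Assumption on $\pi_1,\pi_2$, and the linearity of polar subsets, establishes $\phi_i(c)\mtop_A\phi(d)$ and $\phi_i(c)\mtop_A\phi_j(d)$ for all $d$ with $c\mtop_C d$; items~(\mref{it:add-subal}), (\mref{it:add-ideal}) and the $\psi$ clause are then short further inductions, exactly as in the paper. The one point to correct is your opening claim that existence and uniqueness of $\phi_1,\phi_2$ as linear maps is ``granted by Theorem~\mref{thm:abfcoalg}'': that theorem takes $A$ to be a genuine algebra with an everywhere-defined product and with $\ker P$ a subalgebra, whereas here $A$ is only a \loc algebra, so the products $\phi_1(c')\phi(c'')$ occurring in the recursion are not a priori defined, and under the Basic Assumption alone neither $A_1$ nor $A_2$ need be a subalgebra. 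Hence well-definedness of the recursion cannot be outsourced to the classical theorem; it must be proved simultaneously with the locality statements. This is not fatal to your plan, since the simultaneous induction you describe produces precisely the independences $\phi_1(c')\mtop_A\phi(c'')$ needed to make each recursive summand meaningful before it is formed --- you just need to present the induction as establishing well-definedness and locality in the same breath, which is what the paper does (uniqueness, by contrast, can indeed be handled by the argument of \cite[Theorem~4.4]{GPZ2}, as the paper itself notes).
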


\begin{rk}
Theorem~\mref{thm:abflcoalg} provides an answer to Problem~\mref{pr:lpcoalg}: the locality of the renormalised map follows from that of the initial map under the assumption of the theorem. See also Remark~\mref{rk:motto} and its subsequent example.
\end{rk}

\begin{proof} The proof of the uniqueness of the maps $\phi_i$ is the same as the proof \cite[Theorem 4.4]{GPZ2} for the case of a trivial \loc relation i.e., when $\top= C\times C$.

Let $n\geq 1$ and $ c  \in C_n$. Since $C$ is a connected \loc coalgebra, we can write
$$\Delta ( c  )=J\ot  c  + c  \ot J+\sum _{( c  )} c '\ot c ''
$$
with $\deg(c'), \deg(c'')>0$ and $c'\mtop_C c''$.

We first prove by induction on the degree $n$ of $c$ that the map given by
\begin {equation}
\phi_1( c )=\left\{\begin{array}{ll}
1_A, & c=J, \\
-\pi _1\Big(\phi( c )+\sum_{( c )} \phi_1( c ')\phi( c  '')\Big),
& c\in C_n, n>0, \end{array} \right .
\mlabel{eq:LPhi-}
\end{equation}
is well-defined, and for any $d\in C$ with $d \mtop _C  c $, there is
$$\phi _1( c )\mtop _A \phi (d),
$$
which   {clearly hold} for $ c $ of degree $0$.

Assume that these hold true for $ c $ of degree $\le n$. Then for $ c  $ of degree $n+1$, $ c  '$ is of degree $\le n$, so $\phi _1( c  ')$ is defined and $\phi _1( c  ')\mtop_A \phi ( c  '')$. Therefore $\phi _1( c  ')\phi ( c  '')$ makes sense, and $\phi _1( c )$ is well-defined.

Further for any $d \in C$ with $ c  \mtop_C d$, we can take $\{c',c''\}\mtop_C d$. Since $\phi $ is a \loc map, we obtain
$\phi ( c  )\mtop_A \phi (d)$ and $\phi ( c''  )\mtop_A \phi (d)$ . Also the induction hypothesis gives
$\phi _1( c  ')\mtop_A \phi (d).$
Thus
$(\phi _1( c  ')\phi ( c  ''))\mtop_A \phi (d).$
Therefore,
$$\Big(\phi( c )+\sum_{( c )} \phi_1( c ')\phi( c  '')\Big)\mtop_A \phi (d).
$$
Now since $\pi _1$ is a \loc map and $\pi _1 $ and $\pi _2$ are independent, $\pi _1$ and $\Id _A=\pi _1+\pi _2$ are independent. Thus
$\phi_1( c )\mtop_A \phi (d).
$
Therefore we have proved that $\phi _1$ is well-defined and $\phi _1\mtop_\call \phi$.

Now for any $ c  \mtop_C d$, we have $\phi (c) \mtop_A \phi_1(d)$. By a similar induction on the degree of $c$, we obtain
$ \phi_1( c )\mtop_A \phi _1(d),
$
so
$\phi _1$ is a \loc linear map.
Therefore, the map
\begin{equation}
\phi_2(c):=\left\{\begin{array}{ll}
1_A, & c= J, \\
(\Id_A-\pi _1)\big(\phi(c)+\sum_{(c)} \phi_1(c')\phi(c'')\big), &
c\in C_n, n>0, \end{array} \right .
\mlabel{eq:LPhi+}
\end{equation}
is well-defined.

Notice that for $c\in C_n, n>0$,  Eq.~(\mref{eq:LPhi+}) means
\begin {equation}
\phi_2(c)=\phi (c)+ \phi _1(c)+\sum_{(c)} \phi_1(c')\phi(c'').
\mlabel{eq:LPhi2}
\end{equation}
With the condition on $J$, this in turn reads
$\phi _2=\phi _1\star \phi$ and hence $\phi =  \phi_1^{\star(-1)}\star \phi _2.$
By Proposition~\mref {prop:inversemap},  $  \phi_1^{\star(-1)}$ is a \loc linear map. From Eq.~(\mref {eq:LPhi2}), we easily obtain
$\phi _1\mtop_\call \phi _2.
$
By Eq.~(\mref {eq:phirec}), an easy induction on the degree of $c$ shows
$\phi_1^{\star(-1)}\mtop_\call\, \{\phi _1, \ \phi _2\}.$

A similar induction shows that if $\psi: \left(C , \mtop_C \right)\longrightarrow   \left(A,\mtop_A\right)$
is also a \loc map, independent of $\phi$ with $\phi(J)=1_A$, then $\phi_i$ and $\psi_j$ are independent for $i, j =1,2$, proving the last statement of the theorem.

Now to prove (\mref{it:add-subal}), letting $A_1$ be a sub-\loc algebra,  Eq.~(\mref {eq:phirec}) and a simple induction on $n\geq 0$ show
that $\phi_1^{\star (-1)}(c) \in  K  + A_1$ for any $c\in C_n$.

To prove (\mref{it:add-ideal}), suppose $A_1$ is a sub-\loc algebra and $A_2$ is a \loc ideal. We prove by induction on $n\geq 0$ that
\begin{equation}
	\phi_1^{\star-1}({c})  = (\pi_1  \phi)({ c})\quad \tforall c\in  C_n.
\mlabel{eq:BHF3}
	\end{equation}
	
Notice  that $\phi(J)=\phi_1(J)=1_A$ implies $(\phi_1\,\star\, ( \pi_1  \phi))(J)=1_A$, so Eq.~\eqref{eq:BHF3} holds for $n=0$
since ${C}_0=  K   J$. Assuming  that Eq.~\eqref{eq:BHF3} holds for any $c$ in $ {C }$ of degree $\le n$, we prove that Eq.~\eqref{eq:BHF3} holds for any element $c\in  C_{n+1}$.

We write
\begin{equation*}
 \Delta (c) ={c}\ot J+ J\ot {c}+\sum_{({c})}{c}'\ot {c}''
\end{equation*}
with ${c}'\mtop_C {c}''$ of degree $\le n$.
 By the definition of $\phi_1 $,
  \begin{equation*} \phi_1({c})  =  -\pi_1(\phi(c)+  \sum_{({c})}\phi_1({c}^\prime)\, \phi({c}^{\prime\prime}))\end{equation*}
and
$\phi_1({c}^\prime)\mtop_A \phi({c}^{\prime\prime}).$
By the assumption on $\pi _i$, we have
$\phi_1({c}^\prime)\mtop_A \{\pi _1\phi({c}^{\prime\prime}), (\pi _2\phi)({c}^{\prime\prime})\}.$
Then
\begin{eqnarray*}
 \phi_1({c}) &=&-\pi_1\Big(\phi(c)- \sum_{({c})} \phi_1({c}^\prime) \big((\pi _1\phi)({c}^{\prime\prime})+(\pi _2 \phi)({c}^{\prime\prime})\big)\Big)\\
 &=&-(\pi_1\phi)(c)- \sum_{({c})} \phi_1({c}^\prime) (\pi _1\phi)({c}^{\prime\prime}).
\end{eqnarray*}
 Consequently,
\begin{equation*}
 \big(\phi_1 \, \star\, (\pi_1  \phi)\big)({c})=\phi_1({c})+\pi_1\phi(c)+ \sum_{({c})} \phi_1({c}^\prime) \,\pi_1  \phi({c}^{\prime\prime})=0.
\end{equation*}
 We conclude that $\phi_1\,\star\,\left(\pi_1 \phi\right)=e$, leading to $ \phi_1^{\star(-1)} = \pi_1  \phi$ since $\phi_1^{\star(-1)}$ has been shown to exist. {The {\loc algebraic Birkhoff factorization}  $\phi =\phi_1^{\star -1} \star \phi_2$  then yields  for $c{\in \ker}(\varepsilon)$:
 \begin{equation*}
  \pi_1(\phi(c)) + \phi_2(c) + \sum_{(c)}\pi_1(\phi(c'))\phi_2(c'') = \phi(c)
 \end{equation*}
  Using $\pi_2={\Id}-\pi_1$ gives   the recursive expression {\eqref{eq:phi2}} for $\phi_2$ in terms of $\pi_1$ and $\pi_2$.}
\end{proof}

\begin{rk}
	That the locality of $\phi$ implies that of $\phi_2$ and $\phi_1$ can be summarised under the motto {``}renormalisation preserves locality" in analogy with quantum field theory.
\mlabel{rk:motto}
\end{rk}

We illustrate this motto by applying Theorem~\mref{thm:abflcoalg} to the \loc linear map $S:\Q \bfc \to \calm _\Q$ that we have been taken as the main example throughout this paper. The map is denoted by $S^o$ in~\mcite{GPZ2} where it is shown, by the algebraic Birkhoff factorisation for connected coalgebras~\cite[Theorem~4.4]{GPZ2}, that both
$$S_1^{\star (-1)}: \Q \bfc \to \calm _{\Q,+} \ \text{and }\ S_2:\Q \bfc \to \calm _{\Q,-}$$
are linear maps sending the trivial lattice cone $(\{0\}, \{0\})$ to $1$ in $\calm_\Q$. Applying Theorem~\mref{thm:abflcoalg}, we further learn that both $S_1^{\star (-1)}$ and $S_2$ are \loc linear maps. This means that for cones $(C, \Lambda _C)$ and $(D, \Lambda _D)$ with $(C, \Lambda _C)\perp^Q (D, \Lambda _D)$, we also have
\begin{equation}
 S_1^{\star (-1)}(C, \Lambda _C)\perp ^Q  S_1^{\star (-1)}(D, \Lambda _D), \quad
S_2(C, \Lambda _C) \perp^Q  S_2(D, \Lambda _D).
\mlabel{eq:ls12}
\end{equation}
\section{Locality for Hopf algebras and the \Loc Product Conservation Principle}
\mlabel{sec:lhopf}

In this section, we combine a \loc algebra and a \loc coalgebra to {give} a \loc bialgebra, and then a \loc Hopf algebra under an additional connectedness condition. Together with the \loc Rota-Baxter algebra structure, we  obtain the \loc version of the algebraic Birkhoff factorisation originally given by Connes and Kreimer. This provides an answer to the question addressed in the \Loc Product Conservation Principle in Problem~\mref{pr:lphopf}. An application is given to the renormalisation of the exponential generating function of lattice cones.

\subsection{\Loc bialgebras and \loc Hopf algebras}
\mlabel{ss:lbialg}

\begin{defn}
\begin{enumerate}
\item
An {\bf \loc bialgebra} is a sextuple $( B , \top, m, u, \Delta, \vep)$ consisting of a \loc  algebra $( B , m, u, \top)$ and a \loc coalgebra $\left( B , \Delta, \top, \vep\right)$
that are \loc compatible in the sense that $\Delta$ and $\vep$ are \loc algebra homomorphisms.
\item
{A} \loc bialgebra $B$ is called {\bf connected} if there is a $\Z_{\geq 0}$-grading $B=\oplus_{n\geq 0} B_n$ with respect to which $B$ is both a \loc graded algebra in the sense of Definition~\mref{defn:localisedalgebra} and a connected \loc coalgebra in the sense of Definition~\mref{defn:colocalcoproduct}. {Then $J=1_B$.}
\end{enumerate}
\mlabel{defn:locbialgebra}
\end{defn}

Let us go back to the space $\Q \bfc$  of lattice cones with the Minkowski product and the coproduct $\Delta$ defined in Eq.~\eqref{eq:conecoprod}. We observe that the idempotence
$(C, \Lambda _C)\cdot (C, \Lambda _C)=(C, \Lambda _C)$
hinders the compatibility between the product and the coproduct. For example, taking $(C, \Lambda _C)=(\langle e_1\rangle, \Z e_1)$, then $\Delta(C\cdot C)=\Delta(C)\cdot \Delta(C)$ does not hold.
However, the following result shows that this compatibility can be recovered in the context of \loc bialgebras.

\begin {prop}
$(\Q \bfc,\perp^Q,\cdot,u,\Delta,\varepsilon) $ is a connected \loc bialgebra.
\mlabel{prop:conebialg}
\end{prop}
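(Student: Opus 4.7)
The plan is to assemble the bialgebra structure from the ingredients already in place: by Lemma~\ref{lem:conegradalg}, $(\Q\bfc, \perp^Q, \cdot)$ is a graded \loc algebra, and by Lemma~\ref{lem:conecograd}, $(\Q\bfc, \perp^Q, \Delta, \vep)$ is a connected \loc coalgebra, both with respect to the dimension grading $\Q\bfc=\bigoplus_{n\geq 0}\Q\bfc_n$. What remains, per Definition~\ref{defn:locbialgebra}, is to verify the \loc compatibility: namely that $\Delta$ and $\vep$ are \loc algebra homomorphisms, and that the two gradings coincide. The grading is already shared, so after disposing of $\vep$ and the grading in one line, the entire content of the statement lies in the compatibility of $\Delta$ with the Minkowski product on orthogonal pairs.

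For $\vep$, the unit of $\Q\bfc$ is $(\{0\},\{0\})$, so $\vep\big((C,\Lambda_C)\cdot(D,\Lambda_D)\big)=1$ iff both factors equal the unit, which is precisely $\vep(C,\Lambda_C)\vep(D,\Lambda_D)$. For $\Delta$, I first verify \loc multiplicativity on a pair $(C,\Lambda_C)\perp^Q (D,\Lambda_D)$. The key geometric step is the decomposition of faces: since the linear spans of $C$ and $D$ are $Q$-orthogonal, every face $H$ of the Minkowski product $C\cdot D$ splits uniquely as $H=F\cdot G$ with $F<C$ and $G<D$, and correspondingly $\Lambda_H=\Lambda_F+\Lambda_G$. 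Moreover, the subspace spanned by $H$ is the orthogonal direct sum of the spans of $F$ and $G$, so projecting $C\cdot D$ onto the orthogonal complement of $\mathrm{span}(H)$ factors as the product of the projection of $C$ onto the orthogonal complement of $\mathrm{span}(F)$ with the projection of $D$ onto the orthogonal complement of $\mathrm{span}(G)$. This gives the factorisation
\begin{equation*}
t\big((C,\Lambda_C)\cdot(D,\Lambda_D),\,(F,\Lambda_F)\cdot(G,\Lambda_G)\big)=t\big((C,\Lambda_C),(F,\Lambda_F)\big)\cdot t\big((D,\Lambda_D),(G,\Lambda_G)\big).
\end{equation*}
Summing over $F<C$ and $G<D$ then yields $\Delta\big((C,\Lambda_C)\cdot(D,\Lambda_D)\big)=\Delta(C,\Lambda_C)\cdot\Delta(D,\Lambda_D)$ in $\Q\bfc\otimes\Q\bfc$.

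Before this identity is meaningful I need the right-hand side to live in the \loc tensor product, which amounts to showing that the four tensor slots $(t(C,F), F, t(D,G), G)$ form an element of $(\Q\bfc)^{_\top 4}$. Each of the first two lies in $\mathrm{span}(C)$ and each of the last two lies in $\mathrm{span}(D)$, and by hypothesis these spans are $Q$-orthogonal; combined with $t(C,F)\perp^Q F$ and $t(D,G)\perp^Q G$ built into the definition of the transverse cone, all six pairs are orthogonal. This same bookkeeping also gives the \loc linearity of $\Delta$ as a map of \loc algebras (it was already established as part of the \loc coalgebra structure, but one must re-check it using the \loc algebra structure on $\Q\bfc\otimes_\top\Q\bfc$ induced by componentwise Minkowski product on orthogonal pairs).

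The main obstacle is the geometric face-decomposition lemma for orthogonal Minkowski products together with the compatibility of transverse projections with it; the rest is bookkeeping of orthogonality of spans. Once this is in place, the grading clauses are automatic since $\dim(C\cdot D)=\dim C+\dim D$ for orthogonal cones (already used in Lemma~\ref{lem:conegradalg}) and the coproduct was shown to respect the dimension grading in Lemma~\ref{lem:conecograd}, so Definition~\ref{defn:locbialgebra} is fulfilled, with $J=(\{0\},\{0\})=1_{\Q\bfc}$ as required for connectedness.
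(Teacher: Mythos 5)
Your proposal is correct and follows essentially the same route as the paper: the heart of the argument in both is that for orthogonal lattice cones the faces of the Minkowski product decompose as products of faces and the transverse cones factor accordingly, yielding $\Delta\big((C,\Lambda_C)\cdot(D,\Lambda_D)\big)=\Delta(C,\Lambda_C)\cdot\Delta(D,\Lambda_D)$, after which the counit, grading and connectedness are handled by citing the earlier lemmas. Your additional remarks justifying the face decomposition via orthogonality of spans and checking that the four tensor slots lie in $(\Q\bfc)^{_\top 4}$ are welcome elaborations of details the paper leaves implicit, but they do not change the approach.
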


\begin {proof}
We first verify the compatibility of the \loc coalgebra structure and the \loc algebra structure.
To show that $\Delta$ is a \loc algebra homomorphism, we note that for $(C,\Lambda _C)\perp^Q (D, \Lambda _D)$, the faces of $(C,\Lambda _C)\cdot (D, \Lambda _D)$ are of the form
$(F_1, \Lambda _{F_1})\cdot (F_2, \Lambda _{F_2})$ with $F_1$ a face of $C$ and $F_2$ a face of $D$, and
$$t((C,\Lambda _C)\cdot (D, \Lambda _D), (F_1, \Lambda _{F_1})\cdot (F_2, \Lambda _{F_2}))=t((C,\Lambda _C), (F_1, \Lambda _{F_1}))\cdot t((D,\Lambda _D), (F_2, \Lambda _{F_2})).$$
So by definition, we have the desired equation:
\begin{equation}
\Delta ((C,\Lambda _C)\cdot (D, \Lambda _D))=\Delta (C,\Lambda _C)\cdot \Delta (D,\Lambda _D).
\mlabel{eq:coprodprod}
\end{equation}

The counit $\vep: \Q \bfc \to \Q$ is evidently an algebra homomorphism and hence a \loc algebra homomorphism.

Finally, we have checked that $\Q \bfc$ is both a graded \loc algebra (Lemma~\mref{lem:conegradalg}) and a connected \loc coalgebra (Lemma~\mref{lem:conecograd}). This completes the proof.
\end{proof}

\begin{defn}
A {\bf \loc Hopf algebra} is a \loc   bialgebra $\left( B ,\top, m, \Delta,u, \vep\right)$  with an antipode, defined to be a linear map $S:  B \to  B $
 such that $S$ and $\Id_ B  $ are mutually independent (in the sense of Definition \mref{defn:locallmap}) and
 \[S\star \Id=\Id\star S = u  \vep.\]
\mlabel {defn:LHopf}
\end{defn}

The usual proof (see e.g.\mcite{G2,M}) for the existence of the antipode on connected bialgebras extends to \loc bialgebras as follows. For $k\geq 1$, denote $m_1=m$ and $m_k=m(\id_B\ot m_{k-1})$.
\begin{lem}
 Let $\left(   B ,\top, m, u,\Delta,\vep\right)$ be  a connected \loc bialgebra, $\tilde\Delta^{\ot k}$ as in Eq.~(\mref{eq:redcoprod}) and $\alpha:   B  \to   B $ a \loc linear map with $\alpha (1_B)=0$. Then
\begin{enumerate}
\item
$\alpha^{\star k} = m_{k-1} \alpha^{\ot k} \tilde\Delta^{(k-1)}$ for all $k\geq 2$;
\mlabel{it:apower1}
\item
$\alpha^{\star k}(   B  _n)=\{0\}$ for all $k\geq n+1.$
\mlabel{it:apower2}
\end{enumerate}
\mlabel{lem:apower}
\end{lem}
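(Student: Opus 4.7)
The plan is to prove (i) by induction on $k$ and then read off (ii) from Lemma~\mref{lem:conil}.(\mref{it:conil2}).

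As a preliminary observation I would first establish that $\alpha^{\star n}(1_B) = 0$ for every $n \geq 1$. Since $\Delta$ is a \loc algebra homomorphism and $\vep$ is one too, the unit $1_B = J$ satisfies $\Delta(1_B) = 1_B \otimes 1_B$. Hence $\alpha^{\star n}(1_B) = m(\alpha \otimes \alpha^{\star(n-1)})(1_B \otimes 1_B) = \alpha(1_B)\cdot \alpha^{\star(n-1)}(1_B) = 0$ by induction, starting from $\alpha(1_B)=0$.

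For the base case $k=2$ of (i), I unfold the convolution: $\alpha^{\star 2}(c) = m \circ (\alpha \otimes \alpha) \circ \Delta(c)$. Writing $\Delta(c) = c \otimes 1_B + 1_B \otimes c + \tilde\Delta(c)$ and using $\alpha(1_B)=0$, the first two summands vanish, leaving $\alpha^{\star 2} = m_1 \circ \alpha^{\otimes 2} \circ \tilde\Delta^{(1)}$. For the inductive step, I would write $\alpha^{\star k} = \alpha \star \alpha^{\star (k-1)} = m \circ (\alpha \otimes \alpha^{\star(k-1)}) \circ \Delta$; the two ``$1_B$-tails'' in $\Delta$ again die off because $\alpha(1_B)=\alpha^{\star(k-1)}(1_B)=0$, so
\[
\alpha^{\star k} \;=\; m \circ (\alpha \otimes \alpha^{\star(k-1)}) \circ \tilde\Delta.
\]
Plugging in the induction hypothesis $\alpha^{\star(k-1)} = m_{k-2} \circ \alpha^{\otimes(k-1)}\circ \tilde\Delta^{(k-2)}$ and using the associativity of $m$ together with the defining recursion $\tilde\Delta^{(k-1)} = (\id \otimes \tilde\Delta^{(k-2)}) \circ \tilde\Delta$ yields exactly $m_{k-1}\circ \alpha^{\otimes k}\circ \tilde\Delta^{(k-1)}$.

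Item (ii) then follows immediately: by Lemma~\mref{lem:conil}.(\mref{it:conil2}), $\tilde\Delta^{(k-1)}(B_n) = \{0\}$ as soon as $k-1 \geq n$, i.e.\ $k \geq n+1$, and (i) turns this into the vanishing of $\alpha^{\star k}$ on $B_n$. The main obstacle, mild but worth verifying explicitly, is that every composition in the argument is well-defined in the \loc setting: one must check that $\alpha^{\otimes k}$ restricts to a map $B^{\otimes_\top k} \to B^{\otimes_\top k}$, which is guaranteed by $\alpha$ being \loc together with Lemma~\mref{lem:coalgmap}.(\mref{it:coalgmap2}), and that the image of $\tilde\Delta^{(k-1)}$ actually lies in $B^{\otimes_\top k}$ so that $m_{k-1}$ is applicable, which is Lemma~\mref{lem:conil}.(\mref{it:conil1}). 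With these two checks in place the manipulation above goes through verbatim as in the non-\loc case.
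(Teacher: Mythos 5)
Your proof is correct and follows essentially the same route as the paper: the preliminary observation that $\alpha^{\star k}(1_B)=0$, an induction on $k$ in which the two $1_B$-tails of $\Delta$ are killed so that $\Delta$ may be replaced by $\tilde\Delta$, and then associativity of $m$ together with the recursion $\tilde\Delta^{(k)}=(\id\otimes\tilde\Delta^{(k-1)})\tilde\Delta$; part (ii) is deduced from Lemma~\mref{lem:conil}.(\mref{it:conil2}) exactly as in the paper. Your explicit verification that the compositions are well defined in the locality setting (via Lemma~\mref{lem:coalgmap}.(\mref{it:coalgmap2}) and Lemma~\mref{lem:conil}.(\mref{it:conil1})) is a point the paper only gestures at, and is a welcome addition.
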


\begin{proof}
(\mref{it:apower1}) We proceed by induction on $k\geq 2$ and first observe that $\alpha^{\star k}(1_B)=0$ for every $k\geq1$ as can easily be shown by induction. The result holds for $k=2$   since $\alpha(1_B)=0$. Assuming it holds at degree $k$ we write
  \begin{align*}
   \alpha^{\star k+1}(x) & = m(\alpha\ot\alpha^{\star k})\Delta(x) \\
			 & = m(\alpha\ot\alpha^{\star k})\tilde\Delta(x) \quad\text{(since }\alpha^{\star k}(1_B)=\alpha(1_B)=0)\\
			 & = m\big(\alpha\ot (m_{k-1}\alpha^{\ot k} \tilde\Delta^{(k-1)})\big)\tilde\Delta(x) \\
& = m(\id_B \ot m_{k-1})(\alpha\ot \alpha^{\ot k})(\id_B\ot \tilde{\Delta}^{(k-1)})\tilde{\Delta}(x) \\
			 & = m_k\alpha^{\ot (k+1)}\tilde\Delta^{(k)}(x)
  \end{align*}
 where we have used the \loc property of $\alpha$ and the fact that $m$ is associative.
\smallskip

\noindent
(\mref{it:apower2}) is a direct consequence of (\mref{it:apower1}) and Lemma~\mref{lem:conil}.(\mref{it:conil2}).
\end{proof}

\begin{prop}
		 Let $\left(   B  ,\top , m, u,\Delta,\vep\right)$ be  a connected \loc bialgebra.
There is  a linear map $S: {   B  }\to {   B }$  with the properties of the antipode stated above. It is given by
		  \begin{equation*}
		   S=\sum_{k=0}^\infty (u  \vep -\Id)^{\star k}.
		  \end{equation*}
\mlabel{prop:localisedantipode}
\end{prop}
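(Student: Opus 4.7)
My approach is to set $\alpha := u\vep - \Id_B$ and exploit the fact that $\alpha$ vanishes on $1_B$ together with the finiteness statement from Lemma~\mref{lem:apower}. First I would observe that $\alpha(1_B) = u\vep(1_B) - 1_B = 0$, so that Lemma~\mref{lem:apower}.(\mref{it:apower2}) forces $\alpha^{\star k}(B_n) = 0$ for all $k \geq n+1$. Consequently $S(x) := \sum_{k\geq 0} \alpha^{\star k}(x)$ restricts to a finite sum on each $B_n$ and defines a well-defined linear map $S : B \to B$.

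Next I would argue that $\alpha$ is a \loc linear map independent of $\Id_B$. For $x \top y$, since $u\vep(x) \in K\cdot 1_B \subseteq \{y\}^\top$ by Lemma~\mref{lem:mutuallyindIA}, and $x \in \{y\}^\top$ by assumption, and $\{y\}^\top$ is a linear subspace by Definition~\mref{defn:lvs}, we obtain $\alpha(x) \in \{y\}^\top$; by Remark~\mref{rk:mut-ind-loc}.(2) this independence of $\Id_B$ automatically makes $\alpha$ a \loc map. Invoking the recursive presentation $\alpha^{\star k} = m_{k-1}\,\alpha^{\ot k}\,\tilde\Delta^{(k-1)}$ from Lemma~\mref{lem:apower}.(\mref{it:apower1}), together with Lemma~\mref{lem:conil}.(\mref{it:conil1}) and the locality compatibility of $m$ recorded in Eq.~(\mref{eq:semigrouploc}), a simple induction on $k$ would show that each $\alpha^{\star k}$ is again a \loc linear map independent of $\Id_B$. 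Proposition~\mref{lem:locallinearmap} then extends this to the finite sum $S|_{B_n}$, so that $S$ is a \loc linear map independent of $\Id_B$, as required by Definition~\mref{defn:LHopf}.

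Finally I would verify the antipode identities by a telescoping argument. Writing $\Id_B = u\vep - \alpha$ and using that $u\vep$ is the convolution unit (as established in the proof of Proposition~\mref{prop:inversemap}.(\mref{it:conv2})), I compute
\[ S \star \Id_B \;=\; \sum_{k \geq 0} \alpha^{\star k} \star (u\vep - \alpha) \;=\; \sum_{k \geq 0}\bigl(\alpha^{\star k} - \alpha^{\star(k+1)}\bigr) \;=\; \alpha^{\star 0} \;=\; u\vep, \]
the rearrangement being legitimate because only finitely many terms are nonzero on each $B_n$; a symmetric computation handles $\Id_B \star S$. The main obstacle I anticipate is not the telescoping identity, which is essentially formal, but rather the locality bookkeeping: one must check at each inductive step that the independence of $\alpha$ from $\Id_B$ survives convolution with itself, which in turn relies on the locality property of $\Delta$ recorded in Eq.~(\mref{eq:comag}) propagating correctly through the iterates $\tilde\Delta^{(k-1)}$ of the reduced coproduct.
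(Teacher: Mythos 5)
Your proof is correct and takes essentially the same route as the paper's: the paper likewise sets $\alpha=\Id-u\vep$ (so that $\sum_{k}(-1)^{k}\alpha^{\star k}$ is exactly your series), invokes Lemma~\ref{lem:apower} for local finiteness, and asserts that the resulting von Neumann series is the convolution inverse of the identity. Your telescoping computation and the locality bookkeeping for the iterates $\alpha^{\star k}$ merely spell out details the paper leaves implicit.
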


\begin{proof}
The map $\alpha:   B  \to   B $  defined by $\alpha=\Id-u  \vep$  is \loc linear, and $\alpha(1_B)=0$. The von Neumann series   $S=\sum_{k=0}^{\infty}(-1)^k\alpha^{\star k}$  which is locally
  finite by Lemma~\mref{lem:apower}.(\mref{it:apower2}) and hence well-defined, gives  the inverse of the identity for the convolution product.
\end{proof}

As an immediate consequence of Propositions~\mref{prop:conebialg} and ~\mref{prop:localisedantipode}, we have
\begin{coro}
The \loc bialgebra $(\Q \bfc,\perp^Q,\cdot,u,\Delta,\varepsilon)$ is a \loc Hopf algebra.
\mlabel{co:conehopf}
\end{coro}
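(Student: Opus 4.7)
The plan is to combine Proposition~\ref{prop:conebialg}, which supplies the connected \loc bialgebra structure on $\Q\bfc$, with the general existence Proposition~\ref{prop:localisedantipode}, which produces an antipode candidate on any connected \loc bialgebra. Since both results are already in place, the remaining work is to check the mutual independence clause in Definition~\ref{defn:LHopf}.

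First I would instantiate Proposition~\ref{prop:localisedantipode} on the connected \loc bialgebra $\Q\bfc$ to obtain the map
\[
S \;=\; \sum_{k\geq 0}(u\varepsilon-\Id)^{\star k},
\]
which restricts on each graded piece $\Q\bfc_n$ to a finite sum by Lemma~\ref{lem:apower}(\ref{it:apower2}) and satisfies $S\star\Id=\Id\star S=u\varepsilon$. This takes care of the antipode identity.

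Next I would verify that $S\perp^Q_{\call}\Id$. Setting $\alpha:=u\varepsilon-\Id$, Lemma~\ref{lem:mutuallyindIA} shows that $u\varepsilon$ is independent of every linear map (its image lies in $K\cdot 1_{\Q\bfc}\subseteq (\Q\bfc)^{\perp^Q}$), so by Proposition~\ref{lem:locallinearmap} the map $\alpha$ is a \loc linear map independent of $\Id$. An elementary induction on $k$ using Proposition~\ref{prop:inversemap}(\ref{it:conv1}) (which asserts that convolution of independent \loc linear maps is again a \loc linear map, and transmits independence from the generating maps) then gives $\alpha^{\star k}\perp^Q_\call \Id$ for each $k\geq 0$. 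Taking the finite sum on each component $\Q\bfc_n$ and invoking Proposition~\ref{lem:locallinearmap} once more shows $S\perp^Q_\call \Id$, as required.

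The step I expect to need most care is the induction establishing $\alpha^{\star k}\perp^Q_\call \Id$: Proposition~\ref{prop:inversemap}(\ref{it:conv3}) is stated for elements of the \loc group $\calg\call$, but $\alpha$ is not a character, so I cannot quote it as a black box. However, only the positive convolution powers are needed here, and the relevant induction is a direct specialisation of the argument inside the proof of Proposition~\ref{prop:inversemap}(\ref{it:conv1}), using the connected \loc coalgebra structure of $\Q\bfc$ (Lemma~\ref{lem:coalgmap}) together with the independence of $\alpha$ from $\Id$. Once this verification is in hand, the defining properties of a \loc Hopf algebra in Definition~\ref{defn:LHopf} are all met and the corollary follows.
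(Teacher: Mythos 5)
Your proposal is correct and follows the same route as the paper, which also derives the corollary as an immediate consequence of Proposition~\ref{prop:conebialg} combined with Proposition~\ref{prop:localisedantipode}. The extra verification you sketch of the independence $S\perp^Q_\call \Id$ is a reasonable unpacking of what Proposition~\ref{prop:localisedantipode} already asserts, not a different argument.
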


\subsection{Locality of the convolution of \loc algebra homomorphisms}
 		\begin{prop}Let $\left( B ,\mtop_ B , m, u,  \Delta, \vep \right)$   be a \loc bialgebra. Let $\left(A,\mtop_A,\cdot\right) $ be a \loc commutative  algebra. Let
 			\begin{equation*}
 			\phi,\psi: \left( B ,\mtop_ B  \right)\longrightarrow   \left(A ,\mtop_A\right)
 			\end{equation*}
 			be independent \loc linear maps.
 		  \begin{enumerate}
 			\item If $\phi$ and $  \psi$ are \loc multiplicative then so is their convolution product $\phi\star \psi$.
 			\item Assume further that $ B $ is connected. If $\phi$ is a homomorphism of \loc algebras, then  so is its convolution inverse $ \phi^{\star (-1)}$. So the set $\calg$ of homomorphisms of \loc algebras from $(B,\mtop_B)$ to $(A,\mtop_A)$ is a \loc group with respect to the independent relation of \loc linear maps.
 	\end{enumerate}
\mlabel{prop:localityconv}
 	\end{prop}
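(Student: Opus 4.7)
For part (1) my plan is a direct computation in Sweedler notation, tracking locality at every step. Given $(x,y)\in\mtop_B$, since $\Delta$ is a \loc algebra homomorphism (the \loc bialgebra compatibility) and $(x,y)\in\mtop_B$, we have $\Delta(xy)=\Delta(x)\cdot\Delta(y)$; writing $\Delta(x)=\sum x_{(1)}\ot x_{(2)}$ and $\Delta(y)=\sum y_{(1)}\ot y_{(2)}$, this equals $\sum x_{(1)}y_{(1)}\ot x_{(2)}y_{(2)}$ with every four-tuple $(x_{(1)},x_{(2)},y_{(1)},y_{(2)})$ lying in $B^{_\top 4}$ by Lemma~\mref{lem:coalgmap}. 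Applying $\phi\ot\psi$, the \loc multiplicativity of $\phi$ and $\psi$ gives $\sum \phi(x_{(1)})\phi(y_{(1)})\psi(x_{(2)})\psi(y_{(2)})$. Using the independence $\phi\top\psi$ to ensure that each intermediate product in $A$ is defined, together with commutativity of $A$, I rearrange this into $\sum \phi(x_{(1)})\psi(x_{(2)})\phi(y_{(1)})\psi(y_{(2)})=(\phi\star\psi)(x)\cdot(\phi\star\psi)(y)$.

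For part (2) my plan is to express $\phi^{\star(-1)}=\phi\circ S$, where $S$ is the \loc antipode of Proposition~\mref{prop:localisedantipode}, and then to exploit commutativity of $A$. This proceeds in three steps: (a)~prove by induction on degree that $S$ is a \loc anti-algebra homomorphism, i.e.\ for $(x,y)\in\mtop_B$ one has $S(xy)=S(y)\,S(x)$; the inductive step parallels the classical Hopf-algebra argument but uses Lemma~\mref{lem:coalgmap} to propagate locality through $\Delta$, and uses the \loc property of $S$ (independence from $\Id$) to secure that all products of Sweedler components make sense. (b)~Verify that $\phi\circ S$ is the convolution inverse of $\phi$, which is the \loc version of the standard identity $\phi\star(\phi\circ S)=\phi\circ m\circ(\Id\ot S)\circ\Delta=\phi\circ u\circ\vep=u_A\vep$, where \loc multiplicativity of $\phi$ is applied to the pairs arising from $(\Id\ot S)\Delta(x)$ (these pairs are locality-compatible by Lemma~\mref{lem:coalgmap} and the locality of $S$). (c)~For $(x,y)\in\mtop_B$ conclude $(\phi\circ S)(xy)=\phi(S(y)S(x))=\phi(S(y))\,\phi(S(x))=\phi(S(x))\,\phi(S(y))$, where the second equality uses \loc multiplicativity of $\phi$ together with $(S(x),S(y))\in\mtop_B$ (a consequence of $S$ being a \loc map), and the final equality uses commutativity of $A$.

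The main obstacle will be step~(a): verifying that the antipode $S$, defined by the locally finite von Neumann series of Proposition~\mref{prop:localisedantipode}, is a \loc anti-algebra homomorphism. Concretely, one must re-organise the recursive identity $S(xy)+\sum S((xy)_{(1)})\,(xy)_{(2)}=0$ coming from $S\star\Id=u\vep$ into $S(y)S(x)$, using the inductive hypothesis together with the combined \loc properties of $\Delta$, of $S$, and of the relation $(x,y)\in\mtop_B$. Once (a) is in hand, steps (b) and (c) are essentially bookkeeping. The final statement that $\calg$ is a \loc group then follows by assembling (1) and (2) with Proposition~\mref{prop:inversemap}: $\calg$ is closed under convolution by (1), closed under convolution inverse by (2), and contains the unit $u_A\vep$ because $\vep$ is a \loc algebra homomorphism (\loc bialgebra axiom), so $\calg$ is a sub-\loc-group of the \loc group $\calg\call$.
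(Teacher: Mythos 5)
Your part (1) is essentially identical to the paper's argument: both expand $\Delta(xy)=\Delta(x)\Delta(y)$ in Sweedler notation, use Lemma~\mref{lem:coalgmap} to place the four components in $B^{_\top 4}$, apply \loc multiplicativity of $\phi$ and $\psi$, and regroup using commutativity of $A$.

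For part (2) you take a genuinely different route. The paper proves $\phi^{\star(-1)}(cd)=\phi^{\star(-1)}(c)\,\phi^{\star(-1)}(d)$ directly, by induction on $\deg c+\deg d$, expanding all nine terms of $\Delta(cd)=\Delta(c)\Delta(d)$ and feeding them into the recursion \eqref{eq:phirec}; it never mentions the antipode. You instead factor the problem as $\phi^{\star(-1)}=\phi\circ S$ and push the inductive work into showing that the antipode of Proposition~\mref{prop:localisedantipode} is \loc (anti-)multiplicative. Your plan is sound: the locality bookkeeping you invoke does go through ($(x_{(1)},S(x_{(2)}))\in\mtop_B$ because $S$ is independent of $\Id_B$; $x'y\top x''$ follows from $(x',x'',y)\in B^{_\top 3}$ and Eq.~\eqref{eq:semigrouploc}; and the identification $\phi\circ S=\phi^{\star(-1)}$ follows from $\phi\,\mtop_\call\,(\phi\circ S)$ together with the uniqueness of the convolution inverse in $\calg\call$ from Proposition~\mref{prop:inversemap}). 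Two remarks on what each approach buys. First, your step~(a) is not really avoided work: since the antipode is $\Id^{\star(-1)}$, proving it is \loc multiplicative on $\mtop_B$ is exactly the paper's induction specialised to $\phi=\Id_B$, so the detour relocates rather than removes the hard computation; in exchange you get the reusable statements that $S$ is a \loc anti-homomorphism and that $\phi^{\star(-1)}=\phi\circ S$, neither of which the paper records. Second, note that under the paper's standing convention all algebras (hence $B$) are commutative, so your anti-homomorphism formulation, while valid and more general, collapses to the homomorphism statement the paper proves. The closing assembly of $\calg$ as a sub-\loc group of $\calg\call$ matches the paper's conclusion.
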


 \begin{proof}
(i) For $( c,  d)\in \mtop_ B $, by the proof of Lemma~\mref{lem:coalgmap}.(\mref{it:coalgmap3}), we can write
$$\Delta (c)=\sum_i c_{i1}\ot c_{i2}, \quad \Delta (d)=\sum_j d_{j1}\ot d_{j2}$$
with
$(c_{i1}, c_{i2}, d_{j1}, d_{j2})\in B^{_\top 4}.$
Then
$$\Delta (cd)=(m\ot m)\tau _{23}(\Delta \ot \Delta )(c\ot d)=\sum_{i,j} c_{i1}d_{j1}\ot c_{i2}d_{j2}.
$$
So
\begin{eqnarray*}
 			(\phi\star \psi) ( c\,  d)&=&
 			\sum_{i,j} \phi\left(c_{i1}d_{j1}\right)\,\psi\left( c_{i2}d_{j2}\right) \\
 			&=&\sum_{i,j} \phi\left(  c_{i1} \right)\, \phi\left( d_{j1}\right)\,\psi\left(  c_{i2}\right)\,\psi\left(  d_{j2}\right) \\
 			&=&\sum_{i} \phi\left(  c_{i1}\right)\,\psi\left(  c_{i2}\right)\, \sum_{j} \phi\left( d_{j1}\right)\,\psi\left(  d_{j2}\right) \\
 			&=& \left(\phi\star \psi( c)\right)\,\left(\phi\star \psi( d)\right).
 			\end{eqnarray*}

\noindent
(ii) Now we use induction on the sum of degrees of $c$ and $d$, $c\top d$ to prove
$$\phi^{\star(-1)}(c)\phi^{\star(-1)}(d)=\phi^{\star(-1)}(cd),
$$
which is true if the sum of degrees is 0.

In general, by Lemma~\mref{lem:conil}.(\mref{it:conil0}), we write
$$\Delta (c)=c\ot J +J\ot c+\sum _{(c)}c'\ot c^{\prime \prime}
, \quad
\Delta (d)=d\ot J +J\ot d+\sum _{(d)}d'\ot d^{\prime \prime}
$$
with
$(c', c'', d', d'')\in B^{_\top 4}.$
So by $\Delta(cd)=\Delta(c)\Delta(d)$, we obtain
\begin {eqnarray*}
\Delta (cd)&=&cd\ot J+J\ot cd +c\ot d+d\ot c+\sum _{(d)}cd'\ot d^{\prime \prime}+\sum _{(d)}d'\ot cd^{\prime \prime}\\
&&+\sum _{(c)}c'd\ot c^{\prime \prime}+\sum _{(c)}c'\ot c^{\prime \prime}d+
\sum _{(c)(d)}c'd'\ot c^{\prime \prime}d''.
\end{eqnarray*}
By Eq.~(\mref {eq:phirec}) we obtain
\begin{eqnarray*}\phi ^{\star(-1)}( cd  )&=&-\phi (cd)-\phi (c)\phi ^{\star(-1)}(d)-\phi (d)\phi ^{\star(-1)}(c)\\
&&-\sum _{(d)}\phi (cd')\phi ^{\star (-1)}( d^{\prime \prime})-\sum _{(d)}\phi (d')\phi ^{\star (-1)}( cd^{\prime \prime})\\
&&-\sum _{(c)}\phi (c'd)\phi ^{\star (-1)}( c^{\prime \prime})-\sum _{(c)}\phi (c')\phi ^{\star (-1)}( c^{\prime \prime}d)\\
&&-\sum _{(c)(d)}\phi (c'd')\phi ^{\star (-1)}( c^{\prime \prime}d'').
\end{eqnarray*}

By Eq.~(\mref {eq:phirec}) applied to $c$ and $d$, the \loc multiplicativity of $\phi $, the commutativity of $A$ and induction hypothesis, we have
\begin{eqnarray*}\phi ^{\star(-1)}( cd  )&=&\phi(c)\phi (d)+\sum _{( d )}\phi (c)\phi ( d  ^\prime) \phi ^{\star(-1)}(  d  ^{\prime \prime})+\sum _{( c )}\phi ( c  ^\prime)\phi (d) \phi ^{\star(-1)}(  c  ^{\prime \prime})\\
&&+\sum _{( c )(d)} \phi ( d  ^\prime) \phi ^{\star(-1)}(  d  ^{\prime \prime})\phi ( c  ^\prime) \phi ^{\star(-1)}(  c  ^{\prime \prime})
+\sum _{( c )(d)}\phi ( c  ^\prime) \phi ^{\star(-1)}(  c  ^{\prime \prime})\phi ( d  ^\prime) \phi ^{\star(-1)}(  d  ^{\prime \prime})\\
&&-\sum _{(c)(d)}\phi (c'd')\phi ^{\star (-1)}( c^{\prime \prime}d'')\\
&=&\phi(c)\phi (d)+\sum _{( d )}\phi (c)\phi ( d  ^\prime) \phi ^{\star(-1)}(  d  ^{\prime \prime})+\sum _{( c )}\phi ( c  ^\prime)\phi (d) \phi ^{\star(-1)}(  c  ^{\prime \prime})\\
&&+\sum _{( c )(d)}\phi ( c  ^\prime) \phi ^{\star(-1)}(  c  ^{\prime \prime})\phi ( d  ^\prime) \phi ^{\star(-1)}(  d  ^{\prime \prime})\\
&=& \big(\phi ( c )+\sum _{( c )}\phi ( c  ^\prime) \phi ^{\star(-1)}(  c  ^{\prime \prime})\big)\big(\phi ( d )+\sum _{( d )}\phi ( d  ^\prime) \phi ^{\star(-1)}(  d  ^{\prime \prime})\big)\\
&=&\phi ^{\star(-1)}( c  )\phi ^{\star(-1)}( d).
\end{eqnarray*}
This completes the induction.
\end{proof}

\subsection{The \Loc Product Conservation Principle}
\mlabel{ss:lphopf}
Now we give the locality of the algebraic Birkhoff factorisation in the Hopf algebra context~\mcite{CK}.
\begin{theorem}
{\bf (Algebraic Birkhoff  factorisation, \loc Hopf algebra version)}
Let $\left( H ,\mtop_H \right)$   be a \loc connected Hopf algebra, $H=\oplus_{n\geq 0} H _n$, $H _0= K  e$.
Let $\left(A,\mtop_A,\cdot, P \right) $ be a commutative \loc Rota-Baxter algebra of weight -1 with $P$ idempotent.
Denote $A_1=P(A)$ and $A_2=(\Id - P)(A)$.
Let
\begin{equation*}
 \phi: \left(H , \mtop_H \right)\longrightarrow   \left(A,\mtop_A\right)
\end{equation*}
be a \loc algebra homomorphism.
Then there are unique independent \loc algebra homomorphisms $\phi_i: H\to K+A_i$ with $\phi_i (\ker \vep)\subseteq A_i, i=1,2, $ such that

\begin{equation}
\phi= \phi_1^{\star (-1)} \star \phi_2.
\mlabel{eq:lhabf}
\end{equation}
The map $  \phi_1^{\star(-1)}$ is also a \loc algebra homomorphism and $\phi_1\mtop_\call \phi$, $\phi_1\mtop_\call \phi_2$.

If in addition $A_2$ is a \loc ideal of $A$, then $\phi_1^{\star(-1)}=\pi_1 \phi$
and $\phi_2$ is recursively given by
$$ \phi_2(1_H)=1_A, \quad
\phi_2(c)=(\pi_2  \phi)(c)-\sum_{(c)}(\pi_1  \phi)(c')\phi_2(c'') \tforall c\in \ker \vep.
$$
with $\tilde\Delta(c)=\sum_{(c)}c'\ot c''$ the reduced coproduct defined in Lemma \ref{lem:reduced_coproduct}.
\mlabel{thm:abflhopf}
\end{theorem}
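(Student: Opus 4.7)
The plan is to bootstrap from Theorem~\mref{thm:abflcoalg} (the \loc coalgebra version) to the Hopf algebra version by upgrading the \loc \emph{linearity} of $\phi_1, \phi_2$ to \loc \emph{multiplicativity}, using the \loc Rota-Baxter identity. First I would verify that the hypotheses of Theorem~\mref{thm:abflcoalg} are met: since $(A,\top,P)$ is a \loc Rota-Baxter algebra of weight $-1$ with $P$ idempotent, Proposition~\mref{prop:multpi} ((\textit{i})$\Rightarrow$(\textit{ii})) yields that $A_1=P(A)$ and $A_2=(\Id-P)(A)$ are both \loc subalgebras of $A$ and that $P, \Id-P$ are independent \loc maps; together with $1_A\in A_1$ (which follows from $P(1_A)^2=P(1_A)$ and the standard normalization used in this context), this verifies the Basic Assumption as well as the hypothesis of item~(\mref{it:add-subal}) of Theorem~\mref{thm:abflcoalg}. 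Invoking that theorem produces unique independent \loc linear maps $\phi_1, \phi_2$ (with $\phi_1^{\star(-1)}$ also \loc linear) satisfying \eqref{eq:lhabf}, together with the independence statements $\phi_1\mtop_\call\{\phi,\phi_2\}$ and $\phi_1^{\star(-1)}\mtop_\call\{\phi_1,\phi_2\}$. Uniqueness in the present setting follows a fortiori from uniqueness at the coalgebra level.

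The main step is to promote $\phi_1$ to a \loc algebra homomorphism. I would proceed by induction on $\deg(c)+\deg(d)$ to prove that $\phi_1(cd)=\phi_1(c)\phi_1(d)$ for all $(c,d)\in\mtop_H$. The base case is clear from $\phi_1(1_H)=1_A$. For the inductive step, I would expand $\phi_1(cd)$ via the recursion (\mref{eq:LPhi-}) using the \loc bialgebra identity $\Delta(cd)=\Delta(c)\Delta(d)$ from Definition~\mref{defn:locbialgebra} (valid because $c\top d$), regroup terms by the \loc multiplicativity of $\phi$, and apply the induction hypothesis to the strictly lower-degree factors appearing in $\tilde\Delta(c), \tilde\Delta(d), \tilde\Delta(cd)$. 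The computation then collapses to an application of the \loc Rota-Baxter identity
\[
P(a)P(b) = P(P(a)\,b) + P(a\,P(b)) - P(ab) \qquad \tforall (a,b)\in\mtop_A,
\]
to rewrite $\phi_1(c)\phi_1(d)=P(\cdots)P(\cdots)$ in the form required to match the recursion for $\phi_1(cd)$; this mirrors the classical Connes--Kreimer computation but each product of values must be seen to be legal. Once $\phi_1$ is known \loc multiplicative, Proposition~\mref{prop:localityconv}.(ii) gives that $\phi_1^{\star(-1)}$ is a \loc algebra homomorphism, and since $\phi_2=\phi_1\star\phi$ is a convolution of independent \loc algebra homomorphisms (using $\phi_1\mtop_\call\phi$), Proposition~\mref{prop:localityconv}.(i) yields that $\phi_2$ is a \loc algebra homomorphism.

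The refined formulas when $A_2$ is a \loc ideal, namely $\phi_1^{\star(-1)}=\pi_1\circ\phi$ and the recursion for $\phi_2$, are immediate from item~(\mref{it:add-ideal}) of Theorem~\mref{thm:abflcoalg}, since the projection onto $A_1$ along the \loc ideal $A_2$ coincides with the \loc algebra homomorphism $P$, and the recursive formula there applies verbatim.

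The hard part will be the inductive bookkeeping in the second paragraph: at each application of $\phi$, $\phi_1$, $P$, or the product in $A$, one must certify that the pairs of arguments lie in the appropriate \loc relation. The key combinatorial input is Lemma~\mref{lem:coalgmap} (especially parts~(\mref{it:coalgmap1}) and~(\mref{it:coalgmap3})) together with the independence relations $\phi_1\mtop_\call\{\phi,\phi_2\}$ coming from Theorem~\mref{thm:abflcoalg} and the stability of independence under convolution powers (Proposition~\mref{prop:inversemap}.(\mref{it:conv3})); these guarantee that when $\Delta(c)\Delta(d)$ is expanded and $\phi_1$ or $\phi$ is applied to each tensor factor, the resulting elements of $A$ are pairwise in $\mtop_A$, so that the \loc Rota-Baxter identity is indeed applicable.
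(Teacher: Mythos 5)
Your proposal follows essentially the same route as the paper: derive existence, uniqueness, locality and the ideal-case formulas from Theorem~\ref{thm:abflcoalg}, then upgrade $\phi_1$ to a \loc algebra homomorphism by the inductive Connes--Kreimer computation with the \loc Rota--Baxter identity (the paper cites \cite[Theorem~2.4.3]{G2} for this step), and conclude for $\phi_1^{\star(-1)}$ and $\phi_2$ via Proposition~\ref{prop:localityconv}. Your plan is if anything more explicit than the paper's proof about verifying the Basic Assumption via Proposition~\ref{prop:multpi} and about the locality bookkeeping needed to legitimise each product, and obtaining $\phi_2$'s multiplicativity from $\phi_2=\phi_1\star\phi$ rather than by a second direct induction is a harmless variant.
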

\begin{rk}
Theorem~\mref{thm:abflhopf} provides an answer to Problem~\mref{pr:lphopf}: the \loc multiplicativity of the renormalised map follows from that of the initial map under the assumption of the theorem.
\end{rk}

\begin {proof}
All the statements follows from Theorem~\mref{thm:abflcoalg} except the claims that $\phi_i, i=1,2$ and $\phi_1^{\star(-1)}$ are \loc algebra homomorphisms.

For $c\mtop_H d$, by Lemma~\mref{lem:conil}.(\mref{it:conil0}), we can write
$$\Delta (c)=c\ot {1_H} +{1_H}\ot c+\sum _{(c)}c'\ot c^{\prime \prime}, \quad
\Delta (d)=d\ot {1_H} +{1_H}\ot d+\sum _{(d)}d'\ot d^{\prime \prime}
$$
with
$(c', c'', d', d'')\in H^{_\top 4}.
$
Using the \loc Rota-Baxter property of $P$, by a similar argument as in the non-\loc case~\cite[Theorem~2.4.3]{G2}, we can prove  that $\phi _1$ and $\phi _2$ are homomorphisms of \loc algebras. By Proposition~\mref {prop:localityconv}, $ \phi_1^{\star(-1)}$ is
a homomorphism of \loc algebras.
\end{proof}

Applying Theorem \mref{thm:abflhopf} to $(A,\mtop_A, \cdot) =\left(\calm _\Q, \perp^Q,\cdot\right)$  yields the following result.

\begin{coro} Let $(H,\mtop_H)$   be a connected \loc Hopf algebra.
	Let
	\begin{equation*}
	\phi: \left( H  , \mtop_ H  \right)\longrightarrow   \left(\calm _\Q, \perp^Q\right)
	\end{equation*}
	be a \loc  linear map  such that $\phi(1_H)=1_{\mathcal M _\Q}$.
	Let $\phi=(\phi_1^Q)^{\star (-1)}\star \phi_2^Q$ be the algebraic Birkhoff  factorisation in Eq.~\eqref{eq:lhabf} with $\phi_1^Q(1_H)=\phi_2^Q(1_H)=1_{\mathcal M_\Q}$. Then
	\begin{enumerate}
\item
$\pi_1^Q \phi$ is a \loc linear map;
\item $(\phi_1^Q)^{\star (-1)}=\pi_1^Q  \, \phi$  so that
\begin{equation} \phi=(\pi_1^Q\,  \phi)\star \phi_2^Q;
\mlabel{eq:BHlocalpi}
\end{equation}
\item  the maps $\phi_1^Q$, $\phi_2^Q$ are \loc linear maps  and $\phi_1^Q\mtop_\call \phi$,  $\pi_1^Q\, \phi\mtop_\call \phi_2^Q$;
\item assuming in addition that  $\phi$ is a \loc algebra homomorphism, then the maps
    $\pi_1^Q\phi, \phi_1^Q$ and $\phi_2^Q$ are \loc algebra homomorphisms.
		\end{enumerate}
\mlabel{co:abflhopf}
\end{coro}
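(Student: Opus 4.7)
The plan is to specialise Theorems~\mref{thm:abflcoalg} and~\mref{thm:abflhopf} to the concrete commutative \loc algebra $(\calm_\Q,\perp^Q,\cdot)$ equipped with the direct sum decomposition $\calm_\Q=\calm_{\Q,+}\oplus\calm_{\Q,-}^Q$ of Eq.~\eqref{eq:merodecomp}, under the identification $A_1=\calm_{\Q,+}$ (so that $\pi_1^Q=\pi_+^Q$) and $A_2=\calm_{\Q,-}^Q$ (so that $\pi_2^Q=\pi_-^Q$). The first task is to verify the Basic Assumption of Theorem~\mref{thm:abflcoalg}: that $\pi_1^Q$ and $\pi_2^Q$ are independent \loc linear maps and that $1_{\calm_\Q}\in\calm_{\Q,+}$. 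Independence and \loc linearity of the two projections follow from Corollary~\mref{co:merorbo}, which exhibits $\pi_-^Q$ as a \loc Rota-Baxter operator of weight $-1$, combined with the equivalence (\mref{it:irbo1})$\Leftrightarrow$(\mref{it:irbo2}) of Proposition~\mref{prop:multpi}; the unit condition is obvious.

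Items (1)--(3) would then follow by invoking Theorem~\mref{thm:abflcoalg} applied to $\phi$. For (1), the projection $\pi_1^Q=\pi_+^Q$ is a \loc algebra homomorphism by Proposition~\mref{pp:merodecomp} and in particular \loc linear, so the composite $\pi_1^Q\,\phi$ is \loc linear. Item (3) is read off directly from the independence clauses $\phi_1\mtop_\call\{\phi,\phi_2\}$ and $\phi_1^{\star(-1)}\mtop_\call\{\phi_1,\phi_2\}$ of Theorem~\mref{thm:abflcoalg}. For (2), both refinements of that theorem apply: by Proposition~\mref{pp:merodecomp}, $\calm_{\Q,+}$ is a sub-\loc algebra (triggering (\mref{it:add-subal})) and $\calm_{\Q,-}^Q$ is a \loc ideal (triggering (\mref{it:add-ideal})), whence $(\phi_1^Q)^{\star(-1)}=\pi_1^Q\,\phi$; substituting this into the factorisation~\eqref{eq:lcabf} yields~\eqref{eq:BHlocalpi}.

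For (4) I would apply Theorem~\mref{thm:abflhopf} with \loc Rota-Baxter operator $P=\pi_+^Q$. That $\pi_+^Q$ is a \loc Rota-Baxter operator of weight $-1$ follows from Corollary~\mref{co:merorbo} together with Remark~\mref{rk:LRB}: if $\pi_-^Q$ is \loc Rota-Baxter of weight $-1$, then so is $-(-1)-\pi_-^Q=\pi_+^Q$. With this choice $A_1=P(A)=\calm_{\Q,+}$ and $A_2=(\Id-P)(A)=\calm_{\Q,-}^Q$, and Theorem~\mref{thm:abflhopf} directly yields that $\phi_1^Q$ and $\phi_2^Q$ are \loc algebra homomorphisms. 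Finally $\pi_1^Q\,\phi$ is a composite of two \loc algebra homomorphisms (the first by Proposition~\mref{pp:merodecomp}, the second by hypothesis) and is therefore itself a \loc algebra homomorphism, consistently with (2). The only mildly subtle point is the book-keeping needed to align the convention of Theorem~\mref{thm:abflhopf} (where one operator $P$ is singled out) with the symmetric decomposition language of the corollary; I do not anticipate any genuine obstacle beyond this, as the corollary is essentially a translation of the general theorems into the concrete setting of $(\calm_\Q,\pi_-^Q)$.
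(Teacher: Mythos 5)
Your proposal is correct and follows essentially the same route as the paper, which simply applies Theorems~\mref{thm:abflcoalg} and~\mref{thm:abflhopf} to $(\calm_\Q,\perp^Q)$ with the decomposition of Eq.~\eqref{eq:merodecomp}, using Proposition~\mref{pp:merodecomp} and Corollary~\mref{co:merorbo} to check the hypotheses. The one point the paper singles out --- that $\pi_1^Q\,\phi\mtop_\call\phi_2^Q$ follows from $\phi_1\mtop_\call\phi_2$ and the fact that the convolution inverse preserves locality --- is covered by your reading of the independence clause $\phi_1^{\star(-1)}\mtop_\call\{\phi_1,\phi_2\}$ together with the identification $\phi_1^{\star(-1)}=\pi_1^Q\,\phi$.
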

\begin{proof}  The proof is straightforward; let us nevertheless mention that $\pi_1^Q\, \phi\mtop_\call \phi_2^Q$ follows from $\phi_1\mtop_\call \phi_2$ in Eq.~(\mref{eq:lhabf}) combined with the fact that the convolution inverse preserves locality.
\end{proof}

We end the paper by applying Corollary~\mref{co:abflhopf} to the \loc algebra homomorphism $S:\Q \bfc \to \calm_\Q$. The classical algebraic Birkhoff factorisation does not apply while Theorem~\mref{thm:abflcoalg} does, telling us that
$$S_1^{\star (-1)}: \Q \bfc \to \calm _{\Q,+}\ \text{and }\ S_2:\Q \bfc \to \calm _{\Q,-}$$
are \loc linear maps. From Corollary~\mref{co:abflhopf}, we conclude that the two maps are also \loc algebra homomorphisms and thus are multiplicative for orthogonal pairs of lattice cones. Noting further~\mcite{GPZ2} that $S_2=I$, the exponential integral and $S_1^{\star (-1)}=\mu$, the interpolation factor in the Euler-Maclaurin formula
$ S=\mu\star I,$
we obtain the following consequence of Theorem~\mref{thm:abflhopf}, Proposition~\mref{pp:merodecomp} and Corollary~\mref{co:abflhopf}.

\begin{coro}
For any orthogonal pair of lattice cones $(C,\Lambda _C)$ and $(D,\Lambda _D)$, we have
$$ \mu((C,\Lambda _C)\cdot (D,\Lambda _D))=\mu(C,\Lambda _C)\mu (D,\Lambda _D), \quad I((C,\Lambda _C)\cdot (D,\Lambda _D))=I(C,\Lambda _C)I(D,\Lambda _D).$$
Further, $\mu=\pi_+^Q S$.
\mlabel{co:muiloc}
\end{coro}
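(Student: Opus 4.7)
The plan is to apply Theorem~\ref{thm:abflhopf} (and its refinement in Corollary~\ref{co:abflhopf}) to the \loc algebra homomorphism $S:\Q\bfc \to \calm_\Q$, and then identify the resulting Birkhoff factors with $\mu$ and $I$ by invoking the Euler--Maclaurin identity $S = \mu\star I$ and the uniqueness statement of the factorisation.

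First, I would verify that all the structural hypotheses are in place. By Corollary~\ref{co:conehopf}, $(\Q\bfc,\perp^Q)$ is a connected \loc Hopf algebra; by Corollary~\ref{co:merorbo}, $(\calm_\Q,\perp^Q,\pi_-^Q)$ is a commutative \loc Rota--Baxter algebra of weight $-1$ with $\pi_-^Q$ idempotent; and by the linear extension in Eq.~(\ref{eq:islin}) together with Proposition~\ref{prop:cones}, the map $S$ is a \loc algebra homomorphism sending $(\{0\},\{0\})$ to $1$. Theorem~\ref{thm:abflhopf} then produces unique independent \loc algebra homomorphisms $S_1,S_2$ (with $S_1^{\star(-1)}$ also a \loc algebra homomorphism) such that
\[
S = S_1^{\star(-1)}\star S_2, \qquad S_1(\ker\vep)\subseteq \calm_{\Q,-}^Q,\ S_2(\ker\vep)\subseteq \calm_{\Q,+}.
\]

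Next, I would identify the factors. The Euler--Maclaurin formula $S=\mu\star I$ from~\cite{GPZ2} provides a convolution factorisation with $\mu:\Q\bfc \to \calm_{\Q,+}$ (holomorphic part) and $I:\Q\bfc\to \calm_\Q$ which, up to the $\Q J$ summand, lands in $\calm_{\Q,-}^Q$ since each $I(C,\Lambda_C)$ is a product of simple poles. Because $\mu$ and $I$ are linear maps sending $(\{0\},\{0\})$ to $1$ and satisfying the range conditions of the Birkhoff decomposition, the uniqueness part of Theorem~\ref{thm:abflcoalg} (which holds at the level of \loc linear maps and is already valid without the algebra-homomorphism strengthening) forces
\[
S_1^{\star(-1)} = \mu, \qquad S_2 = I.
\]

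Finally, I would exploit the \loc ideal property of $\calm_{\Q,-}^Q$ established in Proposition~\ref{pp:merodecomp}. Corollary~\ref{co:abflhopf}(iv) then applies to $S$, yielding that $\pi_+^Q S$, $S_1^{\star(-1)} = \mu$ and $S_2 = I$ are \loc algebra homomorphisms from $\Q\bfc$ to $\calm_\Q$. In particular, for any orthogonal pair of lattice cones $(C,\Lambda_C)\perp^Q (D,\Lambda_D)$, \loc multiplicativity gives
\[
\mu((C,\Lambda_C)\cdot(D,\Lambda_D)) = \mu(C,\Lambda_C)\,\mu(D,\Lambda_D),
\]
and similarly for $I$. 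The identity $\mu = \pi_+^Q S$ is Eq.~(\ref{eq:BHlocalpi}) of Corollary~\ref{co:abflhopf} under the identification $S_1^{\star(-1)}=\mu$ already obtained.

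The only delicate point is the identification step: one must check that $I(\ker\vep)\subseteq \calm_{\Q,-}^Q$ and $\mu(\Q\bfc)\subseteq \calm_{\Q,+}$, so that the uniqueness of the \loc Birkhoff decomposition can be triggered. For smooth cones this is immediate from the explicit formulas in the proof of Proposition~\ref{pp:conemeromap}, and the general case follows by subdivision, mirroring the argument already used there. Everything else is a direct transcription of Theorem~\ref{thm:abflhopf} and Corollary~\ref{co:abflhopf} to the example $H=\Q\bfc$, $A=\calm_\Q$, $\phi=S$.
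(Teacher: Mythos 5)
Your proposal follows essentially the same route as the paper: apply Corollary~\ref{co:abflhopf} (via Theorem~\ref{thm:abflhopf}) to the \loc algebra homomorphism $S$, identify the Birkhoff factors with $\mu$ and $I$ through the Euler--Maclaurin identity $S=\mu\star I$, and read off the \loc multiplicativity together with $\mu=\pi_+^{Q}S$ from Eq.~\eqref{eq:BHlocalpi}. The paper simply cites \cite{GPZ2} for the identifications $S_1^{\star(-1)}=\mu$ and $S_2=I$, whereas you rederive them from the uniqueness clause after checking the range conditions; that is a legitimate and slightly more self-contained variant. The one point to correct is your assignment of the summands: you write $S_1(\ker\vep)\subseteq\calm_{\Q,-}^{Q}$ and $S_2(\ker\vep)\subseteq\calm_{\Q,+}$, which contradicts your own subsequent identification of $S_1^{\star(-1)}=\mu$ as holomorphic and of $S_2=I$ as polar. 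For items~(\emph{i})--(\emph{ii}) of Theorem~\ref{thm:abflcoalg} to apply one must take $P=\pi_+^{Q}$ (also a \loc Rota--Baxter operator by Remark~\ref{rk:LRB}), so that $A_1=\calm_{\Q,+}$ is the sub-\loc algebra and $A_2=\calm_{\Q,-}^{Q}$ is the \loc ideal; only then does $\phi_1^{\star(-1)}=\pi_1\phi$ specialise to $\mu=\pi_+^{Q}S$. With that swap corrected, your argument goes through and matches the paper's.
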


\noindent
{\bf Acknowledgements.} The authors acknowledge supports from the Natural Science Foundation of China (Bin Zhang: Grants No. 11071176, 11221101 and 11521061, Li Guo: Grant No. 11371178) and  the German Research Foundation (Sylvie Paycha and Pierre Clavier: DFG grant PA 1686/6-1). They are grateful to the hospitalities of Sichuan University and University of Potsdam where parts of the work were completed.

\end{document}